\documentclass{article}
\usepackage{amsmath,amsthm,amsfonts,amssymb}
\usepackage[utf8]{inputenc}
\usepackage[english]{babel}
\usepackage{comment}
\usepackage{url}
\usepackage{graphicx, subfigure}
\usepackage[pdftex]{hyperref}
\usepackage[usenames,dvipsnames]{color}
\usepackage{longtable}
\usepackage{csquotes}
\usepackage{float}
\usepackage{tikz}
\usepackage{accents}
\usepackage{xcolor}
\usepackage{mathtools}
\usepackage{multicol}
\usepackage{caption, subcaption}
\usepackage{xcolor}
\usetikzlibrary{decorations.pathmorphing, shadows, decorations.markings}
\usepackage{graphicx}
\newtheorem{theorem}{Theorem}[section]

\newtheorem{proposition}[theorem]{Proposition}

\newtheorem{lemma}[theorem]{Lemma}
\newtheorem{definition}[theorem]{Definition}

\newtheorem{remark}[theorem]{Remark}
\usepackage{xcolor}

\renewcommand{\Im}{\mathrm{Im}}
\renewcommand{\Re}{\mathrm{Re}}
\newcommand\at[2]{ #1  \Bigg |_{#2}}

\newcommand{\innerproduct}[2]{\langle #1, #2 \rangle}

\title{Reconstructing resonant phase oscillator interactions from noisy time series}
\author{Bengi D\"onmez\thanks{
 Department of Mathematics, Vrije Universiteit Amsterdam, Amsterdam, the Netherlands, {\tt b.donmez@vu.nl} 
 }, Bob Rink\thanks{
 Department of Mathematics, Vrije Universiteit Amsterdam, Amsterdam, the Netherlands, {\tt b.w.rink@vu.nl}
}}

\date{\today}

\begin{document}

\maketitle
\abstract{We present a novel method for reconstructing networks of coupled phase oscillators from noisy time series. Noise and uncertainty can make it hard or impossible to distinguish different oscillator networks based on observed dynamical behavior. Thus, our method does not aim to determine exact phase equations for the oscillators, but instead recovers their first and second order resonant normal form. This normal form contains crucial information on the underlying network and yields accurate approximations of the dynamics. We provide rigorous estimates on the accuracy of the reconstructed normal form, and we illustrate the method with numerical examples.}
\section{Introduction}\label{sec:intro}
Networks of coupled oscillators arise abundantly in physics, biology, and engineering, and they are often modeled by coupled differential equations. The goal of network reconstruction methods is to infer the coupling topology of such networks from observations of solutions to these differential equations. Many of these methods work by first extracting phase signals from oscillatory time series generated by the network, and subsequently fitting a phase vector field to these phase signals. 
Unfortunately, the resulting reconstructed phase equations typically do not allow us to infer the full coupling topology of the original network. One reason is that distinct oscillator networks may generate very similar dynamics, which may become indistinguishable when the observations of the systems are limited, noisy, or ill-conditioned. However, even in the absence of noise and uncertainty in the observations, the relation between the original oscillator network and the equations of motion for its phases is ambiguous, due to the intrinsic non-uniqueness of phase variables and phase extraction methods. As a result of this non-uniqueness, the same oscillator network may faithfully be described by multiple distinct phase equations, while the same phase equations may govern a variety of oscillator networks.  This ambiguity forms a fundamental obstruction to the reliable reconstruction of an oscillator network from  observations of its dynamics.

In this paper, we address this problem by presenting a new reconstruction method 
for networks of weakly coupled phase oscillators. Rather than attempting to reconstruct the exact equations of motion of such networks from observations, our method reconstructs their so-called resonant normal form. Networks with the same normal form generate nearly identical dynamics over long timescales, and are thus indistinguishable in the presence of noise or uncertainty. On the contrary, networks with different normal forms typically display quantitatively different dynamical behavior.  
 This makes it more feasible to reconstruct the normal form than the true equations of motion, and it allows us to quantify the error in the reconstructed normal form.
The normal form also provides reliable predictions of the long-term dynamics of the true network, and  it encodes structural properties of the true network, such as symmetries.
\subsubsection*{Problem setting}
Concretely, we consider interacting phase oscillator systems of the form 
\begin{align}\label{equationsofmotionintro}
\dot \phi_j = \omega_j + \varepsilon F_j^{(1)}(\phi)  +   \varepsilon^{2} F_j^{(2)}(\phi) + \varepsilon^3 R_j^{(3)}(\phi,\varepsilon) \  \mbox{for} \  1 \leq j \leq n\, ,
\end{align} 
where $\phi_j\in \mathbb{T}:=\mathbb{R}/2\pi\mathbb{Z}$ is a phase variable, $\varepsilon\geq 0$ is a small coupling parameter, and the $F_j^{(l)}$ smooth interaction functions  with  Fourier expansions 
\begin{align}\label{eq:Fexpansion}
\ F_j^{(l)}(\phi) = \sum_{k\in \mathbb{Z}^n} A_{j, k}^{(l)} e^{i\langle k, \phi \rangle }\ \mbox{with}\ \overline{A_{j,k}^{(l)}} = A_{j,-k}^{(l)}. 
\end{align}
 Our goal is to reconstruct (some of) the Fourier coefficients $A_{j,k}^{(l)}$ from finite time observations $\Phi^{(m)}(t)$ (with $1\leq m \leq M$ and $0\leq t\leq T$) of true solutions $\phi^{(m)}(t)$ to \eqref{equationsofmotionintro}. We assume that the uncertainty in these observations is small and bounded, that is,
$$| \phi^{(m)}_j(t) - \Phi^{(m)}_j(t)| \leq L_j\varepsilon \ \mbox{for all} \ 1\leq j\leq n, 1\leq m\leq M \ \mbox{and} \ 0 \leq t \leq T\, , $$
and for some uniform constants $L_1, \ldots, L_n>0$.   
We will argue in Section \ref{sec:framework} that this is the natural assumption to make when  \eqref{equationsofmotionintro} describes the dynamics on an invariant torus (or ``phase reduction'') of a higher-dimensional coupled oscillator system, and when the phase signals $\Phi^{(m)}(t)$ are extracted from observations of these high-dimensional oscillators. 

Network reconstruction methods often work by fitting a nonlinear phase vector field of the form \eqref{equationsofmotionintro} to the observations $\Phi^{(m)}(t)$, for instance, in the form of a finite Fourier series 
\begin{equation}\label{fittedvf}
\dot \phi_j \approx \omega_j + \varepsilon \sum_{k\in \mathcal{K}}\hat A_{j,k}e^{i\langle k,\phi\rangle}\, ,
\end{equation} 
with $\mathcal{K}\subset \mathbb{Z}^n$ a (large) finite ``library'' of pre-selected Fourier labels.  The simplest option is perhaps to choose the $\hat A_{j,k}$ to be minimizers of the sum of squares 
\begin{equation}\label{penalization}
B\mapsto \sum_{j=1}^M \left| \dot \Phi^{(m)}(0) - \omega_j - \varepsilon \sum_{k\in \mathcal{K}} B_{j,k} e^{i\langle k,\Phi^{(m)}(0)\rangle} \right|^2 \, .
\end{equation}
An alternative to fitting a phase vector field to  observed phase velocities as described above, is to fit a time-$T$ map to the finite-time phase drifts $\Phi^{(m)}(T)-\Phi^{(m)}(0)-\omega_jT$. This idea appears  for instance in \cite{matsuki_network_2024, rosenblum_identification_2002}, and it avoids the use of instantaneous phase velocities -- which can be hard to determine accurately in practice. 
We present a related approach in this paper, which works by fitting ``resonant'' time-$T$ maps. Specifically, we present  a first order and a second order resonant reconstruction method. 
\subsubsection*{First order resonant reconstruction}
Our first order method relies on Lemma \ref{lem:approxlemma} below, which states that, under  mild conditions on \eqref{equationsofmotionintro} and \eqref{eq:Fexpansion}, 
any exact solution to \eqref{equationsofmotionintro} satisfies 
\begin{align}\label{firstorderapproxresonant} 
 \phi_j(T) \! - \! \phi_j(0) \!-\! \omega_j T = \varepsilon T \!\! \sum_{\langle \omega,k\rangle = 0} \!\!  A_{j,k}^{(1)} e^{i\langle k, \phi(0)\rangle}  + \mathcal{O}(\varepsilon)\, \ \mbox{when}\  T \sim \frac{1}{\sqrt{\varepsilon}}.
 \end{align}
Thus, apart from a small correction of the order $\varepsilon$,  over timescales of the order $T\sim  \frac{1}{\sqrt{\varepsilon}}$, every solution to \eqref{equationsofmotionintro} is governed by a slow linear drift of the order $\varepsilon T \sim \sqrt{\varepsilon}$ away from the unperturbed motion  $\phi_j(T)=\phi_j(0)+\omega_jT$. It is important to remark that this linear phase drift is determined by so-called {\it resonant} terms only, that is, the terms $A_{j,k}^{(1)} e^{i\langle k, \phi\rangle}$ in \eqref{eq:Fexpansion} with $\langle \omega, k\rangle =0$. 
Nonresonant terms (those for which $\langle \omega, k\rangle \neq 0$) do not contribute to the linear phase drift.  Their contribution to the dynamics is instead contained in the correction term of the order $\varepsilon$. 

Inspired by this observation, our first order resonant reconstruction method first  selects a finite library 
$$\mathcal{K} \subset \{ k\in \mathbb{Z}^n \, |\, \langle \omega, k\rangle = 0 \}$$ 
of resonant Fourier labels, fixes a time $T$ of the order $\frac{1}{\sqrt{\varepsilon}}$, and then chooses the coefficients $\hat A_{j,k}$  in \eqref{fittedvf} to be minimizers of the sum of squares
$$B\mapsto \sum_{m=1}^M \left|\Phi_j^{(m)}(T) - \Phi_j^{(m)}(0) - \omega_j T  - \varepsilon T\sum_{k\in \mathcal{K}} B_{j,k} e^{i\langle k,\Phi^{(m)}(0) \rangle }\right|^2 \, .
$$ 
This procedure yields estimates for the first order {\it resonant} Fourier coefficients in \eqref{equationsofmotionintro}, and it does not attempt to estimate the nonresonant coefficients. The reason is that the contribution of the nonresonant terms to the dynamics is of the order $\varepsilon$. It is therefore of the same order of magnitude as the uncertainty in the observed phase drifts  
$$| ( \Phi_j^{(m)}(T)-\Phi_j^{(m)}(0)-\omega_jT) - ( \phi_j^{(m)}(T)-\phi_j^{(m)}(0)-\omega_jT) | \sim  \varepsilon\, .$$ 
 The nonresonant Fourier coefficients in \eqref{eq:Fexpansion} therefore cannot realistically be inferred from the observations $\Phi^{(m)}(t)$,  and any attempt to estimate them would  lead to an  underdetermined or ill-conditioned regression problem. 
On the contrary, the estimators $\hat A_{j,k}$ provided by our resonant reconstruction method can often be guaranteed to lie close to the true first order resonant Fourier coefficients $A_{j,k}^{(1)}$.  In particular, Theorem \ref{general_thm} roughly speaking states that  
$$A^{(1)}_{j,k} - \hat A_{j,k} \sim  \sqrt{\varepsilon}\ \mbox{for all} \ 1\leq j\leq n \ \mbox{and} \ k\in\mathcal{K}\, ,$$
under certain verifiable conditions on the observations $\Phi^{(m)}(t)$.
The presence of bounded noise makes it impossible to derive similar rigorous bounds for estimators of nonresonant Fourier coefficients. 

To illustrate the strength of our first order resonant reconstruction method, we refer to Figures \ref{fig:no_noise50}, \ref{fig:weak_noise50} and \ref{fig:strong_noise50}, in which we compare the performance of our resonant method with a reconstruction method that fits a time $T$-map with both resonant and nonresonant terms to observations of finite time phase drifts. The figures show that resonant reconstruction is consistently reliable in the presence of observational noise, while the more general reconstruction method starts to fail as the noise level is increased.

\begin{figure}[t]
 \centering
   \includegraphics[width=.99\linewidth]{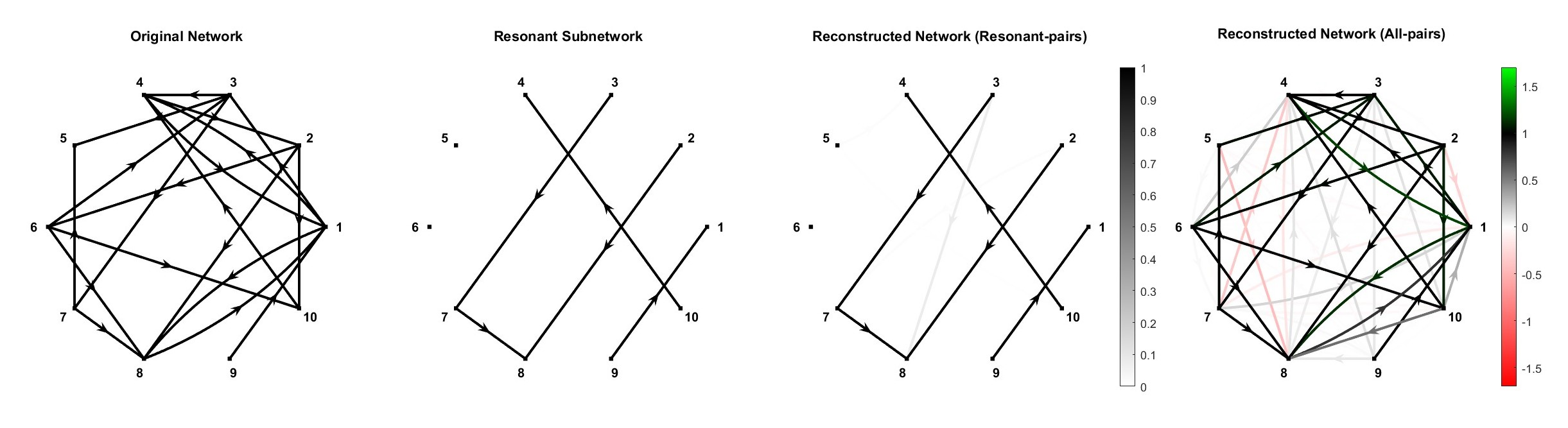}
   \vspace{-.5cm}
\caption{\footnotesize{Comparison of resonant and nonresonant network reconstruction methods in the absence of observational noise. We consider a directed Erdős-Rényi graph with $n=10$ nodes, where a directed edge probability $p=\frac{1}{5}$ was used to generate the graph. Phase dynamics is defined for each node of this graph by randomly choosing an intrinsic frequency $\omega_i\in \{1,2,3\}$ for each node, fixing the coupling strength $\varepsilon = 0.01$, and coupling the nodes by sinusoidal interaction functions of the form $\varepsilon a_{ij} \sin(\phi_i-\phi_j)$, where $a_{ij}\in \{0,1\}$ are the coefficients of the adjacency matrix of the graph. The resulting phase dynamics was integrated using \texttt{ode89} in \textsc{Matlab} over the time interval $[0,T] = [0, \varepsilon^{-\frac{1}{2}}] = [0,10]$ for a total of $M=50$ initial conditions, chosen independently from the uniform distribution on $[0,2\pi]^{10}$.  
Time-$T$ maps were then fitted to the phase drifts observed in the resulting time series, using two different  libraries: a resonant library containing the functions  $\sin(\phi_i-\phi_j)$ only when $\omega_i=\omega_j$ and $i\neq j$, and an all-pairs library containing $\sin(\phi_i-\phi_j)$ for all $i\neq j$. From left to right: original network with $a_{ij} \in \{0,1\}$; resonant subnetwork consisting of the connections in the original network with $\omega_i=\omega_j$; reconstruction using the resonant library; reconstruction using the all-pairs library. In this noise-free setting, the all-pairs reconstruction provides an excellent approximation of the original network: the two networks are visually almost indistinguishable. }}
   \label{fig:no_noise50}
 \end{figure}
 \begin{figure}[h]
 \centering
   \includegraphics[width=.99\linewidth]{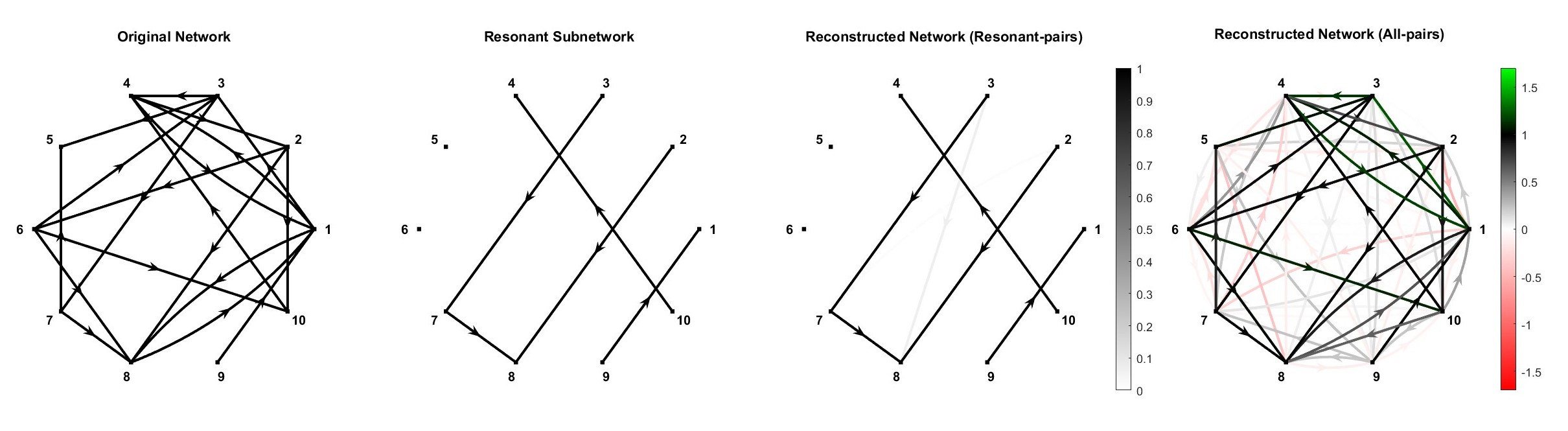}
    \vspace{-.5cm}
\caption{\footnotesize{Comparison of resonant and nonresonant network reconstruction methods with small observational noise. The original network is the same as in Figure \ref{fig:no_noise50} but the integrated time series were perturbed by i.i.d.\ random variables uniformly distributed on $[-\varepsilon,\varepsilon]$ (mean $0$, variance $\varepsilon^{2}/3$) before the reconstruction step. The resonant reconstruction remains a good approximation of the resonant subnetwork, whereas the all-pairs reconstruction already shows some discrepancies from the original network.}}
   \label{fig:weak_noise50}
 \end{figure}
 \begin{figure}[h]
 \centering
   \includegraphics[width=.99\linewidth]{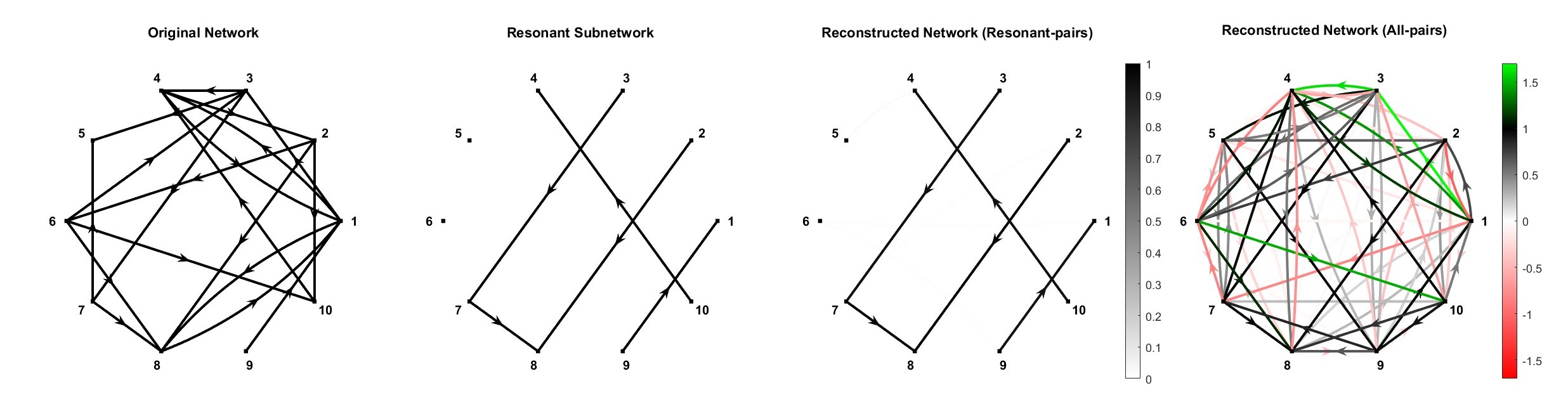}
    \vspace{-.5cm}
\caption{\footnotesize{Comparison of resonant and nonresonant network reconstruction methods with stronger observational noise. The original network is the same as in Figure \ref{fig:no_noise50} but the integrated time series were perturbed by i.i.d.\ random variables uniformly distributed on $[-4\varepsilon,4\varepsilon]$ (mean $0$, variance $16\varepsilon^{2}/3$) before the reconstruction step. In this case, the all-pairs reconstruction is quite different from the original network, while the resonant reconstruction still captures the resonant subnetwork in a robust way.}}
   \label{fig:strong_noise50}
 \end{figure} 
\subsubsection*{Second order resonant reconstruction}
The second reconstruction method that we present can be used when \eqref{equationsofmotionintro} does not contain any first order resonant terms, that is, when $A_{j,k}^{(1)}=0$ if $\langle \omega, k\rangle =0$. In this case, the first order resonant terms do not provide any information on the network topology. We will nevertheless show that 
\begin{align}\label{secondtorderapproxresonant}
\nonumber 
 \phi_j(T) \!-\! \phi_j(0) \!-\! \omega_j T = \varepsilon^2 T \!\! \sum_{\langle \omega,k\rangle = 0} \!\! C_{j,k}^{(2)} e^{i\langle k, \phi(0)\rangle}  + \mathcal{O}( \varepsilon)\, \ \mbox{when}\  T \sim \varepsilon^{-\frac{3}{2}}\, ,
 \end{align}
where the $C_{j,k}^{(2)}$ are the Fourier coefficients of the so-called {\it second order resonant normal form} of \eqref{equationsofmotionintro}. It follows that, up to a small correction of the order $\sim \varepsilon$, over timescales of the order $T\sim\varepsilon^{-\frac{3}{2}}$, solutions to \eqref{equationsofmotionintro} are once more governed by a slow linear drift of the order $\varepsilon^2 T \sim \varepsilon^{\frac{1}{2}}$. This drift is again determined only by resonant terms, while the impact of nonresonant terms is too small to be distinguished from observational noise. Similarly to the first order method described above, our second order resonant reconstruction method fits a resonant time-$T$ map to observations of phase drifts over a timescale $T\sim\varepsilon^{-\frac{3}{2}}$,  by computing least squares estimators $\hat C_{j,k}$ of the actual coefficients  $C_{j,k}^{(2)}$ of the second order normal form. We are also able to formulate verifiable conditions under which the reconstructed coefficients provide a reliable approximation of the true second order normal form coefficients, see Theorem \ref{thm:2ndordermainthm}, and we illustrate the method with a numerical example.

\subsubsection*{Discussion}
 A key challenge in data-driven model reconstruction is that the observations of the underlying dynamical system are often limited or noisy. This may lead to an ill-conditioned regression problem, overfitting, and non-uniqueness in the inferred dynamical equations. It is then common to select a reconstruction using statistical regularization, such as LASSO or Basis Pursuit. For example, the Sparse Identification of Nonlinear Dynamics (SINDy) method  \cite{brunton_discovering_2016} in principle allows one to fit a large and possibly redundant function library (e.g., consisting of Fourier basis functions) to the observations, but selects a {\it sparse} reconstruction by penalizing for the number of nonzero basis functions that this reconstruction contains (e.g., the number of nonzero Fourier coefficients.) This method 
has been used successfully for reconstructing biological  \cite{mangan_inferring_2016} and chemical \cite{nijholt_emergent_2022} oscillator networks, and to explain observations of brain dynamics \cite{delabays_hypergraph_2025}.  The Algorithm for Revealing Network Interactions (ARNI) \cite{casadiego_model-free_2017}
 works in a similar manner. A related algorithm to stably reconstruct sparse networks from limited data was recently developed in   \cite{novaes_recovering_2021, pereira_robust_2025}.

It was observed in \cite{nijholt_emergent_2022} that
SINDy sometimes automatically selects a resonant normal form as reconstruction of a coupled oscillator network. The authors of \cite{nijholt_emergent_2022} explain this by showing that
the normal form is often the sparsest ODE that generates the same dynamics as the true network up to a certain error margin. In this paper, we adopt a contrasting approach, by imposing in advance that our reconstruction consists only of resonant terms. We thereby guarantee the sparsity of the reconstruction a priori, which allows us to avoid statistical regularization altogether.
  
An obvious disadvantage of resonant reconstruction is that only some of the properties of a phase oscillator network are encoded in its 
 normal form. These properties may not include all the strengths of all the couplings between the nodes in the network. For example, the first order normal form contains only the resonant first order interactions. The relation between a phase oscillator network and its second order normal form is much more intricate, as we show in Appendix \ref{app:2ndorderNF}. 
 
 \subsubsection*{Organization of the paper}

The remainder of this paper is organized as follows. In Section \ref{sec:framework} we show why the assumption of bounded noise is natural in the context of reconstructing phase reductions. In Section \ref{sec:1method} we present our first order reconstruction method. We prove the validity of this method in Sections \ref{sec:approxdriftdyn} and \ref{sec:validitysection} and investigate its performance through an example in Section \ref{sec:examplefirstorder}.  In Section \ref{sec:2ndordersection} we present the second order reconstruction method. Its validity is proved in Section  \ref{sec:validity2norder}, and an application of the method is presented in Section \ref{sec:2ndorderexample}. Several technical results on resonant normal forms are included in the appendices.

\section{Phase reductions and bounded noise} \label{sec:framework}
Before presenting and analyzing our resonant reconstruction method in detail, we will first show in this section that small bounded uncertainties in the phase signals arise naturally when attempting to reconstruct phase equations for high-dimensional nonlinear oscillator systems. 
To see how phase equations are typically derived in this context, 
let us consider a network of weakly coupled differential equations of the form
\begin{equation} \label{eq:highdimoscillators}
\dot x_j = f_j(x_j) + \varepsilon g_j(x_1, \ldots, x_n) \ \mbox{with} \ x_j\in \mathbb{R}^{d_j}\ \mbox{and} \ 1\leq j \leq n\, ,
\end{equation}
in which each ODE $\dot x_j=f_j(x_j)$ in isolation possesses a stable limit cycle $t\mapsto \gamma_j(t)$ of minimal period $T_j$ and frequency $\omega_j:=2\pi/T_j$. In the uncoupled situation (i.e., when $\varepsilon=0$), the product of these limit cycles defines an $n$-dimensional normally hyperbolic invariant torus $\mathbf{T}^{(0)}$ in the phase space $\mathbb{R}^N$ (where $N:=d_1+\ldots+d_n$) of \eqref{eq:highdimoscillators}. One can assign to each point $x\in \mathbf{T}^{(0)}$ a unique vector of phases 
$$\phi = (\phi_1, \ldots, \phi_n) \in \mathbb{T}^n = (\mathbb{R}/2\pi\mathbb{Z})^n$$ 
defined by the equality 
$$x=E^{(0)}(\phi) := (\gamma_1(\phi_1/\omega_1), \ldots, \gamma_n(\phi_n/\omega_n)) \in \mathbf{T}^{(0)}\, ,$$
 in terms of which the dynamics on the torus
 takes the  form 
 $$\dot \phi_j = \omega_j\ \mbox{for} \ 1\leq j \leq n\,  .$$ 
 The flow on $\mathbf{T}^{(0)}$ is thus quasi-periodic. 
F\'enichel's theorem \cite{fenichel_persistence_1971} states that \eqref{eq:highdimoscillators} will admit an invariant torus $\mathbf{T}^{(\varepsilon)}$  close to $\mathbf{T}^{(0)}$ for small $\varepsilon\neq 0$. 
Any parametrization $E^{(\varepsilon)}: \mathbb{T}^n \to \mathbb{R}^N$ of $\mathbf{T}^{(\varepsilon)}$ of the form 
\begin{equation}
\label{eq:Edef}
E^{(\varepsilon)}(\phi) = E^{(0)}(\phi) 
+ \varepsilon E^{(1)}(\phi) +  \varepsilon^2 E^{(2)}(\phi) + \ldots
\end{equation}
 again associates a unique phase vector $\phi\in \mathbb{T}^n$ to each point $x\in \mathbf{T}^{(\varepsilon)}$, and the ODEs governing their evolution are now of the form  
\begin{align}\label{equationsofmotion}
\dot \phi_j = \omega_j + \varepsilon F_j^{(1)}(\phi)  +   \varepsilon^{2} F_j^{(2)}(\phi) + \ldots \,  .
\end{align} 
Equation \eqref{equationsofmotion} is  called 
a {\it phase reduction} of \eqref{eq:highdimoscillators} and it describes the dynamics on $\mathbf{T}^{(\varepsilon)}$ in terms of the phase variables defined by \eqref{eq:Edef}. Phase reductions can be computed analytically from the original equations \eqref{eq:highdimoscillators} by a  variety of asymptotic methods, see for instance \cite{brown_phase_2004,nakao_phase_2016,rosenblum_numerical_2019,mau_high-order_2023,ashwin_dead_2021,leon_analytical_2023,nicks_insights_2024,bick_higher-order_2024,leon_theory_2025}. 

Because the persisting torus $\mathbf{T}^{(\varepsilon)}$ is  (locally or globally) attracting, the solutions to \eqref{eq:highdimoscillators} that are actually observed in numerical simulations or experiments will typically lie on or close to  $\mathbf{T}^{(\varepsilon)}$. As a result,  data-driven reconstruction methods for coupled oscillator systems often aim to recover the phase equations \eqref{equationsofmotion} instead of the full equations \eqref{eq:highdimoscillators}. This is usually done by taking a collection of (possibly noisy) finite time observations $X^{(m)}(t)\in \mathbb{R}^N$ (with $1\leq m \leq M$) of true solutions $x^{(m)}(t)$ to \eqref{eq:highdimoscillators} lying on or close to $\mathbf{T}^{(\varepsilon)}$. 
Each signal $X^{(m)}(t)$ is then transformed into a phase signal, i.e., a function $\Phi^{(m)}(t) \in \mathbb{R}^n$  that  satisfies
\begin{equation}
\label{eq:phaseevolution}
\Phi^{(m)}_j(T) - \Phi^{(m)}_j(0)  = \omega_j T + \mathcal{O}(\varepsilon T)\   \mbox{for}\ 1\leq j \leq n.
\end{equation}
Such phase signals can be constructed in various ways, for instance by applying delayed embedding or a Hilbert transform to the $X^{(m)}(t)$, often combined with a choice of filtering technique, see \cite{matsuki_network_2024, rosenblum_identification_2002, kralemann_reconstructing_2011, kralemann_reconstructing_2014}. The main idea is that the $\Phi^{(m)}(t)$ will be approximations of lifts of true solutions $\phi^{(m)}(t)$ of a phase reduction  \eqref{equationsofmotion} of \eqref{eq:highdimoscillators}.  
The final step of the reconstruction method  consists of fitting a nonlinear phase vector field  to the phase signals $\Phi^{(m)}(t)$ as discussed in Section \ref{sec:intro}. 

Limited, noisy or ill-conditioned observations $X^{(m)}(t)$ may have a negative effect on the quality of the reconstruction \eqref{fittedvf} of \eqref{equationsofmotion}. 
Moreover, even with perfect knowledge of the phase equations \eqref{equationsofmotion}, it can be challenging to infer information about the original high-dimensional equations \eqref{eq:highdimoscillators}. One well-known problem is that the phase equations need not reflect the exact network topology of \eqref{eq:highdimoscillators}. Specifically, it was found in~\cite{nijholt_emergent_2022, kralemann_reconstructing_2011, kralemann_reconstructing_2014, stankovski_coupling_2015, rosenblum_inferring_2023} that the phase equations \eqref{fittedvf} and \eqref{equationsofmotion} 
may display multi-body interactions, even when the original equations \eqref{eq:highdimoscillators} are governed by two-body (``dyadic'') interactions only. In~\cite{kralemann_reconstructing_2011, kralemann_reconstructing_2014, rosenblum_inferring_2023}, these ``virtual'' or ``emergent'' multi-body interactions have been explained as a result of the phase reduction process, while in~\cite{nijholt_emergent_2022} they were interpreted as resonant terms in the normal form of the phase equations. 

This paper is motivated by a more fundamental problem that complicates the interpretation of reconstructed phase equations, and which to our knowledge has not received  attention in the literature on network reconstruction: the fact that 
phase reductions are not unique, because the choice of phase variables on the invariant torus $\mathbf{T}^{(\varepsilon)}$ is not unique, see \cite{gracht_parametrisation_2023}.
 This nonuniqueness means that the extracted phase signals $\Phi^{(m)}(t)$ may not all be (approximations of) solutions of the same dynamical system. 
 
 To explain this, let us fix a choice of ``reference'' phase variables $\phi\in \mathbb{T}^n$, defined by a reference embedding $x=E^{(\varepsilon)}(\phi) \in \mathbf{T}^{(\varepsilon)}$. The time-evolution of these phase variables is then governed by an ODE of the form \eqref{equationsofmotion}.  
 In view of equation \eqref{eq:phaseevolution}, we can also assume that the observations $\Phi^{(m)}(t)$ approximate the evolution of some phase variables, say 
 \begin{align}\label{eq:1stLeq}
 |\psi^{(m)}_j(t) - \Phi^{(m)}_j(t)| \leq L_j^{(1)} \varepsilon\ \mbox{for}\ 1\leq j\leq n\ \mbox{and}\ 1\leq m \leq M\, ,
  \end{align}
where the phases $\psi\in \mathbb{T}^n$ are defined through some embedding $x=\widetilde{E}^{(\varepsilon)}(\psi)$. However, it is not necessarily the case that $\widetilde E^{(\varepsilon)} = E^{(\varepsilon)}$, and the embedding $\widetilde E^{(\varepsilon)}$ may even be different for each $1\leq m \leq M$. 
 The $\psi^{(m)}(t)$ may therefore not be solutions to the reference equations \eqref{transformedphaseequations}. 
Indeed, when $E^{(\varepsilon)}(\phi) = E^{(0)}(\phi) \!+\! \varepsilon E^{(1)}(\phi) \!+\! \varepsilon^2 \ldots \! =\!  E^{(0)}(\psi) \!+\! \varepsilon \widetilde E^{(1)}(\psi) \!+\! \varepsilon^2 \ldots  = \widetilde E^{(\varepsilon)}(\psi)$ are two different embeddings of $\mathbf{T}^{(\varepsilon)}$, then 
  \begin{equation}\label{trafointro}
\psi = H^{(\varepsilon)}(\phi) = \phi + \varepsilon H^{(1)}(\phi)+ \varepsilon^2\ldots 
  \end{equation}
  for some smooth change of coordinates $H^{(\varepsilon)}: \mathbb{T}^n \to \mathbb{T}^n$. Using that the equations of motion for the phases $\phi$ are given by the interaction functions $F^{(1)}, F^{(2)}, \ldots$ as in \eqref{equationsofmotion},  it follows that
  \begin{align}\label{transformedphaseequations}
  \dot \psi_j   = \omega_j + \varepsilon \left( F^{(1)}_j(\psi) + \sum_{k=1}^n  \omega_k \left(\partial_{k} H^{(1)}_j\right)(\psi)  \right)+ \varepsilon^2\ldots  \, . 
  \end{align}
 This confirms that the evolution of the phases $\psi=(\psi_1, \ldots \psi_n)$ is governed by interaction functions $\widetilde F^{(1)}, \widetilde F^{(2)}, \ldots$ that are generally different from those governing the evolution of the phases $\phi=(\phi_1, \ldots, \phi_n)$. On the other hand, equation \eqref{trafointro} implies that
 \begin{align}
 \label{eq:2ndLeq}
 |\psi^{(m)}_j(t) -\phi^{(m)}_j(t) | \leq L_j^{(2)}\varepsilon\ \mbox{for}\ 1\leq j\leq n\ \mbox{and}\ 1\leq m \leq M\, ,
 \end{align}
 for some uniform constant $L_j^{(2)}>0$. Together \eqref{eq:1stLeq} and \eqref{eq:2ndLeq}  imply that $$|\phi^{(m)}_j(t)-\Phi^{(m)}_j(t)| \leq |\phi^{(m)}_j(t)-\psi^{(m)}_j(t)| + |\psi^{(m)}_j(t)-\Phi^{(m)}_j(t)|  \leq L_j \varepsilon\, , $$
 where $L_j:=L_j^{(1)}+L_j^{(2)}$.  
 The term $|\psi^{(m)}_j(t)-\Phi^{(m)}_j(t)|$ in this expression can be thought of as observational noise (which may even vanish when the observations are perfect.) The term $|\phi^{(m)}_j(t)-\psi^{(m)}_j(t)|$ reflects the intrinsic mismatch between reference phases and observed phases. 
 
In practice, it may be hard to distinguish these two sources of uncertainty in the phase signals. We  therefore do not make this distinction in this paper,  by simply assuming that $|\phi^{(m)}_j(t)-\Phi^{(m)}_j(t)| \leq L_j \varepsilon$ for a uniform constant $L_j$.   
 
\section{First order resonant reconstruction}\label{sec:1method}  

We now present the details of our first order resonant reconstruction method for equations of the form \eqref{equationsofmotionintro}. Our method relies on the assumption that we are given reasonable estimates of the coupling strength $\varepsilon$ and the oscillator frequencies $\omega_1, \ldots, \omega_n$ in \eqref{equationsofmotionintro} from the start. 

\begin{remark}
For a fixed value of $\varepsilon\neq 0$, the constant Fourier-term (i.e., the sum of the $\phi$-independent terms)  in equation \eqref{equationsofmotionintro} is 
$$\Omega_j = \omega_j + \varepsilon A_{j,0}^{(1)} + \varepsilon^2A_{j,0}^{(2)} + \mathcal{O}(\varepsilon^3)\, .$$
This means that the frequency $\omega_j$ is only defined up to a term of the order $\sim \varepsilon$. It may in fact be easier to estimate $\Omega_j$ from observations than $\omega_j$. We shall thus assume that we know the value of $\omega_j$ up to an error of the order $\sim\varepsilon$. We  simply include the unknown  part of $\omega_j$ as unknown constant terms $\varepsilon A_{j,0}^{(1)}, \varepsilon^2 A_{j,0}^{(2)}, \ldots$ in  \eqref{equationsofmotionintro}. Our reconstruction method can estimate these terms. 
Similarly, the Fourier-coefficient of the term $e^{i\langle k, \phi\rangle}$ in \eqref{equationsofmotionintro} is
$$\alpha_{j,k} := \varepsilon A_{j,k}^{(1)} + \varepsilon^2A_{j,k}^{(2)} +\mathcal{O}(\varepsilon^3)\, .$$
This means the values of the coupling strength  $\varepsilon$ and the coupling constants $A_{j,k}^{(1)}, A_{j,k}^{(2)}, \ldots$ are not unique:  only their orders of magnitude  are well-defined.  It suffices to fix a value of $\varepsilon$ such that  $A_{j,k}^{(1)}, A_{j,k}^{(2)}, \ldots \sim 1$. 
\end{remark}
\noindent Our method for reconstructing equations \eqref{equationsofmotionintro} furthermore assumes that we are given noisy phase signals 
$$\Phi^{(m)}(t)\in\mathbb{R}^n \ \mbox{with}\ 1\leq m \leq M\, .$$
In fact, we only need to know the value of these phase signals  at times $$t=0 \ \mbox{and} \ t=T:=\frac{1}{\sqrt{\varepsilon}}\, .$$
As discussed above, we assume the noise in these phase signals to be small and bounded. More precisely, we will assume  that there are constants $L_j>0$ such that for every $1\leq m \leq M$, there is a (lift of an) exact solution $\phi^{(m)}(t)\in \mathbb{R}^n$ to  \eqref{equationsofmotionintro} 
so that 
\begin{equation}\label{eq:approxphi}
| \phi^{(m)}_j(0)-\Phi^{(m)}_j(0)| \leq L_j\varepsilon\ \mbox{and} \ | \phi^{(m)}_j(T)-\Phi^{(m)}_j(T)| \leq L_j\varepsilon\, .
\end{equation}
Our first order reconstruction method can now be summarized as follows:
\begin{enumerate}
    \item[{\bf 1)}] Fix an oscillator $1\leq j \leq n$ and calculate the rescaled observed drifts
    \[
    {\Delta}^{(m)} :=  \frac{\Phi^{(m)}_j(T) - \Phi^{(m)}_j(0) -  \omega_j T}{\varepsilon T} \in \mathbb{R} \ \ \mbox{for}\  1\leq m \leq M\, .
    \]
    These drifts together form a (column) vector $\Delta\in \mathbb{R}^M$.  

    \item[{\bf 2)}] Select a  finite library of {\it resonant} Fourier labels 
    $$ \mathcal{K} = \{k_1, \ldots, k_K\}    \subset \{k\in \mathbb{Z}^n\,| \, \langle k,  \omega\rangle = 0\} \, ,$$ 
    where $K:=|\mathcal{K}|$, and evaluate the functions $e^{i \langle k, \phi \rangle}$  ($k\in \mathcal{K}$) at the observed initial states $\Phi^{(m)}(0)$, to create the library matrix
    $$ \Theta := \left( \begin{array}{ccc}
   e^{i\langle k_1, \Phi^{(1)}(0) \rangle} & \hdots & e^{i\langle k_K, \Phi^{(1)}(0)\rangle}  \\
    \vdots & \ddots & \vdots \\
    e^{i\langle k_1, \Phi^{(M)}(0)\rangle} & \hdots & e^{i\langle k_K, \Phi^{(M)}(0)\rangle} 
 \end{array} \right)  : \mathbb{C}^{K} \to \mathbb{C}^M\, .$$
We assume that the observations $\Phi^{(m)}(0)$ (with $m=1,\ldots, M$) are sufficiently well spread over $\mathbb{T}^n$  that $\Theta$ is injective. 
    \item[{\bf 3)}] List the resonant Fourier coefficients $A_{j,k}^{(1)}$ (with $k\in \mathcal{K}$) in the  vector $$A_j^{(1)} : = (A^{(1)}_{j, k_1}, \ldots, A^{(1)}_{j, k_K})^T \in\mathbb{C}^K\, .$$ 
    Inspired by \eqref{firstorderapproxresonant}, we estimate this vector by the vector
    $$\hat A_j = (\hat A_{j, k_1}, \ldots, \hat A_{j, k_K})^T \in\mathbb{C}^K$$ 
    defined by  
    $$\hat A_j = {\arg\min}_{B\in \mathbb{C}^K} \sum_{m=1}^M  \left|   \Delta^{(m)} - \sum_{i=1}^{K}    B_{k_i} e^{i\langle k_i, \Phi^{(m)}(0)\rangle}   \right|^2 \in \mathbb{C}^K \, . 
    $$
     It is well-known that this minimizer is unique when $\Theta$ is injective, and given by
    \begin{equation}\label{pseudo-inverseformula}
     \hat A_j = \Theta^+ (\Delta) \in \mathbb{C}^K\, ,
      \end{equation}
    where $\Theta^+ := (\Theta^H\Theta)^{-1} \Theta^H$ is the Moore-Penrose pseudo-inverse of $\Theta$ and $\Theta^H : = \overline{\Theta}^T$ its Hermitian transpose. We prove \eqref{pseudo-inverseformula} in Appendix \ref{app:pseudoinverse} for completeness.
\end{enumerate}
This reconstruction method was used to generate Figures  \ref{fig:no_noise50}, \ref{fig:weak_noise50} and \ref{fig:strong_noise50} in the introduction. In the upcoming  sections we will rigorously prove that the $\hat A_{j,k}$ are good estimators of the $A_{j,k}^{(1)}$, when certain natural conditions are satisfied. We give an application of the method in Section \ref{sec:examplefirstorder}  

\section{Approximating the drift dynamics}
\label{sec:approxdriftdyn}
 The proof that the $\hat A_{j,k}$ are close to the true Fourier coefficients $A_{j,k}^{(1)}$ hinges on Lemma \ref{lem:approxlemma} which implies that, to leading order, the flow of  \eqref{equationsofmotionintro} is governed by first order resonant terms only.  To show this, we first prove Lemma \ref{simplelemma} by exploiting a combination of Picard iteration and Taylor expansion. 
By $C^{k}(\mathbb{T}^n)$ we denote the space of real-valued $C^k$-functions on $\mathbb{T}^n$ and for an $f\in C^0(\mathbb{T}^n)$ we shall write 
$$\|f\|_{0} := \sup_{\phi\in\mathbb{T}^n} |f(\phi)|\, .$$ 
For a  continuous function $g: \mathbb{T}^n\times [-\varepsilon_0, \varepsilon_0]\to \mathbb{R}$ we  similarly denote $\|g\|_0:=\sup_{   (\phi,\varepsilon)\in\mathbb{T}^n\times[-\varepsilon_0, \varepsilon_0] } |g(\phi, \varepsilon)|$.

\begin{lemma} \label{simplelemma}
   For all $1\leq j\leq n$, assume that  $F^{(1)}_j \in C^{1}(\mathbb{T}^n)$ and $R^{(2)}_j \in C^0(\mathbb{T}^n\times[-\varepsilon_0, \varepsilon_0])$. Let  $|\varepsilon| \leq \varepsilon_0$ and let \(\phi(t)\) be a solution to  
  \begin{equation}\label{eq:eqmotion}
  \dot \phi_j = \omega_j + \varepsilon F_j^{(1)}(\phi) + \varepsilon^2 R_j^{(2)}(\phi, \varepsilon)\ \mbox{for all}\ 1 \leq j \leq n\, .
  \end{equation}
   Then 
\begin{equation}\label{eq:picardformula1}
\phi_j(T) - \phi_j(0) - \omega_jT = \varepsilon\int_0^T F^{(1)}_j(\phi(0) + \omega t) \, dt + r_j^{(2)}(T, \varepsilon)  \, , 
\end{equation}
in which   
$$| r_j^{(2)}(T,\varepsilon) | \leq   A_j \varepsilon^2  T^2 + B_j \varepsilon^2 T\, ,$$
  for  constants $A_j, B_j$ independent of $T$ and $\varepsilon$, given in \eqref{eq:Aj} and \eqref{eq:Bj}.
\end{lemma}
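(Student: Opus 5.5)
We integrate \eqref{eq:eqmotion} from $0$ to $T$ and compare the solution with the unperturbed motion $\phi(0)+\omega t$. Integration gives
$$\phi_j(T)-\phi_j(0)-\omega_jT=\varepsilon\int_0^T F^{(1)}_j(\phi(t))\,dt+\varepsilon^2\int_0^T R^{(2)}_j(\phi(t),\varepsilon)\,dt\, .$$
We therefore define
$$r_j^{(2)}(T,\varepsilon):=\varepsilon\int_0^T\big(F^{(1)}_j(\phi(t))-F^{(1)}_j(\phi(0)+\omega t)\big)\,dt+\varepsilon^2\int_0^T R^{(2)}_j(\phi(t),\varepsilon)\,dt\, ,$$
so that \eqref{eq:picardformula1} holds by construction. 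It remains to bound the two integrals separately. The second one is immediate: it is bounded in absolute value by $\varepsilon^2 T\|R^{(2)}_j\|_0$, which is a $B_j\varepsilon^2T$-type contribution.

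For the first integral we need an a priori estimate on the deviation $\phi(t)-\phi(0)-\omega t$. This is the first step of Picard iteration. Applying the same integral identity on $[0,t]$ and using only sup norms, we get for each $l$ and $0\le t\le T$
$$|\phi_l(t)-\phi_l(0)-\omega_lt|\le |\varepsilon|\, t\,\|F^{(1)}_l\|_0+\varepsilon^2 t\,\|R^{(2)}_l\|_0\, .$$
We then Taylor expand $F^{(1)}_j$ to first order, equivalently apply the mean value theorem on $\mathbb{R}^n$ to the lift of $F^{(1)}_j$. Since $F^{(1)}_j\in C^1$, this gives
$$|F^{(1)}_j(\phi(t))-F^{(1)}_j(\phi(0)+\omega t)|\le\sum_{l=1}^n\|\partial_lF^{(1)}_j\|_0\,|\phi_l(t)-\phi_l(0)-\omega_lt|\, .$$

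Inserting the a priori estimate and integrating $t$ from $0$ to $T$ produces a factor $T^2/2$. The first integral is therefore bounded by
$$\tfrac12\varepsilon^2T^2\sum_l\|\partial_lF^{(1)}_j\|_0\|F^{(1)}_l\|_0+\tfrac12|\varepsilon|^3T^2\sum_l\|\partial_lF^{(1)}_j\|_0\|R^{(2)}_l\|_0\, .$$
Using $|\varepsilon|\le\varepsilon_0$, the second term is absorbed into the $\varepsilon^2T^2$ term. This yields
$$A_j=\tfrac12\sum_l\|\partial_lF^{(1)}_j\|_0\big(\|F^{(1)}_l\|_0+\varepsilon_0\|R^{(2)}_l\|_0\big),\qquad B_j=\|R^{(2)}_j\|_0\, .$$
Both constants are finite because $\mathbb{T}^n\times[-\varepsilon_0,\varepsilon_0]$ is compact.

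The main point of care is the mean value step on the torus. One must work with lifts to $\mathbb{R}^n$ and treat $F^{(1)}_j$ as a $2\pi$-periodic $C^1$ function there, so that the straight segment between $\phi(t)$ and $\phi(0)+\omega t$ is legitimate and the sup norms of the derivatives still control the difference. Everything else is routine bookkeeping of constants.
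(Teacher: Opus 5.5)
Your proof is correct and follows the same argument as the paper: integrate the equation, use a first Picard step to get the a priori bound on $\phi(t)-\phi(0)-\omega t$, apply the mean value theorem to the lift of $F^{(1)}_j$, and integrate to obtain the factor $T^2/2$. The one difference is that the paper keeps $R^{(1)}_l:=F^{(1)}_l+\varepsilon R^{(2)}_l$ together and gets $A_j=\frac12\sum_l\|\partial_lF_j^{(1)}\|_0\|R_l^{(1)}\|_0$, which is slightly smaller than your constant by the triangle inequality. Without splitting that sup norm, your argument yields exactly the constant in \eqref{eq:Aj}.
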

\begin{proof}
Let us write 
$R_i^{(1)}(\phi, \varepsilon) := F_i^{(1)}(\phi) + \varepsilon R_i^{(2)}(\phi, \varepsilon)$. Then $R_i^{(1)}$ is a bounded continuous function on $\mathbb{T}^n\times [-\varepsilon_0, \varepsilon_0]$, and 
$\dot \phi_i(s) = \omega_i + \varepsilon R_i^{(1)}(\phi(s),\varepsilon)$ for all $1\leq i\leq n$.   
Integrating this equation from $s=0$ to $s=t$ yields 
\begin{equation}\label{eq:firstintegral}
\phi_i(t) = \phi_i(0)+ \omega_i t + \varepsilon \int_0^t R^{(1)}_i(\phi(s), \varepsilon)\, ds \, .
\end{equation}
We see in particular that 
\begin{equation}\label{eq:firstestimate}
    \phi_i(t) = \phi_i(0) + \omega_i t + r_i^{(1)}(t, \varepsilon)
\ \ \mbox{in which}\ \ 
|r_i^{(1)}(t, \varepsilon)| \leq |\varepsilon t| \| R_i^{(1)}\|_0  \, .
\end{equation}
Next, we integrate the equation $\dot \phi_j(t)=\omega_j + \varepsilon F_j^{(1)}(\phi(t)) + \varepsilon^2R_j^{(2)}(\phi(t),\varepsilon)$ from $t=0$ to $t=T$ to find that
\begin{align} \label{eq:secondintegral} 
& \phi_j(T)  = \phi_j(0) + \omega_j T + \varepsilon \int_0^T F^{(1)}_j\left(\phi(t)\right) dt + \varepsilon^2\int_0^T \! R^{(2)}_j(\phi(t),\varepsilon) \, dt .
\end{align}
To rewrite the first integral term, we use \eqref{eq:firstestimate} to obtain  
\begin{equation}\label{eq:F1formula}
F_j^{(1)}\left(\phi(t)\right) =  F^{(1)}_j\left(\phi(0)+\omega t +  r^{(1)}(t, \varepsilon ) \right) = F^{(1)}_j\left(\phi(0)+\omega t \right) +   h_j(t, \varepsilon) 
\end{equation}
in which, in view of \eqref{eq:firstestimate},   
\begin{equation} \label{eq:rjbar}
|h_j(t, \varepsilon)|\leq \sum_{i=1}^n \| \partial_iF^{(1)}_j \|_0  |r_i^{(1)}(t, \varepsilon) | \leq |\varepsilon t| \sum_{i=1}^n  \| \partial_iF^{(1)}_j \|_0  \| R^{(1)}_i\|_0 \, .
\end{equation}
Integration of \eqref{eq:F1formula} from $t=0$ to $t=T$ therefore gives
\begin{align} \label{eq:firstintegralestimate} 
& \left| \varepsilon \! \int_0^T \!\!F^{(1)}_j\left(\phi(t)\right)\, dt  - \varepsilon \! \int_0^T\!\! F^{(1)}_j\left(\phi(0)+\omega t\right) \, dt  \right| \leq   
   \varepsilon \int_0^T \!\! | h_j(t, \varepsilon)| \, dt \leq A_j \varepsilon^2T^2
   \end{align}
   for the constant
\begin{align}\label{eq:Aj}
 A_j:=\frac{1}{2} \sum_{i=1}^n  \| \partial_iF^{(1)}_j \|_0  \| R^{(1)}_i\|_0\, .
\end{align}
The second term in \eqref{eq:secondintegral} can clearly be bounded by
\begin{equation}\label{eq:secondintegralestimate}
\left| \varepsilon^2\int_0^T R^{(2)}_j(\phi(t),\varepsilon) \, dt \right|  \leq B_j \varepsilon^2 T
\end{equation}
for
\begin{equation}\label{eq:Bj}
  B_j :=\|R_j^{(2)}\|_0\, .
\end{equation}
Together, \eqref{eq:secondintegral}, \eqref{eq:firstintegralestimate} and \eqref{eq:secondintegralestimate} prove the Lemma.
\end{proof}
\noindent We now investigate why and in which exact sense Lemma \ref{simplelemma} implies formula \eqref{firstorderapproxresonant}. Recall from \eqref{eq:picardformula1} that the drift $\phi_j(T)-\phi_j(0)-\omega_jT$ of a solution to \eqref{equationsofmotionintro} can be approximated by the integral 
\begin{align}\label{eq:intoverflow}
\varepsilon\int_0^T F_j^{(1)}(\phi(0)+\omega t)\, dt
\end{align}
of $\varepsilon F_j^{(1)}$ over the unperturbed flow $t\mapsto \phi(0)+\omega t$. One can expect that this integral is of the order $\varepsilon T$, and in particular that it is of the order $\sqrt{\varepsilon}$ when $T$ is of the order $\frac{1}{\sqrt{\varepsilon}}$. On the other hand, Lemma \ref{simplelemma} implies that the difference $r_j^{(2)}(T,\varepsilon)$ between the true drift and \eqref{eq:intoverflow} is of the order $\varepsilon$ for $T\sim \frac{1}{\sqrt{\varepsilon}}$. This means that, to leading order in $\varepsilon$ and over timescales of the order $\frac{1}{\sqrt{\varepsilon}}$, the drift is determined by the integral \eqref{eq:intoverflow}. 

We now analyze \eqref{eq:intoverflow} in detail by  splitting $F^{(1)}_j$ into a {\it library} part, a {\it remaining resonant} part and a {\it nonresonant} part, by writing 
$$F_j^{(1)}(\phi) = F_j^{\mathcal{K}}(\phi) + F_j^{\mathcal{R}}(\phi) + F_j^{\mathcal{N}}(\phi)\, ,$$
where 
\begin{align} \label{eq:FK}
F_j^{\mathcal{K}}(\phi) & = \sum_{k \in \mathcal{K}} A_{j,k}^{(1)} e^{i\langle k, \phi \rangle }\, , \\ \label{eq:FR}
F_j^{\mathcal{R}}(\phi) & = \sum_{\langle k, \omega\rangle = 0,\, k \notin \mathcal{K}} A_{j,k}^{(1)} e^{i\langle k, \phi \rangle }\, , \\ 
\label{eq:FN}
F_j^{\mathcal{N}}(\phi) & = \sum_{\langle k, \omega\rangle \neq 0} A_{j,k}^{(1)} e^{i\langle k, \phi \rangle }\, .
\end{align}
Here, $\mathcal{K}  \subset \left\{ k \in \mathbb{Z}^n \mid \langle \omega, k \rangle = 0 \right\}$ is a finite   library of resonant Fourier labels. We integrate these functions separately. 
\\ \mbox{}\\ 
{\bf Library part \eqref{eq:FK}:} Because $\mathcal{K}$ is finite and $\langle  k, \phi(0)+\omega t\rangle = \langle k, \phi(0)\rangle$ when $k$ is resonant, it is clear that 
\begin{align}\label{FKexpansion} 
\varepsilon\! \int_0^T\!\! F_j^{\mathcal{K}}(\phi(0)+\omega t) \, dt = \varepsilon \sum_{k\in \mathcal{K}} A_{j,k}^{(1)} \int_0^T \!\! e^{i\langle k, \phi(0)+\omega t\rangle} \, dt = \varepsilon T \sum_{k\in \mathcal{K}} A_{j,k}^{(1)}  e^{i\langle k, \phi(0)\rangle} \, . 
\end{align}
Thus, the value of this integral increases slowly and linearly in $T$, as was anticipated in formula \eqref{firstorderapproxresonant}. Note that over timescales of the order $T\sim \frac{1}{\sqrt{\varepsilon}}$, the right hand side of \eqref{FKexpansion} will typically be of the order $\sqrt{\varepsilon}$. 
\\ \mbox{}\\
\noindent {\bf Remaining resonant part \eqref{eq:FR}:}
Note that $F_j^{\mathcal{R}}$ may potentially have an infinite Fourier series. We can nevertheless formally compute the integral  of \eqref{eq:FR} term-by-term. This yields
\begin{align}\label{Fremainingresonantintegral} 
\varepsilon \int_0^T F_j^{\mathcal{R}}(\phi(0)+\omega t)\, dt & \sim \varepsilon T \sum_{\langle k, \omega\rangle = 0,\, k\notin \mathcal{K}} A_{j,k}^{(1)} e^{i\langle k, \phi(0) \rangle} \, .
\end{align}
We would like this integral to converge and to be at most of the order $\varepsilon$ on a timescale of the order $T\sim \frac{1}{\sqrt{\varepsilon}}$.  This can be ensured by requiring, for example, that 
\begin{align}\label{eq:remainingresonanttermsestimate}
\sum_{\langle k, \omega \rangle = 0, \, k\notin \mathcal{K}} |A_{j,k}^{(1)}| \leq C_j \sqrt{\varepsilon}\, ,
\end{align}
meaning that the library is large enough to describe the most relevant resonant terms in $F_j^{(1)}$. 
Under condition \eqref{eq:remainingresonanttermsestimate}, equation \eqref{Fremainingresonantintegral} becomes an equality, and we have 
\begin{align}\label{Fremainingresonantintegralbound} 
\left| \varepsilon \int_0^T F_j^{\mathcal{R}}(\phi(0)+\omega t)\, dt \right|  & \leq C_j \varepsilon^{\frac{3}{2}}T\, .
\end{align}
Note that the right hand side of \eqref{Fremainingresonantintegralbound} is of the order $\varepsilon$ when $T\sim\frac{1}{\sqrt{\varepsilon}}$. It is thus of the order of the error term $r_j^{(2)}(T, \varepsilon)$ in \eqref{eq:picardformula1} and considerably smaller than the right hand side of \eqref{FKexpansion}.
\\ \mbox{} \\
{\bf Nonresonant part \eqref{eq:FN}:}
The nonresonant part $F_j^{\mathcal{N}}$ may similarly have an infinite Fourier series, so that  its integral can only formally be computed to be
\begin{align} \label{Fnonresonantintegral}
\varepsilon \int_0^T & F_j^{\mathcal{N}}(\phi(0)+\omega t)\, dt \sim 
\varepsilon \sum_{\langle k, \omega\rangle \neq 0}  A_{j,k}^{(1)} \int_0^T e^{i\langle k, \phi(0)+\omega t \rangle}\, dt \notag \\ 
& = \varepsilon  \sum_{\langle k, \omega\rangle \neq 0}  \frac{A_{j,k}^{(1)}  e^{i\langle k, \phi(0)\rangle}}{i\langle k, \omega \rangle}   \left( e^{i\langle k,   \omega\rangle T} - 1\right)\, . 
\end{align}
Each term in this formal sum is an oscillatory and uniformly bounded function of $T$. We would like the sum to be bounded itself and of the order $\varepsilon$. This can be ensured by asking, for example, that 
\begin{align}\label{eq:nonresonantestimate}
    \sum_{\langle k, \omega \rangle \neq 0} \left| \frac{A_{j,k}^{(1)}}{\langle k, \omega\rangle}\right|  \leq \frac{1}{2} D_j \, . 
\end{align}
Under condition \eqref{eq:nonresonantestimate}, equation \eqref{Fnonresonantintegral} becomes an equality, and
\begin{align} \label{Fnonresonantintegralbound}
\left| \varepsilon \int_0^T  F_j^{\mathcal{N}}(\phi(0)+\omega t)\, dt \right|  \leq D_j \varepsilon \, .
\end{align}
Being of the order $\varepsilon$ (uniformly in time), this integral is of the order of the error term $r_j^{(2)}$ in \eqref{eq:picardformula1} and considerably smaller than the right hand side of \eqref{FKexpansion} on the timescale $T\sim \frac{1}{\sqrt{\varepsilon}}$. 
\\ \mbox{}\\
\noindent The following result follows almost immediately from the discussion above:
\begin{lemma} \label{lem:approxlemma}
    Let $0 \leq \varepsilon \leq \varepsilon_0$ be given, and let $\mathcal{K}\subset \{k\in \mathbb{Z}^n\, | \, \langle k,\omega\rangle = 0\, \}$ be a finite set of resonant Fourier labels. Let us decompose $$F^{(1)}_j= F_j^{\mathcal{K}} + F_j^{\mathcal{R}} + F_j^{\mathcal{N}}\, , $$
    as defined in \eqref{eq:FK}, \eqref{eq:FR} and \eqref{eq:FN}. 
    Let the assumptions of Lemma \ref{simplelemma} be satisfied and  assume in addition that for all $\phi\in \mathbb{T}^n$, 
    $$\left| \int_0^T F_j^{\mathcal{R}}(\phi+\omega t)\, dt \right|  \leq C_j \sqrt{\varepsilon} T \ \mbox{and}\  \left| \int_0^T  F_j^{\mathcal{N}}(\phi+\omega t)\, dt \right|  \leq D_j $$
    for some constants $C_j, D_j$ independent of $T$ and $\varepsilon$. 
    Then any solution $t\mapsto \phi(t)$ to \eqref{eq:eqmotion} satisfies
$$\phi_j(T)-\phi_j(0)-\omega_j T = \varepsilon T \sum_{k\in \mathcal{K}} A_{j,k}^{(1)}  e^{i\langle k, \phi(0)\rangle} + \rho_j^{(2)}(T,\varepsilon) \, , $$
in which 
$$\left| \rho_j^{(2)}(T,\varepsilon) \right| \leq A_j \varepsilon^2 T^2 + B_j\varepsilon^2 T + C_j \varepsilon^{\frac{3}{2}}T + D_j\varepsilon\, ,$$
and where the constants $A_j,B_j$ are given in \eqref{eq:Aj} and \eqref{eq:Bj}.
\end{lemma}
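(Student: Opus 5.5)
The plan is to start from Lemma \ref{simplelemma}, which already gives
$$\phi_j(T)-\phi_j(0)-\omega_jT = \varepsilon\int_0^T F^{(1)}_j(\phi(0)+\omega t)\,dt + r_j^{(2)}(T,\varepsilon)\, ,\qquad |r_j^{(2)}(T,\varepsilon)|\leq A_j\varepsilon^2T^2+B_j\varepsilon^2T\, ,$$
with $A_j,B_j$ as in \eqref{eq:Aj} and \eqref{eq:Bj}. All that remains is to analyse the integral of $F^{(1)}_j$ along the unperturbed flow. For this we use the decomposition $F_j^{(1)}=F_j^{\mathcal{K}}+F_j^{\mathcal{R}}+F_j^{\mathcal{N}}$ and linearity of the integral.

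First we split the integral into three pieces, one for each part of the decomposition. To justify this we need each piece to be a well-defined continuous function on $\mathbb{T}^n$.
\begin{itemize}
\item[(i)] $F_j^{\mathcal{K}}$ is a finite trigonometric sum, so it is smooth.
\item[(ii)] $F_j^{(1)}$ is $C^1$ by assumption, hence continuous.
\item[(iii)] Under the standing assumption that the Fourier series of $F^{(1)}_j$ converges absolutely, the subseries $F_j^{\mathcal{R}}$ and $F_j^{\mathcal{N}}$ are continuous; it suffices that one of them is, since the other is then determined by subtraction.
\end{itemize}
The integrals of $F_j^{\mathcal{R}}$ and $F_j^{\mathcal{N}}$ along $t\mapsto\phi(0)+\omega t$ are exactly the quantities bounded in the hypotheses of the lemma. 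We therefore never need to integrate their Fourier series term by term; we only use the hypothesised bounds.

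Next we treat each piece. For the library part, every $k\in\mathcal{K}$ satisfies $\langle k,\omega\rangle=0$, so $e^{i\langle k,\phi(0)+\omega t\rangle}=e^{i\langle k,\phi(0)\rangle}$ is constant in $t$. Computing as in \eqref{FKexpansion} gives exactly $\varepsilon T\sum_{k\in\mathcal{K}}A^{(1)}_{j,k}e^{i\langle k,\phi(0)\rangle}$. For the remaining resonant part, we apply the first hypothesis at the point $\phi=\phi(0)$; multiplying by $\varepsilon$ gives a contribution bounded by $C_j\varepsilon^{3/2}T$. For the nonresonant part, the second hypothesis at $\phi=\phi(0)$ gives a contribution bounded by $D_j\varepsilon$.

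Finally we define $\rho_j^{(2)}(T,\varepsilon)$ as the sum of $r_j^{(2)}(T,\varepsilon)$ and the $\mathcal{R}$- and $\mathcal{N}$-contributions. The triangle inequality then yields the stated bound $A_j\varepsilon^2T^2+B_j\varepsilon^2T+C_j\varepsilon^{3/2}T+D_j\varepsilon$.

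The only point needing care is step (iii) above: the splitting into $\mathcal{R}$ and $\mathcal{N}$ must be legitimate, i.e. these functions must be well defined and integrable. I would settle this by noting that the absolute-summability conditions \eqref{eq:remainingresonanttermsestimate} and \eqref{eq:nonresonantestimate} discussed before the lemma make both pieces continuous. Everything else is bookkeeping.
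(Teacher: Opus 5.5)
Your proposal is correct and follows the paper's own proof: you apply Lemma \ref{simplelemma}, evaluate the library part exactly using $\langle k,\omega\rangle=0$, bound the $F_j^{\mathcal{R}}$- and $F_j^{\mathcal{N}}$-integrals by the hypotheses at $\phi=\phi(0)$, and combine with the triangle inequality. Your well-definedness discussion in (iii) is not needed, and it relies on conditions (absolute convergence, \eqref{eq:remainingresonanttermsestimate}, \eqref{eq:nonresonantestimate}) that are not hypotheses of the lemma; the lemma's integral bounds already presuppose that $F_j^{\mathcal{R}}$ and $F_j^{\mathcal{N}}$ are functions whose integrals along $t\mapsto\phi+\omega t$ make sense, which is how the paper treats it.
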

\begin{proof}
The proof was essentially given in the discussion above. Note that Lemma \ref{simplelemma} states that 
$$\phi_j(T)-\phi_j(0) - \omega_j T = \varepsilon\int_0^T F_j^{(1)}(\phi(0) + \omega t)\, dt + r_j^{(2)}(T,\varepsilon)$$
with  $|r_j^{(2)}(T,\varepsilon)|\leq A_j\varepsilon^2T^2 + B_j \varepsilon^2 T$. The assumptions on $F_j^{\mathcal{R}}$ and $F_j^{\mathcal{N}}$ imply that \eqref{Fremainingresonantintegralbound} and \eqref{Fnonresonantintegralbound} hold. It follows from this that 
$$\left| \varepsilon\int_0^T F_j^{(1)}(\phi(0) + \omega t)\, dt - \varepsilon T \sum_{k\in \mathcal{K}} A_{j,k}^{(1)}  e^{i\langle k, \phi(0)\rangle}  \right| \leq C_j \varepsilon^{\frac{3}{2}}T  + D_j \varepsilon\, . $$
The statement of the lemma now follows.
\end{proof}
\noindent Lemma \ref{lem:approxlemma} implies in particular that $|\rho_j^{(2)}(T,\varepsilon)| \sim \varepsilon$ 
for $T\sim \frac{1}{\sqrt{\varepsilon}}$. Because $\varepsilon T \sim \sqrt{\varepsilon} \gg \varepsilon$ when $T\sim \frac{1}{\sqrt{\varepsilon}}$, this means that  
$$\phi_j(T)-\phi_j(0)-\omega_jT = \varepsilon T \sum_{k\in \mathcal{K}} A_{j,k}^{(1)}  e^{i\langle k, \phi(0)\rangle} + \mathcal{O}(\varepsilon)\ \mbox{for} \ T \sim \frac{1}{\sqrt{\varepsilon}}\, .$$  
Because $\varepsilon T \sim \sqrt{\varepsilon} \gg \varepsilon$ for $T\sim \frac{1}{\sqrt{\varepsilon}}$, this implies that the drifts are determined, to leading order, solely by resonant library terms. 

\section{Validity of first order reconstruction}\label{sec:validitysection}
Exploiting Lemma \ref{lem:approxlemma} of the previous section, we will now prove that our first order resonant reconstruction method produces good estimates $\hat A_{j,k}$ of the true resonant Fourier coefficients $A_{j,k}^{(1)}$ as soon as the observed initial data $\Phi^{(m)}(0)$ (for $m=1, \ldots, M$) are
sufficiently well-distributed over $\mathbb{T}^n$. This condition on the data will be made precise in Lemma \ref{deltalemma} below. We first formulate Theorem \ref{general_thm}, which estimates the error in the reconstructed coefficients  in terms of operator norms of the pseudo-inverse of  $\Theta$. To state the theorem, we define
for a vector $x\in \mathbb{C}^r$, 
 $$ \|x\|_1:= \sum_{i=1}^r |x_i| \ , \ 
\|x\|_{2} := \left(\sum_{i=1}^r |x_i|^2\right)^{\frac{1}{2}} \ \mbox{and} \ 
\|x\|_{\infty} := \max_{1 \leq i \leq r} |x_i|\, .  
$$ 
For a linear operator $Y: \mathbb{C}^q  \to \mathbb{C}^r$ we denote its operator norms by $\| Y\|_{*} = \sup_{\|x\|_*=1} \|Y(x)\|_*$, for $*\in\{1,2,\infty\}$.  It is well-known that these operator norms are given by
\begin{align} \label{eq:1opnormdef}
& \|Y\|_{1} := \sup_{\| x\|_1=1} \|Y(x)\|_1 = \max_{1 \leq j \leq r} \sum_{i=1}^q |Y_{ij}|\, , \\  \label{eq:2opnormdef}
& \|Y\|_{2} := \sup_{\| x\|_2=1} \|Y(x)\|_2 =  \sqrt{ \lambda_{\rm max}(Y^{H}Y)} \ , \\  \label{eq:infopnormdef}
& \|Y\|_{\infty} := \sup_{\| x\|_{\infty}=1} \|Y(x)\|_{\infty}  =  \max_{1 \leq i \leq q} \sum_{j=1}^r |Y_{ij}|\, .
\end{align}
In formula \eqref{eq:2opnormdef}, $\lambda_{\rm max}(Y^{H}Y) = \lambda_{\rm max}(YY^{H})$ is the maximal eigenvalue of $Y^{H}Y$, that is, $\|Y\|_2$ is equal to the largest singular value of $Y$.  Also, note that $\|Y\|_1=\|Y^{H}\|_{\infty}$. 
Obviously, all operator norms are sub-multiplicative, that is, $\|XY\|_* \leq \|X\|_*\|Y\|_*$. 
We can now formulate the main result of this section:

\begin{theorem}\label{general_thm}
    Let the assumptions of Lemmas \ref{simplelemma} and \ref{lem:approxlemma} be satisfied. Write $\mathcal{K} = \{k_1, \ldots, k_K\}$, with $K=|\mathcal{K}|$, and respectively denote by  
    $$A_{j}^{(1)} = (A_{j,k_1}^{(1)}, \ldots, A_{j,k_K}^{(1)})^{T} \in \mathbb{C}^K\ \mbox{and}\  \hat A_{j} = (\hat A_{j,k_1}, \ldots, \hat A_{j,k_K})^{T} \in \mathbb{C}^K$$     the vector of true resonant Fourier coefficients $A_{j,k}^{(1)}$ (with $k\in \mathcal{K}$) in $F_j^{(1)}$, and the vector of their estimates $\hat A_{j,k}$ provided by our first order resonant reconstruction method. Let $\mathfrak{C}_j$ be the constant given in \eqref{eq:Cjdef} below, which is independent of $\varepsilon, K$ and $M$.
Then
\begin{enumerate}
    \item The $l_1$-distance between $\hat A_j$ and $A_j^{(1)}$ satisfies 
    $$\|\hat A_j - A_j^{(1)}\|_1 \leq  M \mathfrak{C}_j \|\Theta^+\|_1 \sqrt{\varepsilon}\, ; $$ 
    \item The $l_2$-distance between $\hat A_j$ and $A_j^{(1)}$ satisfies
    $$\|\hat A_j - A_j^{(1)}\|_2 \leq  \sqrt{M}\mathfrak{C}_j \|\Theta^+\|_2 \sqrt{\varepsilon}\, ;$$ 
    \item The $l_{\infty}$-distance between $\hat A_j$ and $A_j^{(1)}$ satisfies
    $$\|\hat A_j - A_j^{(1)}\|_{\infty} \leq  \mathfrak{C}_j \|\Theta^+\|_{\infty} \sqrt{\varepsilon}\, .$$ 
\end{enumerate}
\end{theorem}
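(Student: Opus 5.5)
The proof writes the observed drift vector $\Delta$ as $\Theta A_j^{(1)}$ plus an error vector $E\in\mathbb{C}^M$. Since $\Theta$ is injective, $\Theta^+\Theta = I$, so
$$\hat A_j - A_j^{(1)} = \Theta^+\Delta - \Theta^+\Theta A_j^{(1)} = \Theta^+ E\, .$$
Each of the three estimates then follows from $\|\Theta^+E\|_* \leq \|\Theta^+\|_*\|E\|_*$, together with a uniform bound $\|E\|_\infty \leq \mathfrak{C}_j\sqrt{\varepsilon}$. Converting norms on $\mathbb{C}^M$ gives $\|E\|_1\leq M\|E\|_\infty$ and $\|E\|_2\leq \sqrt{M}\|E\|_\infty$, which produces the factors $M$ and $\sqrt{M}$ in items 1 and 2. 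Everything therefore reduces to bounding each component
$$E^{(m)} = \Delta^{(m)} - \sum_{k\in\mathcal{K}} A^{(1)}_{j,k} e^{i\langle k,\Phi^{(m)}(0)\rangle}\, .$$

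To bound $E^{(m)}$, I split it into three pieces, using the exact solution $\phi^{(m)}$ associated with the observation.

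\emph{Observation noise in the drift.} By \eqref{eq:approxphi}, replacing $\Phi^{(m)}_j(T)-\Phi^{(m)}_j(0)$ by $\phi^{(m)}_j(T)-\phi^{(m)}_j(0)$ changes the numerator of $\Delta^{(m)}$ by at most $2L_j\varepsilon$. After dividing by $\varepsilon T=\sqrt{\varepsilon}$, this contributes $2L_j\sqrt{\varepsilon}$.

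\emph{The true drift.} Lemma \ref{lem:approxlemma} gives
$$\frac{\phi_j(T)-\phi_j(0)-\omega_jT}{\varepsilon T} = \sum_{k\in\mathcal{K}}A^{(1)}_{j,k}e^{i\langle k,\phi^{(m)}(0)\rangle} + \frac{\rho_j^{(2)}}{\varepsilon T}\, .$$
With $T=\varepsilon^{-1/2}$ the remainder satisfies
$$\Bigl|\frac{\rho_j^{(2)}}{\varepsilon T}\Bigr| \leq A_j\varepsilon T + B_j\varepsilon + C_j\sqrt{\varepsilon} + D_j/T = (A_j+C_j+D_j)\sqrt{\varepsilon} + B_j\varepsilon\, .$$
Using $\varepsilon\leq\varepsilon_0$, the last term is at most $B_j\sqrt{\varepsilon_0}\sqrt{\varepsilon}$.

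\emph{The library evaluated at noisy initial points.} We must replace $e^{i\langle k,\phi^{(m)}(0)\rangle}$ by $e^{i\langle k,\Phi^{(m)}(0)\rangle}$. Since $|e^{ia}-e^{ib}|\leq|a-b|$, this change is at most
$$\sum_{k\in\mathcal{K}}|A^{(1)}_{j,k}|\sum_i|k_i|L_i\varepsilon\, .$$
A cleaner route is to note that this difference equals $F^{\mathcal{K}}_j(\phi^{(m)}(0))-F^{\mathcal{K}}_j(\Phi^{(m)}(0))$ and bound it by $\varepsilon\sum_i\|\partial_iF_j^{\mathcal{K}}\|_0L_i$. Either way the contribution is $O(\varepsilon)\leq O(\sqrt{\varepsilon_0})\sqrt{\varepsilon}$.

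Summing the three pieces defines
$$\mathfrak{C}_j := 2L_j + A_j + C_j + D_j + \sqrt{\varepsilon_0}\Bigl(B_j + \sum_i L_i\sum_{k\in\mathcal{K}}|A^{(1)}_{j,k}|\,|k_i|\Bigr)\, .$$
Apart from $\mathcal{K}$ through the last sum, this is independent of $\varepsilon$ and $M$.

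The main obstacle is the claim in the statement that $\mathfrak{C}_j$ is independent of $K$, which the third piece threatens. I would handle it by requiring $F_j^{\mathcal{K}}$ to be Lipschitz with a constant bounded independently of the library. For instance, one can bound $\sum_{k}|A^{(1)}_{j,k}||k_i|$ over all resonant $k$; this is finite when $F^{(1)}_j$ is smooth enough that its Fourier coefficients decay. With that bound the third piece no longer depends on $\mathcal{K}$, and the rest of the argument is routine bookkeeping.
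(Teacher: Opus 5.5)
Your reduction $\hat A_j-A_j^{(1)}=\Theta^+E$, the norm conversions, and the three error sources are the same as in the paper. Your $\sqrt{\varepsilon_0}$ bookkeeping is, if anything, more careful than the paper's, which tacitly uses $\varepsilon\leq 1$ to bound $\widetilde B_j\varepsilon$ by $\widetilde B_j\sqrt{\varepsilon}$.

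\textbf{The gap.} It lies in the third error source. As written, your constant contains $\|\partial_iF_j^{\mathcal K}\|_0$ or $\sum_{k\in\mathcal K}|A^{(1)}_{j,k}|\,|k_i|$, and the hypotheses do not control either quantity uniformly in the library. The function $\partial_iF_j^{\mathcal K}$ is the truncation of the Fourier series of the merely continuous function $\partial_iF_j^{(1)}$ to the modes in $\mathcal K$. Such truncations need not stay bounded as $\mathcal K$ grows. For example, take $\omega_1=\omega_2$ and $F_j^{(1)}(\phi)=g(\phi_1-\phi_2)$ with $g\in C^1$ chosen so that the Fourier partial sums of $g'$ are unbounded. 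Your remedy, summability of $\sum|A^{(1)}_{j,k}||k_i|$ over all resonant $k$, is an extra hypothesis the theorem does not make. So under the stated assumptions you have not obtained a constant independent of $K$.

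\textbf{The fix.} The obstacle comes from the order of operations, and the paper avoids it by reversing that order.
\begin{enumerate}
\item Start from Lemma~\ref{simplelemma} rather than Lemma~\ref{lem:approxlemma}. Write the rescaled true drift as $\frac{1}{T}\int_0^T F_j^{(1)}(\phi^{(m)}(0)+\omega t)\,dt$ plus an error bounded by $A_j\varepsilon T+B_j\varepsilon$.
\item Replace $\phi^{(m)}(0)$ by $\Phi^{(m)}(0)$ inside this integral of the full $F_j^{(1)}$. This costs at most $\varepsilon\sum_i\|\partial_iF_j^{(1)}\|_0L_i$, which is finite because $F_j^{(1)}\in C^1$ and does not depend on the library.
\item Only now split $F_j^{(1)}=F_j^{\mathcal K}+F_j^{\mathcal R}+F_j^{\mathcal N}$ along the unperturbed orbit through $\Phi^{(m)}(0)$. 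The bounds on $F_j^{\mathcal R}$ and $F_j^{\mathcal N}$ in Lemma~\ref{lem:approxlemma} hold for every $\phi\in\mathbb{T}^n$, so they apply at the noisy initial point. The library part integrates exactly to $\sum_{k\in\mathcal K}A^{(1)}_{j,k}e^{i\langle k,\Phi^{(m)}(0)\rangle}$, so no Lipschitz constant of $F_j^{\mathcal K}$ ever appears.
\end{enumerate}
The resulting constant involves only $A_j,B_j,C_j,D_j$, the $L_i$ and $\|\partial_iF_j^{(1)}\|_0$, and needs no Fourier-decay assumption.
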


\begin{proof}
Recall the assumption in \eqref{eq:approxphi} that $\Phi^{(m)}(0)$ and $\Phi^{(m)}(T)$ are observations of true solutions $t\mapsto \phi^{(m)}(t)$ to \eqref{equationsofmotionintro}  that satisfy
$$| \phi^{(m)}_j(0)-\Phi^{(m)}_j(0)| \leq L_j\varepsilon\ \mbox{and} \ | \phi^{(m)}_j(T)-\Phi^{(m)}_j(T)| \leq L_j\varepsilon\, .$$ 
Using these true solutions, we can define the true rescaled drifts 
$$\delta^{(m)} := \frac{\phi_j^{(m)}(T) - \phi_j^{(m)}(0) - \omega_j T}{\varepsilon T}\, .$$
The observed rescaled drifts are good approximations of these true rescaled drifts, as it follows immediately from \eqref{eq:approxphi} that  
\begin{equation}
    \label{eq:deltaDeltaestimate}
| \Delta^{(m)} - \delta^{(m)}| \leq \frac{2 L_j}{T}\, , 
\end{equation}
which is small for large $T$. 
 Dividing equation \eqref{eq:picardformula1} in Lemma \ref{simplelemma} by $\varepsilon T$, we find that the true rescaled  drifts satisfy
 \begin{equation}
    \label{repeatDeltaequation}
 \delta^{(m)} = \frac{1}{T} \int_0^T F_j^{(1)}(\phi^{(m)}(0) + \omega t)\, dt +  e^{(m)}\, ,
 \end{equation}
for certain approximation errors that satisfy $|e^{(m)}| \leq A_j\varepsilon T + B_j \varepsilon$.  Combining \eqref{eq:deltaDeltaestimate}  and \eqref{repeatDeltaequation}  yields that 
\begin{equation}
    \label{repeatrepeatDeltaequation}
 \Delta^{(m)} = \frac{1}{T} \int_0^T F_j^{(1)}(\phi^{(m)}(0) + \omega t)\, dt +  \hat e^{(m)}\, ,
 \end{equation}
with $|\hat e^{(m)}| \leq A_j\varepsilon T + B_j \varepsilon + 2L_jT^{-1}$. 

Using that $F_j^{(1)}$ is continuously differentiable  by assumption, we have 
\begin{align}\notag
& |F_j^{(1)}(\phi^{(m)}(0)+\omega t)   - F_j^{(1)}(\Phi^{(m)}(0) +\omega t) | \leq  \\ \notag  &  \sum_{i=1}^n\|\partial_iF_j^{(1)}\|_0 |\phi^{(m)}_i(0) - \Phi^{(m)}_i(0)| 
 \leq \varepsilon \left( \sum_{i=1}^n\|\partial_iF_j^{(1)}\|_0 L_i\right) \, ,
\end{align}
in view of \eqref{eq:approxphi}. As a result, we can now rewrite \eqref{repeatrepeatDeltaequation} as
\begin{equation}\label{eq:DeltaintegralPhi}
\Delta^{(m)} = \frac{1}{T}\int_0^T F_j^{(1)}(\Phi^{(m)}(0) + \omega t)\, dt +  \tilde e^{(m)} \, ,
\end{equation}
in which 
\begin{equation}\label{eq:rhoestimate}
|\tilde e^{(m)}| \leq \varepsilon \left( \sum_{i=1}^n\|\partial_iF_j^{(1)}\|_0 L_i\right) + |\hat e^{(m)} | \leq A_j\varepsilon T +  \widetilde B_j \varepsilon + 2L_jT^{-1} \, ,
\end{equation}
 for the new constant 
 \begin{equation}
     \label{eq:Bjtilde}
 \widetilde B_j := B_j + \left( \sum_{i=1}^n\|\partial_iF_j^{(1)}\|_0 L_i\right)\, . 
 \end{equation}
Next, recall that we split $F_j^{(1)}=F_j^{\mathcal{K}} + F_j^{\mathcal{R}} + F_j^{\mathcal{N}}$. Under the assumptions of Lemma \ref{lem:approxlemma}, the main contribution to the integral in \eqref{eq:DeltaintegralPhi} is 
$$ \frac{1}{T} \int_0^T F_j^{{\mathcal{K}}}(\Phi^{(m)}(0) + \omega t)\, dt =  \sum_{k\in \mathcal{K}} A_{j,k}^{(1)} e^{i\langle k, \Phi^{(m)}(0)\rangle}\, ,$$
 because the contributions of the remaining resonant part and the nonresonant part can be estimated by
 \begin{align} \notag 
 & \left| \frac{1}{T} \int_0^T F_j^{\mathcal{R}}(\Phi^{(m)}(0) + \omega t)\, dt \right| \leq C_j \sqrt{\varepsilon} \, ,
 \\ \notag 
 & \left| \frac{1}{T} \int_0^T F_j^{\mathcal{N}}(\Phi^{(m)}(0) + \omega t)\, dt \right|   \leq D_jT^{-1}\, ,
 \end{align}
 with $C_j$ and $D_j$ as given in Lemma \ref{lem:approxlemma}.   Equation \eqref{eq:DeltaintegralPhi} thus reduces to
\begin{align}  \label{eq:deltamequation}    
\Delta^{(m)}  =  \sum_{k\in \mathcal{K}} A_{j,k}^{(1)} e^{i\langle k, \Phi^{(m)}(0)\rangle} +  \rho^{(m)} \ \mbox{for all} \ 1\leq m \leq M\, .
\end{align}
with 
\begin{equation}\label{eq:rhomestimate}
| \rho^{(m)}| \leq \mathfrak{c}_j:= A_j \varepsilon T + \widetilde B_j\varepsilon + C_j \sqrt{\varepsilon} + \widetilde D_jT^{-1}\, , 
\end{equation}
for the new constant 
\begin{equation}\label{eq:Djtilde}
\widetilde D_j:=D_j+2L_j \, . 
\end{equation}
Note that the bound $\mathfrak{c}_j$ defined in \eqref{eq:rhomestimate} is of the order $\sqrt{\varepsilon}$ when $T \sim \frac{1}{\sqrt{\varepsilon}}$. More specifically, 
\begin{equation}\label{eq:Cjdef}
| \rho^{(m)}| \leq \mathfrak{c}_j = \mathfrak{C}_j\sqrt{\varepsilon} \ \mbox{when}\ T=\frac{1}{\sqrt{\varepsilon}}\, ,
\ \mbox{for}\ \mathfrak{C}_j:=A_j+\widetilde B_j + C_j + \widetilde D_j \, .
\end{equation}
Because the upper bound $\mathfrak{c}_j$ in \eqref{eq:rhomestimate} is independent of $m$, it is easy to estimate the norms of vector $\rho=(\rho^{(1)}, \ldots, \rho^{(M)})^T \in\mathbb{R}^M$. Indeed equation \eqref{eq:rhomestimate} and the definitions of the norms imply that
\begin{align}\label{rho1bound}
& \|\rho\|_1  \leq   M \mathfrak{C}_j\sqrt{\varepsilon}  \, , \\ 
\label{rho2bound}
& \|\rho\|_2  \leq   \sqrt{M}\mathfrak{C}_j\sqrt{\varepsilon}  \, , \\
\label{rhoinfbound}
& \|\rho\|_{\infty}  \leq  \mathfrak{C}_j\sqrt{\varepsilon} \, . 
\end{align}
To finish the proof, note that we can write equation \eqref{eq:deltamequation} as the equality of $M$-vectors
\begin{equation}\label{DeltaThetaequation}
\Delta = \Theta(A_j^{(1)}) + \rho\, . 
\end{equation} 
Now we recall that our first order resonant reconstruction method chooses the estimated  Fourier coefficients to be given by $\hat A_j = \Theta^+(\Delta)$, see \eqref{pseudo-inverseformula}. 
Using \eqref{DeltaThetaequation} and the fact that $\Theta^+\Theta = (\Theta^H\Theta)^{-1}\Theta^H\Theta = \rm{Id}_{\mathbb{C}^K}$, it follows that 
$$\hat A_j - A_j^{(1)} = \Theta^+(\Delta) -\Theta^+\Theta(A_j^{(1)}) = \Theta^+(\Delta) - \Theta^+(\Delta) + \Theta ^+(\rho) = \Theta ^+(\rho) \, ,$$
 and in particular that 
 \begin{equation}\label{eq:generalbound}
 \|\hat A_j - A_j^{(1)}\|_* \leq \|\Theta^+\|_* \| \rho\|_* \ \mbox{for} \ * \in \{1,2,\infty\}\, .
 \end{equation}
 The theorem follows by combining \eqref{eq:generalbound} with \eqref{rho1bound}, \eqref{rho2bound} and \eqref{rhoinfbound}.
\end{proof}
\noindent The norms $\|\Theta^+\|_*$ depend on the initial values $\Phi^{(m)}(0)$ of the observations. In (numerical) experiments,  these norms can be computed explicitly, thus making the error bounds for the reconstructed Fourier coefficients provided by Theorem \ref{general_thm} fully explicit.  The numbers $\|\Theta^+\|_*$ thus provide an a posteriori measure of the quality of the data on which the reconstruction is based.
\begin{remark}
To gain intuition on how the operator norms $\|\Theta^+\|_*$ depend on the initial data $\Phi^{(m)}(0)$, let us remark that the pseudo-inverse $\Theta^{+} = (\Theta^H \Theta)^{-1}\Theta^H$ can be written as the product
$$\Theta^{+} = S^{-1} T , \  \mbox{for}\ S := \frac{1}{M}\Theta^{H}\Theta  \ \mbox{and} \ T:= \frac{1}{M} \Theta^{H} \, . $$
The entries of the $K\times K$-matrix $S$ are given by 
$$ S_{p,q} = \frac{1}{M}\sum_{m=1}^{M} e^{i\langle k_q-k_p, \Phi^{(m)}(0) \rangle} \, .$$
If the initial data $\Phi^{(m)}(0)$ (for $m=1, \ldots, M$) were selected randomly from the uniform distribution on $\mathbb{T}^n$ and  $M\to\infty$, then $S$ would almost surely be close to the $K\times K$-identity matrix ${\rm Id}_{\mathbb{C}^K}$ by the law of large numbers. The norms $\|S-{\rm Id}_{\mathbb{C}^K}\|_*$ are thus a measure of how well-distributed the initial conditions are over $\mathbb{T}^n$. Lemma \ref{deltalemma} gives bounds on the operator norms of $\Theta^+$ in terms of these norms.   
\begin{lemma}\label{deltalemma} 
Let $0<\delta < 1$ be a constant. 
    \begin{enumerate}
    \item If
    $
    \| S - {\rm Id}_{\mathbb{C}^K} \|_{1} \leq \delta,
    $
    then
    $
    \|\Theta^{+}\|_{1} \leq \frac{K}{M} \frac{1}{1-\delta}
    $.
     \item If
    $
    \| S - {\rm Id}_{\mathbb{C}^K} \|_{2} \leq \delta,
    $
    then
    $
    \|\Theta^{+}\|_{2} \leq \frac{1}{\sqrt{M}} \frac{\sqrt{1+\delta}}{1-\delta}$.
        \item If
    $
    \| S - {\rm Id}_{\mathbb{C}^K} \|_{\infty} \leq \delta,
    $
    then
    $
    \|\Theta^{+}\|_{\infty} \leq \frac{1}{1 - \delta}
    $.
    \end{enumerate}
\end{lemma}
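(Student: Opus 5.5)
The plan is to use the factorization $\Theta^+ = S^{-1}T$ from the preceding remark, with $S=\frac1M\Theta^H\Theta$ and $T=\frac1M\Theta^H$. Sub-multiplicativity of the operator norms gives
$$\|\Theta^+\|_*\leq \|S^{-1}\|_*\,\|T\|_*\quad\mbox{for}\ *\in\{1,2,\infty\}.$$
I would then bound the two factors separately.

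\textbf{Step 1: the factor $S^{-1}$.} Write $S = {\rm Id}_{\mathbb{C}^K} - E$ with $E := {\rm Id}_{\mathbb{C}^K}-S$, so that $\|E\|_*\leq\delta<1$ by hypothesis. The Neumann series $\sum_{\ell\geq 0}E^\ell$ then converges absolutely in the given operator norm. It is the inverse of $S$, which in particular shows that $S$ is invertible and hence that $\Theta$ is injective. Summing the geometric series and using sub-multiplicativity gives
$$\|S^{-1}\|_*\leq \sum_{\ell\geq0}\|E\|_*^\ell\leq \frac{1}{1-\delta}.$$
This argument is identical for all three norms.

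\textbf{Step 2: the factor $T$.} Here I use only that every entry of $\Theta$ has modulus one. Thus $T$ is a $K\times M$ matrix all of whose entries have modulus $1/M$.
\begin{itemize}
\item For the $l_1$ operator norm, the maximal absolute column sum is $K\cdot\frac1M$, so $\|T\|_1 = K/M$. Combined with Step 1 this gives item 1.
\item For the $l_\infty$ operator norm, the maximal absolute row sum is $M\cdot\frac1M=1$, so $\|T\|_\infty=1$. This gives item 3.
\end{itemize}
When applying the norm formulas \eqref{eq:1opnormdef} and \eqref{eq:infopnormdef}, I will check that the indices match the standard conventions: column sums for $\|\cdot\|_1$ and row sums for $\|\cdot\|_\infty$. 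The indices in those displays look interchanged, but the stated conclusions $K/M$ and $1$ correspond exactly to the standard conventions.

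\textbf{Step 3: the $l_2$ case.} Here I use the $C^*$-identity $\|T\|_2^2=\lambda_{\max}(TT^H)$. Since
$$TT^H=\frac{1}{M^2}\Theta^H\Theta=\frac1M S,$$
we get $\|T\|_2^2 = \frac1M\|S\|_2$. By the triangle inequality, $\|S\|_2\leq \|{\rm Id}_{\mathbb{C}^K}\|_2+\|S-{\rm Id}_{\mathbb{C}^K}\|_2\leq 1+\delta$. Hence $\|T\|_2\leq\sqrt{(1+\delta)/M}$, and together with Step 1 this yields item 2.

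The only step requiring real care is Step 3, where one must recognize $TT^H$ as a multiple of $S$ rather than bounding $\|T\|_2$ crudely (for instance by the Frobenius norm, which would lose a factor $\sqrt K$). Once that identity is noticed, the rest of the proof is routine.
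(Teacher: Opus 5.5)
Your proposal is correct and follows the paper's proof essentially step for step. It uses the factorization $\Theta^+=S^{-1}T$, the bound $\|S^{-1}\|_*\le \frac{1}{1-\delta}$ (which you justify via the Neumann series, where the paper only asserts it), the direct computations $\|T\|_1=K/M$ and $\|T\|_\infty=1$, and the identity $\|T\|_2^2=\frac1M\|S\|_2\le\frac{1+\delta}{M}$; you read this last identity off from $TT^H=\frac1M S$, whereas the paper passes through $T^HT=\frac{1}{M^2}\Theta\Theta^H$ and $\lambda_{\max}(\Theta\Theta^H)=\lambda_{\max}(\Theta^H\Theta)$. Your remark about the index ranges in the displayed operator-norm formulas is fair, and both your proof and the paper's actually use the standard column-sum and row-sum conventions.
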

\begin{proof}
Recall that $\Theta^{+} = S^{-1}T$. The assumption that $\|S-{\rm Id}_{\mathbb{C}^K}\|_* \leq \delta<1$ implies that $S$ is invertible and that $\|S^{-1}\|_*\leq \frac{1}{1-\delta}$. Thus, it remains to find an estimate on the norms of $T: \mathbb{C}^M\to\mathbb{C}^K$. We recall that the matrix coefficients  of this map are given by 
$$T_{p, m}=\frac{1}{M} e^{-i\langle k_p, \Phi^{(m)}(0)\rangle}\, .$$
\begin{enumerate}
    \item[{\it 1.}] It is clear that
$$\|T\|_1 =\max_m \frac{1}{M}\sum_{p=1}^K \left|e^{-i\langle k_p, \Phi^{(m)}(0)\rangle} \right| = \frac{K}{M}.$$ 
Therefore, $\|\Theta^{+}\|_{1} \leq \|S^{-1}\|_{1} \|T\|_{1}   \leq \frac{K}{M} \frac{1}{1-\delta}$. 
     \item[{\it 2.}] 
Because $T=\frac{1}{M}\Theta^{H}$, it holds that $T^{H}T = \frac{1}{M^2}\Theta\Theta^{H}$. Therefore,  
\begin{align}
\|T\|_2^2  & =  \lambda_{\max}(T^{H} T) =  
\frac{1}{M^2} \lambda_{\max}(\Theta \Theta^{H})
\nonumber \\ \nonumber 
& = \frac{1}{M^2} \lambda_{\max}(\Theta^{H} \Theta)= \frac{1}{M} \lambda_{\max}(S) =\frac{1}{M} \|S\|_2.
\end{align}
Thus, $\|\Theta^{+}\|_{2} \leq \|S^{-1}\|_{2} \|T\|_{2} =  \|S^{-1}\|_{2} \sqrt{\|S\|_{2}/M} \leq  \frac{1}{\sqrt{M}} \frac{\sqrt{1+\delta}}{1-\delta}$.
      \item[{\it 3.}] It is clear that
$$\|T\|_\infty =\max_p  \frac{1}{M}\sum_{m=1}^M \left|e^{-i\langle k_p, \Phi^{(m)}(0)\rangle} \right| = 1.$$ 
Therefore, $\|\Theta^{+}\|_{\infty} \leq \|S^{-1}\|_{\infty} \|T\|_{\infty}   \leq \frac{1}{1-\delta}$. 
\end{enumerate}
 
\end{proof}

\end{remark}

\section{Example: the Kuramoto equations}
\label{sec:examplefirstorder}
We now apply our first order reconstruction method to a network of classical Kuramoto oscillators of the form
\begin{equation}
\dot{\phi}_j
=
\omega_j
+ \varepsilon (\Delta\omega)_j + 
\varepsilon \sum_{i=1}^n a_{ij}\sin(\phi_i-\phi_j),
\qquad j=1,\dots,n,
\label{eq:nearres_kuramoto}
\end{equation}
where we allow for a small and bounded ``detuning'' in the oscillator frequencies,  satisfying $|\varepsilon(\Delta\omega)_j| \leq \varepsilon \Omega$ for all $1\leq j\leq n$ and some  constant $\Omega>0$. The connection index $a_{ij}\in\{0,1\}$ indicates the presence or absence of an interaction from $i$ to $j$ in the oscillator network. 

Equations \eqref{eq:nearres_kuramoto} are of the form \eqref{equationsofmotionintro}, for  $R^{(2)}_j(\phi, \varepsilon)=0$, and 
\begin{align}\notag
\ F_j^{(1)}(\phi) = \!\! \sum_{k\in \mathbb{Z}^n} \! A_{j, k}^{(1)} e^{i\langle k, \phi \rangle }\ \mbox{with}\ A_{j,k}^{(1)} = \left\{ \!\!\! \begin{array}{rl} (\Delta\omega)_j & \mbox{when}\ k=0\, , \\ a_{ij}/ 2i & \mbox{when} \ k = e_i-e_j\neq 0 \, ,\\  -a_{ij}/ 2i & \mbox{when} \ k = e_j-e_i\neq 0 \, ,\\ 0 & \mbox{otherwise} \, . \end{array} \right. 
\end{align}
Here $e_j\in\mathbb{Z}^n$ denotes the $j$-th standard basis vector, that is, $(e_{j})_j =1$ and  $(e_{j})_i =0$ for $i\neq j$. Note that $a_{ij}= i(A^{(1)}_{j,e_i-e_j}- A^{(1)}_{j,e_j-e_i})$.

Our goal is to estimate the coefficients $a_{ij}$ from noisy observations of solutions to \eqref{eq:nearres_kuramoto}. However, recall that our first order resonant reconstruction method only recovers the resonant part of $F^{(1)}$, which is given by 
$$
\overline{F}^{(1)}_j(\phi) := \sum_{\langle k, \omega \rangle = 0} A_{j,k}^{(1)}e^{i\langle k,\phi\rangle} = \sum_{\omega_i=\omega_j} a_{ij}\sin(\phi_i-\phi_j)  \, .
$$
The reconstruction method will therefore  be able to estimate exactly those $a_{ij}$ for which $\omega_i=\omega_j$. In order to do this, we choose the $j$-dependent reconstruction library 
\begin{equation}
\mathcal{K}_j
=  
\left\{\pm(e_i-e_j)\in\mathbb{Z}^n \ :\\ 1\leq i\leq n \ \mbox{satisfies}\ \omega_i=\omega_j 
\right\}\, ,
\label{eq:nearres_library}
\end{equation}
to reconstruct the equations of motion for $\phi_j$. Note that $0\in \mathcal{K}_j$. The resonant reconstruction method will then yield estimates $$\widehat{(\Delta\omega)}_j:= \hat A_{j,0} \ \mbox{of}\ (\Delta\omega)_j \ \mbox{and} \ \hat a_{ij} : = i(\hat A_{j, e_i-e_j} - \hat A_{j, e_j-e_i}) \ \mbox{of} \ a_{i,j}\, .$$ 

\subsection*{Numerical results}
\label{subsec:kuramoto_network_example}

We now illustrate this reconstruction by  two numerical examples.  
Concretely, we consider \eqref{eq:nearres_kuramoto} on a realization of a directed Erd\H{o}s--R\'enyi network with $n=10$ vertices and edge probability $p=0.2$. This network is shown in Figure \ref{fig:kuramoto_network} and has 
the adjacency matrix 
\begin{equation}
\label{eq:kuramoto_example_adjacency}
a=
\begin{pmatrix}
*&0&0&{\bf 0}&0&0&{\bf 0}&1&{\bf 0}&{\bf 0}\\
1&*&1&0&{\bf 0}&{\bf 1}&0&0&0&0\\
0&1&*&0&1&0&0& {\bf 0}&0&1\\
{\bf 0}&0&0&*&0&1&{\bf 0}&0&{\bf 0}&{\bf 0}\\
0&{\bf 0}&0&1&*&{\bf 0}&1&0&0&0\\
0&{\bf 0}&0&0&{\bf 0}&*&0&1&1&0\\
{\bf 1}&0&0&{\bf 1}&0&0&*&0&{\bf 0}&{\bf 0}\\
0&0&{\bf 0}&0&0&0&0&*&0&0\\
{\bf 1}&0&0&{\bf 1}&0&0&{\bf 0}&0&*&{\bf 0}\\
{\bf 1}&0&0&{\bf 0}&1&0&{\bf 0}&0&{\bf 0}&*
\end{pmatrix}\, .
\end{equation}
In the first numerical experiment we consider equations \eqref{eq:nearres_kuramoto} with coupling strength $\varepsilon=\frac{1}{25}$ and detuned oscillator frequencies 
\begin{align*}\label{detunedFreq}
\omega+\varepsilon\Delta\omega
= 
& (1.9960,\,3.0002,\,1.0035,\,1.9932,\,2.9910, \\
& \,2.9968,\,1.9922,\,0.9936,\,2.0077,\,1.9973)\,  .
\end{align*}
Note that these detuned frequencies deviate from the ``tuned frequencies'' 
\begin{equation} \label{baseFreq}
\omega=(2,3,1,2,3,3,2,1,2,2)
\end{equation}
by an amount $|\varepsilon (\Delta\omega)_j|\leq \frac{\varepsilon}{4}$. In other words, we have that $\Omega\leq \frac14$. The resonant  
 $a_{ij}$ (those for which $\omega_i=\omega_j$) are written in boldface in \eqref{eq:kuramoto_example_adjacency}.
\begin{figure}[ht]
    \centering
    \includegraphics[width=.55\textwidth]{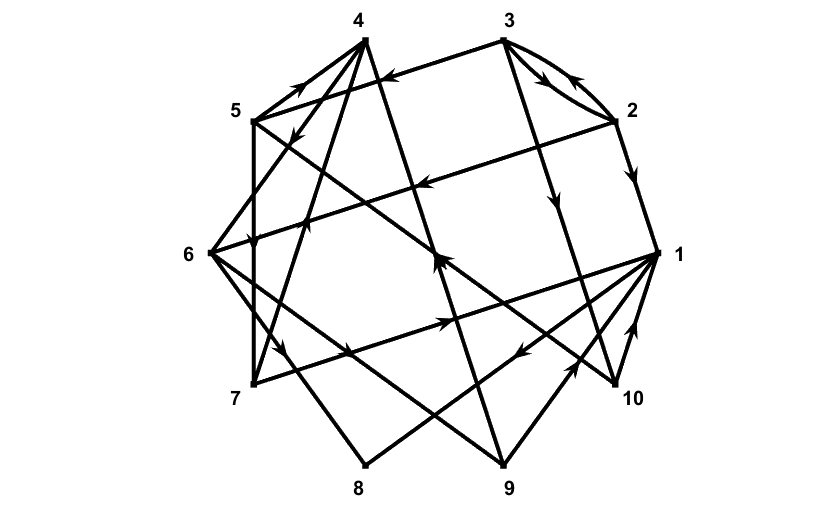}
    \caption{Directed network with adjacency matrix given in \eqref{eq:kuramoto_example_adjacency}.}
    \label{fig:kuramoto_network}
\end{figure}

\noindent To test the resonant reconstruction method, we chose $M=25$ initial conditions $\phi^{(m)}(0)$ randomly from the uniform distribution on $\mathbb{T}^{10}$. Starting from these initial conditions, we  integrated equations  \eqref{eq:nearres_kuramoto} in MATLAB over a time interval $T=\frac{1}{\sqrt{\varepsilon}} = 5$ to obtain the final states $\phi^{(m)}(T)$. Bounded uniform noise of mean zero and magnitude at most $\varepsilon$ was added to  the initial and final states, to produce the noisy observations $\Phi^{(m)}(0)$, $ \Phi^{(m)}(T)$. By design, $|\Phi^{(m)}_j(0)-\phi^{(m)}(0)| \leq \varepsilon$ and $|\Phi^{(m)}_j(T)-\phi^{(m)}(T)| \leq \varepsilon$, i.e., $L_j=1$ for all $j$. We then calculated the noisy rescaled drifts $\Delta^{(m)}= \frac{\Phi^{(m)}_j(T) - \Phi^{(m)}_j(0) - \omega_jT}{\varepsilon T}$ for each oscillator, 
 and computed the reconstructed resonant interaction coefficients $\hat a_{ij} = i(\hat A_{j, e_i-e_j} - \hat A_{j, e_j-e_i})$ using the procedure explained in Section \ref{sec:1method}. We present the resulting reconstructed  coefficients in the left column of Table \ref{tab:coef_table_reduced}.
 Note that each reconstructed coefficient satisfies $|\hat a_{ij}- a_{ij}| \leq 0.1881 < \sqrt{\varepsilon}=\frac{1}{5}$. We thus obtain a  reliable reconstruction of the ``resonant subnetwork'' of the network that generated the  noisy time series.
 
 We  repeated the numerical experiment for the  same network, with the exact same frequencies, but with a smaller coupling strength $\varepsilon=\frac{1}{100}$. Now  $M=100$ initial conditions were chosen uniformly at random, from which equations \eqref{eq:nearres_kuramoto} were integrated over a time $T=\frac{1}{\sqrt{\varepsilon}}=10$. The resulting reconstructed coefficients are presented in the right column of Table \ref{tab:coef_table_reduced}.
 They all satisfy $|\hat a_{ij}- a_{ij}| \leq 0.0118 \ll \sqrt{\varepsilon}=\frac{1}{10}$.
 
\begin{table}[H]
\centering
{\small
\begin{tabular}{cc}
\begin{tabular}{|l|r|r|r|}
\hline
\textbf{Coeff.} & \textbf{Value} & \textbf{True} & \textbf{Error} \\
\hline
$\hat a_{4,1}$   & 0.0625  & 0 & 0.0625 \\
\hline
$\hat a_{7,1}$   & 1.0357  & 1 & 0.0357 \\
\hline
$\hat a_{9,1}$   & 1.0734  & 1 & 0.0734 \\
\hline
$\hat a_{10,1}$  & 1.0164  & 1 & 0.0164 \\
\hline
$\hat a_{5,2}$   & 0.0274  & 0 & 0.0274 \\
\hline
$\hat a_{6,2}$   & -0.0037 & 0 & 0.0037 \\
\hline
$\hat a_{8,3}$   & -0.0870 & 0 & 0.0870 \\
\hline
$\hat a_{1,4}$   & 0.0939  & 0 & 0.0939 \\
\hline
$\hat a_{7,4}$   & 0.9474  & 1 & 0.0526 \\
\hline
$\hat a_{9,4}$   & 0.8119  & 1 & 0.1881 \\
\hline
$\hat a_{10,4}$  & 0.1049  & 0 & 0.1049 \\
\hline
$\hat a_{2,5}$   & -0.0108 & 0 & 0.0108 \\
\hline
$\hat a_{6,5}$   & 0.0846  & 0 & 0.0846 \\
\hline
$\hat a_{2,6}$   & 0.9185  & 1 & 0.0815 \\
\hline
$\hat a_{5,6}$   & 0.0332  & 0 & 0.0332 \\
\hline
$\hat a_{1,7}$   & 0.0915  & 0 & 0.0915 \\
\hline
$\hat a_{4,7}$   & 0.0614  & 0 & 0.0614 \\
\hline
$\hat a_{9,7}$   & -0.0182 & 0 & 0.0182 \\
\hline
$\hat a_{10,7}$  & 0.0717  & 0 & 0.0717 \\
\hline
$\hat a_{3,8}$   & 0.0910  & 0 & 0.0910 \\
\hline
$\hat a_{1,9}$   & 0.0292  & 0 & 0.0292 \\
\hline
$\hat a_{4,9}$   & -0.0708 & 0 & 0.0708 \\
\hline
$\hat a_{7,9}$   & 0.0582  & 0 & 0.0582 \\
\hline
$\hat a_{10,9}$  & 0.0057  & 0 & 0.0057 \\
\hline
$\hat a_{1,10}$  & -0.1123 & 0 & 0.1123 \\
\hline
$\hat a_{4,10}$  & 0.0275  & 0 & 0.0275 \\
\hline
$\hat a_{7,10}$  & 0.0131  & 0 & 0.0131 \\
\hline
$\hat a_{9,10}$  & 0.0624  & 0 & 0.0624 \\
\hline
\end{tabular}
&
\begin{tabular}{|l|r|r|r|}
\hline
\textbf{Coeff.} & \textbf{Value} & \textbf{True} & \textbf{Error} \\
\hline
$\hat a_{4,1}$   & 0.0107 & 0 & 0.0107 \\
\hline
$\hat a_{7,1}$   & 1.0075 & 1 & 0.0075 \\
\hline
$\hat a_{9,1}$   & 0.9893 & 1 & 0.0107 \\
\hline
$\hat a_{10,1}$  & 0.9947 & 1 & 0.0053 \\
\hline
$\hat a_{5,2}$   & 0.0032 & 0 & 0.0032 \\
\hline
$\hat a_{6,2}$   & 0.0016 & 0 & 0.0016 \\
\hline
$\hat a_{8,3}$   & -0.0037 & 0 & 0.0037 \\
\hline
$\hat a_{1,4}$   & 0.0036 & 0 & 0.0036 \\
\hline
$\hat a_{7,4}$   & 0.9917 & 1 & 0.0083 \\
\hline
$\hat a_{9,4}$   & 1.0068 & 1 & 0.0068 \\
\hline
$\hat a_{10,4}$  & -0.0079 & 0 & 0.0079 \\
\hline
$\hat a_{2,5}$   & -0.0024 & 0 & 0.0024 \\
\hline
$\hat a_{6,5}$   & 0.0001 & 0 & 0.0001 \\
\hline
$\hat a_{2,6}$   & 1.0028 & 1 & 0.0028 \\
\hline
$\hat a_{5,6}$   & 0.0003 & 0 & 0.0003 \\
\hline
$\hat a_{1,7}$   & 0.0020 & 0 & 0.0020 \\
\hline
$\hat a_{4,7}$   & -0.0079 & 0 & 0.0079 \\
\hline
$\hat a_{9,7}$   & 0.0034 & 0 & 0.0034 \\
\hline
$\hat a_{10,7}$  & 0.0061 & 0 & 0.0061 \\
\hline
$\hat a_{3,8}$   & 0.0103 & 0 & 0.0103 \\
\hline
$\hat a_{1,9}$   & -0.0099 & 0 & 0.0099 \\
\hline
$\hat a_{4,9}$   & 0.0030 & 0 & 0.0030 \\
\hline
$\hat a_{7,9}$   & -0.0071 & 0 & 0.0071 \\
\hline
$\hat a_{10,9}$  & -0.0087 & 0 & 0.0087 \\
\hline
$\hat a_{1,10}$  & 0.0018 & 0 & 0.0018 \\
\hline
$\hat a_{4,10}$  & -0.0037 & 0 & 0.0037 \\
\hline
$\hat a_{7,10}$  & -0.0070 & 0 & 0.0070 \\
\hline
$\hat a_{9,10}$  & 0.0118 & 0 & 0.0118 \\
\hline
\end{tabular}
\end{tabular}
}
\caption{Reconstructed connection indices for the classical Kuramoto network in Figure \ref{fig:kuramoto_network} with coupling strength $\varepsilon=\frac{1}{25}$ (left) and $\varepsilon=\frac{1}{100}$ (right), respectively based on $M=25$ noisy observations (left) and  $M=100$ noisy observations (right). The true indices are included for comparison.}
\label{tab:coef_table_reduced}
\end{table}

\noindent 
To give an impression of the quality of the numerically obtained time series data, we show the operator norms of the pseudo-inverse $\Theta^+$ in Table \ref{tab:pseudoinverse_norms}.
\begin{table}[H]
\centering
{\small
\begin{tabular}{cc}
\begin{tabular}{|c|c|c|c|}
\hline
\textbf{Node} & \textbf{$\|\Theta^+\|_1$} & \textbf{$\|\Theta^+\|_2$} & \textbf{$\|\Theta^+\|_\infty$} \\
\hline
1  & 0.11666  & 0.13422  & 1.0447 \\
\hline
2  & 0.060023 & 0.11109  & 1.0078 \\
\hline
3  & 0.034191 & 0.10677  & 1.0025 \\
\hline
4  & 0.13197  & 0.14438  & 1.0434 \\
\hline
5  & 0.057883 & 0.10861  & 1.0035 \\
\hline
6  & 0.060877 & 0.11266  & 1.0087 \\
\hline
7  & 0.12615  & 0.1265   & 1.0237 \\
\hline
8  & 0.034191 & 0.10677  & 1.0025 \\
\hline
9  & 0.12067  & 0.13222  & 1.0281 \\
\hline
10 & 0.11993  & 0.13194  & 1.0306 \\
\hline
\end{tabular}
&
\begin{tabular}{|c|c|c|c|}
\hline
\textbf{Node} & \textbf{$\|\Theta^+\|_1$} & \textbf{$\|\Theta^+\|_2$} & \textbf{$\|\Theta^+\|_\infty$} \\
\hline
1  & 0.84499 & 0.6075  & 1.5011 \\
\hline
2  & 0.35756 & 0.27967 & 1.0257 \\
\hline
3  & 0.29472 & 0.32319 & 1.0422 \\
\hline
4  & 0.73352 & 0.49288 & 1.3396 \\
\hline
5  & 0.38461 & 0.28213 & 1.0256 \\
\hline
6  & 0.3397  & 0.27719 & 1.0293 \\
\hline
7  & 0.72467 & 0.4719  & 1.1864 \\
\hline
8  & 0.29472 & 0.32319 & 1.0422 \\
\hline
9  & 0.76565 & 0.46992 & 1.2023 \\
\hline
10 & 0.7613  & 0.40911 & 1.1522 \\
\hline
\end{tabular}
\end{tabular}
}
\caption{Operator norms of $\Theta^+$ for each node.}
\label{tab:pseudoinverse_norms}
\end{table}

\noindent The original Kuramoto network, its resonant subnetwork, and its two reconstructions are visualized in Figure  \ref{fig:rec_m100}.

\begin{figure}[H]
    \centering
    \includegraphics[width=0.95\textwidth]{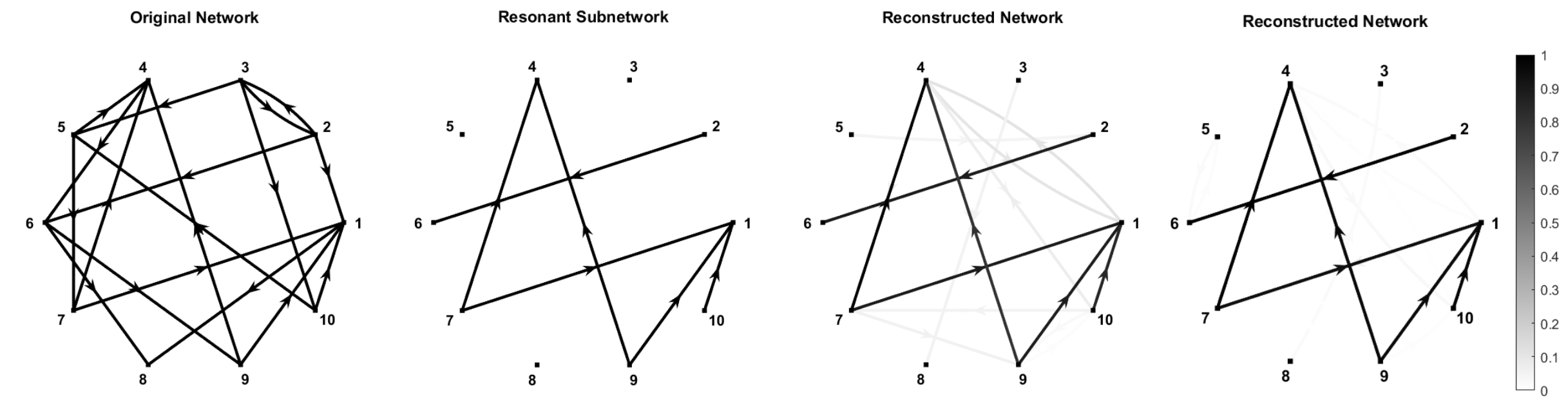}
    \caption{Reconstruction of a Kuramoto network from numerical time series data. From left to right: original network that generated the time series; resonant subnetwork; reconstruction of the resonant subnetwork for $\varepsilon=\frac{1}{25}$ from $M=25$ observations;  reconstruction of the resonant subnetwork for $\varepsilon=\frac{1}{100}$ from $M=100$ observations.}
    \label{fig:rec_m100}
\end{figure}

\subsection*{An explicit error bound}
For equations \eqref{eq:nearres_kuramoto} we now derive an explicit theoretical bound on the reconstruction errors $|\hat a_{ij}-a_{ij}|$. Recall that 
 Theorem \ref{general_thm} gives a bound on the errors $\|\hat A_j-A_j^{(1)}\|_{\infty}$ in the reconstructed complex Fourier coefficients. These in turn lead to error bounds for the estimated $\hat a_{i,j}$, because  
\begin{align}\notag 
& | \hat a_{ij} - a_{ij} | =  |\hat A_{j,e_i-e_j} - \hat A_{j, e_j-e_i} - A^{(1)}_{j,e_i-e_j}+  A^{(1)}_{j,e_j-e_i} |    \\ \notag
& \leq |\hat A_{j,e_i-e_j} - A^{(1)}_{j,e_i-e_j} | + |\hat A_{j,e_j-e_i} - A^{(1)}_{j,e_j-e_i} | \\ \notag 
& \leq 2 \| \hat A_j - A_j^{(1)}\|_{\infty} \leq 2\mathfrak{C}_j \|\Theta^+\|_{\infty} \sqrt{\varepsilon} \, ,   
\end{align}
where the final inequality follows from  part {\it 3} of Theorem \ref{general_thm}. The following proposition provides an explicit upper bound for the constant $\mathfrak{C}_j$.  For simplicity, we assume  that the pure (non-detuned) frequencies satisfy
\begin{equation}
    \label{omegainequality}
\omega_i\neq \omega_j \implies |\omega_i-\omega_j| \geq 1\, .
\end{equation}
\begin{proposition}\label{CjpropKuramoto}
 Consider the classical Kuramoto system \eqref{eq:nearres_kuramoto}. Denote by $d_j:=\sum_{i=1}^n a_{ij}$  the in-degree of cell $j$, and by $d_{\rm max}:=\max_j d_j$ the maximal in-degree. Then the constant $\mathfrak{C}_j$ in Theorem \ref{general_thm} is bounded by 
 \begin{equation}\label{eq:mathfrakCjKuramoto} 
  \mathfrak{C}_j \leq \mathfrak{C}_j^{\rm max}:= \frac{1}{2}d_{j}(d_{\rm max}+4+2L+\Omega) + 2L \, .
  \end{equation}
\end{proposition}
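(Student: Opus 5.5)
The plan is to compute each ingredient of $\mathfrak{C}_j=A_j+\widetilde B_j+C_j+\widetilde D_j$ from \eqref{eq:Cjdef} explicitly for \eqref{eq:nearres_kuramoto}, with $L_i=L$ for all $i$. The pieces are $A_j$ from \eqref{eq:Aj}, $\widetilde B_j$ from \eqref{eq:Bjtilde}, $C_j$ from Lemma \ref{lem:approxlemma}, and $\widetilde D_j$ from \eqref{eq:Djtilde}. I expect the structural simplifications to make most terms trivial, and the one real task to be tracking numerical factors.

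\textbf{Step 1 (vanishing terms).} Since $R^{(2)}_j\equiv 0$, we get $B_j=\|R^{(2)}_j\|_0=0$. The library $\mathcal{K}_j$ in \eqref{eq:nearres_library} contains $0$ and every resonant label $\pm(e_i-e_j)$ with $\omega_i=\omega_j$. These are all the resonant Fourier modes of $F^{(1)}_j$, so $F_j^{\mathcal{R}}\equiv 0$ and we may take $C_j=0$.

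\textbf{Step 2 (nonresonant part).} We have $F^{\mathcal{N}}_j=\sum_{\omega_i\neq\omega_j}a_{ij}\sin(\phi_i-\phi_j)$. Integrating each term along $\phi+\omega t$ gives at most $2a_{ij}/|\omega_i-\omega_j|\le 2a_{ij}$, by \eqref{omegainequality}. Summing over $i$, we may take $D_j=2d_j$. This is consistent with \eqref{eq:nonresonantestimate}: the two Fourier coefficients $\pm a_{ij}/2i$ each have divisor of modulus at least $1$. Hence $\widetilde D_j=2d_j+2L$, which produces the terms $\tfrac12 d_j\cdot 4$ and $2L$ in \eqref{eq:mathfrakCjKuramoto}.

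\textbf{Step 3 (Lipschitz terms).} Here $\partial_iF^{(1)}_j=a_{ij}\cos(\phi_i-\phi_j)$ for $i\ne j$, and $\partial_jF^{(1)}_j=-\sum_i a_{ij}\cos(\phi_i-\phi_j)$. Also $\|R^{(1)}_i\|_0=\|F^{(1)}_i\|_0\le \Omega+d_i\le\Omega+d_{\max}$.

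A crude application of \eqref{eq:Aj} gives $A_j\le d_j(d_{\max}+\Omega)$, and \eqref{eq:Bjtilde} gives $\widetilde B_j\le 2d_jL$. Both are a factor $2$ too large compared with $\tfrac12 d_j(d_{\max}+\Omega)$ and $d_jL$.

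To recover the sharper constants, I would not apply the general Lemma \ref{simplelemma} bound verbatim. Instead I would redo the Taylor step \eqref{eq:rjbar} using that $F^{(1)}_j$ depends only on phase differences:
\[
|F^{(1)}_j(\phi+r)-F^{(1)}_j(\phi)|\le\sum_i a_{ij}\,|r_i-r_j| .
\]
The aim is then to bound the increments $r_i-r_j$, or the observation errors $(\phi_i-\Phi_i)-(\phi_j-\Phi_j)$, by $\varepsilon t(\Omega+d_{\max})$, respectively by $L\varepsilon$, rather than by twice these quantities. Plausible ways to do this are to exploit cancellation of a common drift, or to use the exact form of the noise assumption.

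\textbf{Main obstacle.} This factor-$\tfrac12$ bookkeeping in Step 3 is where I expect the difficulty. If the refined difference estimate does not give it directly, I would check whether the intended counting uses $\sum_{i\neq j}\|\partial_iF_j\|_0\le d_j$, that is, whether the diagonal derivative is absorbed by working in relative phases. Failing that, I would accept the constant from the crude bound and adjust \eqref{eq:mathfrakCjKuramoto} accordingly. Summing the four contributions then yields \eqref{eq:mathfrakCjKuramoto}.
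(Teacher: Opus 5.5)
Your Steps 1 and 2 match the paper's proof exactly: $B_j=0$, $C_j=0$, $D_j=2d_j$, $\widetilde D_j=2d_j+2L$. The paper's Step 3 is precisely the ``intended counting'' you suspected. It notes $\|\partial_iF_j^{(1)}\|_0=a_{ij}$ for $i\neq j$, and then writes $A_j\le\frac12\sum_i a_{ij}(\Omega+\sum_k a_{ki})$ and $\widetilde B_j\le\sum_i a_{ij}L$. In other words, it simply omits the $i=j$ summand in \eqref{eq:Aj} and \eqref{eq:Bjtilde}. That omission is not justified: both formulas sum over all $1\le i\le n$, and $\|\partial_jF_j^{(1)}\|_0=\bigl\|\sum_i a_{ij}\cos(\phi_i-\phi_j)\bigr\|_0=d_j$. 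Your ``crude'' values $A_j\le d_j(d_{\rm max}+\Omega)$ and $\widetilde B_j\le 2d_jL$ are therefore the correct evaluation of the constants as defined. The factor $\frac12$ in \eqref{eq:mathfrakCjKuramoto} comes from an error in the paper's computation, not from a sharper argument you are missing.

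Consequently your Step 3 cannot be completed as planned, and the gap in your proposal is real, for two reasons.

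First, sharpening the Taylor step is beside the point. $\mathfrak C_j$ is a fixed number defined by \eqref{eq:Aj}, \eqref{eq:Bjtilde} and \eqref{eq:Cjdef}. A better version of Lemma \ref{simplelemma} would change the constant that appears in Theorem \ref{general_thm}, not the value of $\mathfrak C_j$.

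Second, the phase-difference idea gains nothing in the worst case. With coordinatewise noise, $|(\phi_i-\Phi_i)-(\phi_j-\Phi_j)|$ can equal $2L\varepsilon$. Passing to relative phases replaces $\|F_i^{(1)}\|_0$ by $\|F_i^{(1)}-F_j^{(1)}\|_0$, which can be as large as $2\Omega+d_i+d_j$.

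In fact the stated inequality is false for the constant as defined. Suppose all in-neighbours $i$ of $j$ have $d_i=d_j=d_{\rm max}$ and all $|(\Delta\omega)_i|=\Omega$, e.g.\ two mutually coupled nodes with $\omega_1=\omega_2$. With the proof's choices $C_j=0$, $D_j=2d_j$ one gets exactly $\mathfrak C_j=d_j(d_{\rm max}+2+2L+\Omega)+2L$. This exceeds $\mathfrak C_j^{\rm max}$ by $\frac12 d_j(d_{\rm max}+\Omega)+d_jL>0$. If moreover all in-neighbours are resonant with $j$, $D_j$ may be taken arbitrarily small, and the bound still fails as soon as $d_{\rm max}+\Omega+2L>4$.

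Your fallback is therefore the right outcome. The argument proves $\mathfrak C_j\le d_j(d_{\rm max}+2+2L+\Omega)+2L$, and \eqref{eq:mathfrakCjKuramoto} should be corrected to that.
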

    \begin{proof}
    Recall that the constant $\mathfrak{C}_j$ appearing in Theorem \ref{general_thm} is given by 
    $\mathfrak{C}_j=A_j+\widetilde B_j + C_j + \widetilde D_j$ -- see \eqref{eq:Cjdef}. So it suffices to recall the definitions of these constants, and to compute them for the  example of the Kuramoto equations \eqref{eq:nearres_kuramoto}. 
To do this, note that \eqref{eq:nearres_kuramoto} is of the form \eqref{eq:eqmotion} with $R_j^{(2)}=0$, so that  $R_j^{(1)}:= F_j^{(1)} + \varepsilon R_j^{(2)} = F_j^{(1)}$. This implies that $B_j = 0$, see \eqref{eq:Bj}. It is also clear from \eqref{eq:nearres_kuramoto} that $\|R_j^{(1)}\|_0 = \|F_j^{(1)}\|_0 \leq \Omega + \sum_{i}a_{ij} = \Omega + d_j$, and that $\| \partial_iF_j^{(1)}\|_0 = a_{ij}$ for $i\neq j$. It   thus follows from \eqref{eq:Aj} that 
$$A_j\leq \frac{1}{2}  \sum_{i}a_{ij}(\Omega + \sum_k a_{ki})  \leq  \frac{1}{2}d_j (\Omega +  d_{\rm max}) ,$$
and from \eqref{eq:Bjtilde} that
$$\widetilde B_j \leq \sum_i a_{ij} L =  d_j L  \, .$$  
Next, we split $F_j^{(1)} = F_j^{\mathcal{K}} + F_j^{\mathcal{R}} + F_j^{\mathcal{N}}$ in its library-part, its remaining resonant part, and its nonresonant part. Because the library $\mathcal{K}$ was chosen to contain all resonant terms in $F^{(1)}_j$, we have that $F_j^{\mathcal{R}}=0$, and thus that $C_j=0$, see \eqref{Fremainingresonantintegralbound}. Furthermore, 
$$F_j^{\mathcal{N}}(\phi) = \sum_{\omega_i\neq \omega_j} a_{ij}\sin(\phi_i-\phi_j) = \sum_{\langle k, \omega \rangle \neq 0} A_{j,k}^{(1)}e^{i\langle k,\phi\rangle} \, .$$
Using \eqref{omegainequality}, it follows that
$$\sum_{\langle k, \omega \rangle \neq 0} \left| \frac{A_{j,k}^{(1)}}{\langle k, \omega \rangle} \right| \leq \sum_{\langle k, \omega \rangle \neq 0} \left| A_{j,k}^{(1)} \right| = \sum_{\omega_i\neq \omega_j} a_{ij}   \leq d_j   \, , $$
 so we may choose 
$$D_j= 2d_j  \, ,$$
see \eqref{eq:nonresonantestimate},  and $\widetilde{D}_j = D_j + 2L$, see \eqref{eq:Djtilde}. Together, our estimates for $A_j, \widetilde{B}_j, C_j$ and $\widetilde{D}_j$ yield formula \eqref{eq:mathfrakCjKuramoto}. 
\end{proof}

\section{Second order resonant reconstruction}
\label{sec:2ndordersection}
The first order reconstruction method that we presented in the first part of this paper allows us to detect first order (in $\varepsilon$) resonant coupling terms in phase oscillator systems of the form \eqref{equationsofmotionintro}. For instance, when $\omega_i - \omega_j \sim \varepsilon$, the method can identify terms in the differential equations of the form $\varepsilon a_{ij}\sin(\phi_i-\phi_j)$. By design, the method cannot detect nonresonant interactions. However, the interplay between nonresonant interactions at order $\sim \varepsilon$ can lead to dynamical behavior equivalent to that of a resonant interaction at order $\sim \varepsilon^2$. In the absence of resonant first order interactions, our second order resonant reconstruction method can detect such ``emergent'' second order interactions. In Appendix \ref{app:2ndorderNF},  we explore how these emergent interactions are shaped by nonresonant first order interactions. For instance, we show how nonresonant dyadic interactions at order $\sim \varepsilon$ combine to form resonant triplet interactions at order $\sim \varepsilon^2$. 

The key technical result underlying our second order resonant reconstruction method is the second order resonant normal form theorem. This theorem states that for any coupled oscillator system of the form \eqref{equationsofmotionintro} there exists a change of variables (diffeomorphism)
$$\Gamma^{\varepsilon}_{H^{(1)}}: \phi \mapsto \psi:= \phi + \varepsilon H^{(1)}(\phi)  + \mathcal{O}(\varepsilon^2) \, , $$
defined for all $\phi\in \mathbb{T}^n$ and for small enough values of $\varepsilon$, with the property that $\phi(t)$  satisfies \eqref{equationsofmotionintro} if and only if 
$\psi(t):=\Gamma^{\varepsilon}_{H^{(1)}}(\phi(t))$ satisfies an ODE
\begin{equation}\label{psievolution}
\dot \psi_j = \omega_j + \varepsilon \overline{F}_j^{(1)}(\psi) + \varepsilon^2\overline{F}_j^{(2)}(\psi) + \varepsilon^3\overline{R}_j^{(3)}(\phi, \varepsilon)\, ,
\end{equation}
 such that $\overline{F}^{(1)}$ and $\overline{F}^{(2)}$ are sums of resonant Fourier terms only, i.e., 
 $$\overline{F}_j^{(1)}(\phi) = \sum_{\langle k, \omega\rangle = 0} C_{j,k}^{(1)}e^{i\langle k,\phi\rangle}\ \mbox{and} \  \overline{F}_j^{(2)}(\phi) = \sum_{\langle k, \omega\rangle = 0} C_{j,k}^{(2)}e^{i\langle k,\phi\rangle}\, , $$
 for certain Fourier coefficients $C_{j,k}^{(1)}, C_{j,k}^{(2)}$. We prove this fact in Appendices \ref{app:Lie}, \ref{sec:normalformappendix} and \ref{app:2ndorderNF}. In particular, the solutions of \eqref{equationsofmotionintro} and \eqref{psievolution} differ only by an amount of the order $\sim\varepsilon$.  The Fourier coefficients of $\overline{F}^{(1)}$ and $\overline{F}^{(2)}$ can be expressed in terms of the  Fourier coefficients of the original vector field $F^{(1)}$ and $F^{(2)}$. More precisely, we show in Theorem \ref{first_order_nf_thm} that
  $$C_{j,k}^{(1)}= A_{j,k}^{(1)} \ \mbox{for all} \ k\in \mathbb{Z}^n \ \mbox{with} \ \langle k, \omega\rangle = 0\, ,$$ see  also \eqref{eq:firstordernfformal}, while the formula for the $C_{j,k}^{(2)}$ is more involved. If it so happens that $\overline{F}^{(1)}=0$, that is, if $F^{(1)}$ does not contain any resonant terms, then \eqref{psievolution} reduces to
  \begin{equation*}
      \dot \psi_j = \omega_j + \varepsilon^2 \overline{F}_j^{(2)}(\psi) + \varepsilon^3\overline{R}_j^{(3)}(\phi, \varepsilon)\, .
  \end{equation*}
  For this specific case, we provide a formula for the $C_{j,k}^{(2)}$  in Theorem \ref{second_order_nf_thm}. 
As was the case for the first-order method, our second-order resonant reconstruction method exploits the approximate linear relation between  observed phase drifts and  Fourier  coefficients of the normal form.
Let us again assume that we are  given reasonable estimates of the coupling strength $\varepsilon$ and
the oscillator frequencies $\omega_1, \ldots \omega_n$,  
and that we have access to noisy phase signals 
$$\Phi^{(m)}(t)\in\mathbb{R}^n \ \mbox{with}\ 1\leq m \leq M\, ,$$
 this time at times $$t=0 \ \mbox{and} \ t=T:=\frac{1}{\varepsilon\sqrt{\varepsilon}}\, .$$
As before, the noise in these signals is assumed small and bounded, that is,  there are constants $L_j>0$ such that for every $1\leq m \leq M$, there is a (lift of an) exact solution $\phi^{(m)}(t)\in \mathbb{R}^n$ to  \eqref{equationsofmotionintro} 
with 
\begin{equation}\label{eq:approxphiagain}
| \phi^{(m)}_j(0)-\Phi^{(m)}_j(0)| \leq L_j\varepsilon\ \mbox{and} \ | \phi^{(m)}_j(T)-\Phi^{(m)}_j(T)| \leq L_j\varepsilon\, .
\end{equation}
Our second order reconstruction method can then be summarized as follows: 
\begin{enumerate}
    \item[{\bf 1)}] Fix an oscillator $1\leq j \leq n$ and calculate the rescaled observed drifts
    \[
    {\Delta}^{(m)} :=  \frac{\Phi^{(m)}_j(T) - \Phi^{(m)}_j(0) -  \omega_j T}{\varepsilon^2 T} \in \mathbb{R} \ \ \mbox{for}\  1\leq m \leq M\, .
    \]
    These drifts together form a (column) vector $\Delta\in \mathbb{R}^M$.  

    \item[{\bf 2)}] Select a  finite library of {\it resonant} Fourier labels 
    $$ \mathcal{K} = \{k_1, \ldots, k_K\}    \subset \{k\in \mathbb{Z}^n\,| \, \langle k,  \omega\rangle = 0\} \, ,$$ 
    where $K:=|\mathcal{K}|$, and create the library matrix
    $$ \Theta := \left( \begin{array}{ccc}
   e^{i\langle k_1, \Phi^{(1)}(0) \rangle} & \hdots & e^{i\langle k_K, \Phi^{(1)}(0)\rangle}  \\
    \vdots & \ddots & \vdots \\
    e^{i\langle k_1, \Phi^{(M)}(0)\rangle} & \hdots & e^{i\langle k_K, \Phi^{(M)}(0)\rangle} 
 \end{array} \right)  : \mathbb{C}^{K} \to \mathbb{C}^M\, .$$
We assume that  $\Theta$ is injective. 
    \item[{\bf 3)}] List the resonant Fourier coefficients $C_{j,k}^{(2)}$ (with $k\in \mathcal{K}$) in the  vector $$C_j^{(2)} : = (C^{(2)}_{j, k_1}, \ldots, C^{(2)}_{j, k_K})^T \in\mathbb{C}^K\, .$$ 
    We estimate this vector by the vector
    $$\hat C_j = (\hat C_{j, k_1}, \ldots, \hat C_{j, k_K})^T \in\mathbb{C}^K$$ 
    defined by  
    $$\hat C_j = {\arg\min}_{B\in \mathbb{C}^K} \sum_{m=1}^M  \left|   \Delta^{(m)} - \sum_{i=1}^{K}    B_{k_i} e^{i\langle k_i, \Phi^{(m)}(0)\rangle}   \right|^2 \in \mathbb{C}^K \, . 
    $$
    This minimizer is given by
    \begin{equation}\label{pseudo-inverseformula2nd}
     \hat C_j = \Theta^+ (\Delta) \in \mathbb{C}^K\, .
      \end{equation}
\end{enumerate}
In the next section, we prove that the estimators $\hat C_{j,k}$ are good approximations of the true normal form coefficients $C_{j,k}^{(2)}$ under some natural conditions. This proof is highly similar to the corresponding proof for the first order reconstruction method.

\section{Validity of second order reconstruction}\label{sec:validity2norder}
The main result of this section is Theorem \ref{thm:2ndordermainthm}, which is the analogue of Theorem \ref{general_thm} for the second order resonant reconstruction method. The statement in Theorem \ref{thm:2ndordermainthm}  may appear complicated, but its proof is highly similar to that of Lemmas \ref{simplelemma} and \ref{lem:approxlemma}  and Theorem \ref{general_thm}. 

Note that it is not assumed in Theorem \ref{thm:2ndordermainthm} that   equations  \eqref{equationsofmotionintro} can be brought into normal form exactly (in which case $\overline{F}^{(1)}=\overline{F}^{\mathcal{N}}=0$). Instead, it suffices that the equations can be transformed into normal form approximately. 

\begin{theorem}\label{thm:2ndordermainthm}
  For  $|\varepsilon| \leq \varepsilon_0$  consider the phase equations 
  \begin{equation}\label{eq:eqmotionagain}
  \dot \phi_j = \omega_j + \varepsilon F_j^{(1)}(\phi) + \varepsilon^2 F_j^{(2)}(\phi)  + \varepsilon^3 R_j^{(3)}(\phi, \varepsilon)\ \mbox{for all}\ 1 \leq j \leq n\, .
  \end{equation}
  We assume that there exists a family of continuously differentiable ``approximate normal form transformations''  $\Gamma^{\varepsilon}: \mathbb{T}^n \to \mathbb{T}^n $  with the property that 
   $\phi(t)$ is a solution to \eqref{eq:eqmotionagain} if and only if  $\psi(t)=\Gamma^{\varepsilon}(\phi(t))$ satisfies 
  \begin{equation}\label{NFequations}
  \dot \psi_j = \omega_j + \varepsilon \overline{F}_j^{(1)}(\phi) + \varepsilon^{2}\overline{F}_j^{(2)}(\phi) + \varepsilon^3\overline{R}^{(3)}(\phi,\varepsilon)\, , 
  \end{equation}
 where $\overline{F}^{(1)}_j \in C^{2}(\mathbb{T}^n)$, $\overline{F}^{(2)}_j \in C^{1}(\mathbb{T}^n)$ and $\overline{R}^{(3)}_j \in C^0(\mathbb{T}^n\times[-\varepsilon_0, \varepsilon_0])$.

 Let $\mathcal{K} = \{k_1, \ldots, k_K\} \subset \{ k\in\mathbb{Z}^n\, |\, \langle k, \omega\rangle =0\}$ be a library of resonant Fourier labels, and decompose the second order normal form term as 
 $$\overline{F}_j^{(2)} = \overline{F}_j^{\mathcal{K}} + \overline{F}_j^{\mathcal{R}} + \overline{F}_j^{\mathcal{N}}\ ,$$ 
in which  
\begin{align} \label{eq:FK2}
\overline{F}_j^{\mathcal{K}}(\phi) & := \sum_{k \in \mathcal{K}} C_{j,k}^{(2)} e^{i\langle k, \phi \rangle }\, , \\ \label{eq:FR2}
\overline{F}_j^{\mathcal{R}}(\phi) & := \sum_{\langle k, \omega\rangle = 0,\, k \notin \mathcal{K}} C_{j,k}^{(2)} e^{i\langle k, \phi \rangle }\, , \\ 
\label{eq:FN2}
\overline{F}_j^{\mathcal{N}}(\phi) & := \sum_{\langle k, \omega\rangle \neq 0} C_{j,k}^{(2)} e^{i\langle k, \phi \rangle }\, .
\end{align}
Assume the following on the transformation and approximate normal form: 
\begin{itemize}
\item[{\it i)}] The transformations $\Gamma^{\varepsilon}$ are close to the identity: there are constants $H_1, \ldots, H_n>0$ such that  
\begin{equation}
\label{gammaestimate}
  |\Gamma^{\varepsilon}(\phi)_j-\phi_j| \leq H_j\varepsilon \,  , 
  \end{equation}
   for all $|\varepsilon| \leq \varepsilon_0$, all $\phi\in\mathbb{T}^n$, and all  $1\leq j \leq n.$ 
\item[{\it ii)}]  The first order normal form nearly vanishes: there are constants $G_1, \ldots, G_n>0$ such that 
\begin{equation}
\label{overlineF1bound}
 | \overline{F}_j^{(1)}(\phi)|  \leq \varepsilon^2 G_j \, , 
  \end{equation}
  for all $|\varepsilon| \leq \varepsilon_0$, all $\phi\in\mathbb{T}^n$, and all  $1\leq j \leq n.$ 
  \item[{\it iii)}] The library $\mathcal{K}$ captures the most prominent resonant terms in $\overline{F}^{(2)}$: there are constants $C_1, \ldots, C_n>0$ such that 
  \begin{equation} \label{FR2estimate}
       \left| \int_0^T \overline{F}_j^{\mathcal{R}}(\phi+\omega t)\, dt \right|  \leq C_j \sqrt{\varepsilon} T \, ,  
  \end{equation}
  for all $|\varepsilon| \leq \varepsilon_0$, all $\phi\in\mathbb{T}^n$, and all  $1\leq j \leq n.$
  \item[{\it iv)}] The nonresonant terms in $\overline{F}^{(2)}$ have a bounded dynamical effect: there are constants $D_1, \ldots, D_n>0$ such that 
  \begin{equation}\label{FN2estimate}
  \left| \int_0^T  \overline{F}_j^{\mathcal{N}}(\phi+\omega t)\, dt \right|  \leq D_j \, , 
  \end{equation}
  for all $|\varepsilon| \leq \varepsilon_0$, all $\phi\in\mathbb{T}^n$, and all  $1\leq j \leq n.$
\end{itemize}
  Denote by  
    $$C_{j}^{(2)} = (C_{j,k_1}^{(2)}, \ldots, C_{j,k_K}^{(2)})^{T} \in \mathbb{C}^K\ \mbox{and}\  \hat C_{j} = (\hat C_{j,k_1}, \ldots, \hat C_{j,k_K})^{T} \in \mathbb{C}^K$$     respectively the vector of true resonant Fourier coefficients $C_{j,k}^{(2)}$  in $\overline{F}_j^{(2)}$, and the vector of their estimates $\hat C_{j,k}$ provided by our second order resonant reconstruction method. Let $\mathfrak{C}_j$ be the constant given in \eqref{eq:Cjdef2norder} below, which is independent of $\varepsilon, K$ and $M$.
Then
\begin{enumerate}
    \item The $l_1$-distance between $\hat C_j$ and $C_j^{(2)}$ satisfies 
    $$\|\hat C_j - C_j^{(2)}\|_1 \leq  M \mathfrak{C}_j \|\Theta^+\|_1 \sqrt{\varepsilon}\, ; $$ 
    \item The $l_2$-distance between $\hat C_j$ and $C_j^{(2)}$ satisfies
    $$\|\hat C_j - C_j^{(2)}\|_2 \leq  \sqrt{M}\mathfrak{C}_j \|\Theta^+\|_2 \sqrt{\varepsilon}\, ;$$ 
    \item The $l_{\infty}$-distance between $\hat C_j$ and $C_j^{(2)}$ satisfies
    $$\|\hat C_j - C_j^{(2)}\|_{\infty} \leq  \mathfrak{C}_j \|\Theta^+\|_{\infty} \sqrt{\varepsilon}\, .$$ 
\end{enumerate}
  \end{theorem}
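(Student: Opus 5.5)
The plan is to run the first order argument in the normal form coordinates $\psi=\Gamma^{\varepsilon}(\phi)$, and then transfer the conclusions back to the observed phases using assumption \emph{i)}. The target is the same linear relation as before, $\Delta=\Theta(C_j^{(2)})+\rho$ with $\|\rho\|_\infty\le \mathfrak{C}_j\sqrt{\varepsilon}$. The three norm estimates then follow exactly as in the last step of the proof of Theorem \ref{general_thm}: $\hat C_j-C_j^{(2)}=\Theta^+(\rho)$, and $\|\rho\|_1\le M\|\rho\|_\infty$, $\|\rho\|_2\le\sqrt M\|\rho\|_\infty$.

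\emph{Step 1 (Picard/Taylor in normal form coordinates).} I read the right hand side of \eqref{NFequations} as evaluated at $\psi$. By \emph{ii)}, $\varepsilon\overline{F}^{(1)}_j$ can be absorbed into the remainder, since $|\varepsilon\overline F^{(1)}_j|\le G_j\varepsilon^3$. This gives
\[
\dot\psi_j=\omega_j+\varepsilon^2\overline F_j^{(2)}(\psi)+\varepsilon^3\widetilde R_j^{(3)}, \qquad \|\widetilde R_j^{(3)}\|_0\le G_j+\|\overline R_j^{(3)}\|_0 .
\]
I then repeat Lemma \ref{simplelemma} with $\varepsilon$ replaced by $\varepsilon^2$ in the leading term. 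First, $|\psi_i(t)-\psi_i(0)-\omega_it|\le \varepsilon^2 t\,\|\overline F_i^{(2)}\|_0+\varepsilon^3 t\,\|\widetilde R_i^{(3)}\|_0$. Second, the mean value theorem applies to $\overline F_j^{(2)}\in C^1$. Together these give
\[
\psi_j(T)-\psi_j(0)-\omega_jT=\varepsilon^2\int_0^T\overline F^{(2)}_j(\psi(0)+\omega t)\,dt+r_j ,
\]
with $|r_j|\le A_j'\varepsilon^4T^2+B_j'\varepsilon^3T$. For $T=\varepsilon^{-3/2}$ both terms are $O(\varepsilon)$.

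\emph{Step 2 (splitting the integral).} As in Lemma \ref{lem:approxlemma}, I split $\overline F^{(2)}_j=\overline F^{\mathcal K}_j+\overline F^{\mathcal R}_j+\overline F^{\mathcal N}_j$.
\begin{itemize}
\item The library part integrates exactly to $\varepsilon^2T\sum_{k\in\mathcal K}C^{(2)}_{j,k}e^{i\langle k,\psi(0)\rangle}$.
\item By \emph{iii)}, the remaining resonant part contributes at most $C_j\varepsilon^{5/2}T=C_j\varepsilon$.
\item By \emph{iv)}, the nonresonant part contributes at most $D_j\varepsilon^2$.
\end{itemize}
Dividing by $\varepsilon^2T=\sqrt\varepsilon$, every error becomes $O(\sqrt\varepsilon)$; the $D_j$ term becomes even $D_j\varepsilon^{3/2}$.

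\emph{Step 3 (back to observations).} This is the main obstacle, because the drift is now divided by the small quantity $\varepsilon^2T=\sqrt{\varepsilon}$. By \emph{i)} and \eqref{eq:approxphiagain}, $|\Phi^{(m)}_j(t)-\psi^{(m)}_j(t)|\le (L_j+H_j)\varepsilon$ at $t=0,T$, where $\psi^{(m)}=\Gamma^\varepsilon(\phi^{(m)})$ is lifted consistently. Consequently the observed rescaled drift $\Delta^{(m)}$ differs from the $\psi$-drift divided by $\varepsilon^2T$ by at most $2(L_j+H_j)\varepsilon/(\varepsilon^2T)=2(L_j+H_j)\sqrt\varepsilon$. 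This is still admissible, but only barely: it is exactly why the time $T=\varepsilon^{-3/2}$ is chosen. Finally, I replace $e^{i\langle k,\psi^{(m)}(0)\rangle}$ by $e^{i\langle k,\Phi^{(m)}(0)\rangle}$. Here I plan to use the $C^1$ bound on $\overline F^{\mathcal K}_j$, as in the first order proof, so that $\widetilde B_j$ acquires $\sum_i\|\partial_i\overline F^{(2)}_j\|_0(L_i+H_i)$. This costs only $O(\varepsilon)$, provided I bound it via $\overline F^{\mathcal K}_j$ being a finite sum. Alternatively, I apply the same trick to the full integral of $\overline F^{(2)}_j$ before splitting, which is cleaner and is what I will do.

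Collecting all the constants, I will define
\[
\mathfrak C_j:=A_j'+B_j'+\widetilde B_j+C_j+D_j+2(L_j+H_j),
\]
the analogue of \eqref{eq:Cjdef}, to be labelled \eqref{eq:Cjdef2norder}. It is independent of $\varepsilon$, $K$ and $M$, and the statement follows.
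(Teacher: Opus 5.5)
Your proposal is correct and follows the paper's proof essentially step by step. Like the paper, you pass to $\psi^{(m)}=\Gamma^{\varepsilon}(\phi^{(m)})$, bound the observation mismatch by $2(L_j+H_j)/(\varepsilon T)$, run the Picard/Taylor estimate on the $\overline F^{(2)}_j$ integral, replace $\psi^{(m)}(0)$ by $\Phi^{(m)}(0)$ in the full integral before splitting, and conclude via $\hat C_j-C_j^{(2)}=\Theta^+(\rho)$. The only cosmetic difference is the constant: the paper uses $2(L_j+H_j)+A_j+C_j+\sqrt{\varepsilon_0}\,(B_j+G_j+\|\overline{R}_j^{(3)}\|_0)+\varepsilon_0 D_j$, whereas your plain sum tacitly uses $\varepsilon\le\sqrt{\varepsilon}$, i.e.\ $\varepsilon_0\le 1$.
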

\begin{proof}
    Recall the assumption in \eqref{eq:approxphiagain} that $\Phi^{(m)}(0)$ and $\Phi^{(m)}(T)$ are observations of true solutions $t\mapsto \phi^{(m)}(t)$ to \eqref{eq:eqmotionagain}  that satisfy
$$| \phi^{(m)}_j(0)-\Phi^{(m)}_j(0)| \leq L_j\varepsilon\ \mbox{and} \ | \phi^{(m)}_j(T)-\Phi^{(m)}_j(T)| \leq L_j\varepsilon\, .$$ 
For each such true solution $\phi(t)$, let us define $\psi(t):=\Gamma^{\varepsilon}(\phi(t))$. Then it follows from \eqref{gammaestimate} that 
\begin{equation}\label{eq:phipsiestimate}
|\psi_j^{(m)}(t)-\phi^{(m)}_j(t)| \leq H_j\varepsilon\ \mbox{for all} \ t\, ,
\end{equation}
and hence also that 
\begin{equation}\label{eq:Phiphiestimate}
|\psi_j^{(m)}(t)-\Phi^{(m)}_j(t)| \leq (L_j+H_j)\varepsilon \ \ \mbox{for} \ t\in \{0,T\} \, .
\end{equation}
We can now define the true rescaled normal form drifts 
\begin{equation}\label{deltapsidef}
\delta^{(m)} := \frac{\psi_j^{(m)}(T) - \psi_j^{(m)}(0) - \omega_j T}{\varepsilon^2 T}\, .
\end{equation}
The observed rescaled drifts are good approximations of these true rescaled drifts, as it follows immediately from \eqref{eq:Phiphiestimate} that  
\begin{equation}
    \label{eq:deltaDeltaestimateagain}
| \Delta^{(m)} - \delta^{(m)}| \leq \frac{2 (L_j+H_j)}{\varepsilon T}\, . 
\end{equation}
Note that this expression is of the order $\sim \sqrt{\varepsilon}$ for $T\sim \frac{1}{\varepsilon\sqrt{\varepsilon}}$. 

We now proceed to express $\delta^{(m)}$ in terms of the Fourier coefficients $C_{j,k}^{(2)}$ of the second order normal form, just like we did for the first order reconstruction method in Sections \ref{sec:approxdriftdyn} and \ref{sec:validitysection}. 
In order to do this, note that integration of \eqref{NFequations} from $t=0$ to $t=T$
yields that 
\begin{align}\notag 
 \psi_j^{(m)}(T) \, -\, & \psi_j^{(m)}(0)  -\omega_j T =   \ \varepsilon \int_0^T \overline{F}_j^{(1)}(\psi^{(m)}(t))dt \  \\  \label{eq:NFintegral}
& + \varepsilon^2 \int_0^T \overline{F}_j^{(2)}(\psi^{(m)}(t))dt + \varepsilon^3\int_0^T \overline{R}_j^{(3)}(\psi^{(m)}(t), \varepsilon)dt\, .  
\end{align}
We rewrite and estimate the three integral terms on the right hand side of \eqref{eq:NFintegral}. The first and third integral terms in \eqref{eq:NFintegral} are easy to bound: 
\begin{align}
\label{eq:1integralbound}
&  \left| \varepsilon \int_0^T \overline{F}_j^{(1)}(\psi^{(m)}(t))dt\right| \leq \varepsilon^3 T G_j \,  , \\
\label{eq:3integralbound}
& \left| \varepsilon^3\int_0^T \overline{R}_j^{(3)}(\psi^{(m)}(t), \varepsilon)dt\right| \leq \varepsilon^3 T\|\overline{R}^{(3)}_j\|_0\, ,
\end{align}
where \eqref{eq:1integralbound} clearly follows from \eqref{overlineF1bound}. 
We investigate the second integral term in \eqref{eq:NFintegral} by mimicking the proof of Lemmas \ref{simplelemma} and \ref{lem:approxlemma}.  The  first step is to note that integration of \eqref{NFequations} from $s=0$ to $s=t$ gives
$$\psi_i^{(m)}(t)=\psi_i^{(m)}(0)+\omega_i t + r_i(t, \varepsilon)\ , $$
for certain functions $r_i(t,\varepsilon)$  satisfying 
\begin{align} \notag
| r_i(t,\varepsilon) | & \leq  \varepsilon^2t \left(  \varepsilon^{-1} \|\overline{F}_i^{(1)}\|_0 + \| \overline{F}_i^{(2)}\|_0 + \varepsilon \|  \overline{R}_i^{(3)}\|_0 \right) \\
& \leq \varepsilon^2t \left( \varepsilon G_i + \| \overline{F}_i^{(2)}\|_0 + \varepsilon \|  \overline{R}_i^{(3)}\|_0 \right) \, .
\end{align}
 As in the proof of Lemma \ref{simplelemma}, we can therefore estimate
 \begin{align}\notag 
 & \left| \varepsilon^2 \int_0^T \overline{F}_j^{(2)}(\psi^{(m)}(t))dt - \varepsilon^2 \int_0^T \overline{F}_j^{(2)}(\psi^{(m)}(0) + \omega t) dt \right| \\ \label{eq:yetanotherestimate}
 & \leq \varepsilon^2\int_0^T \sum_{i=1}^n \|\partial_i\overline{F}_j^{(2)}\|_0 |r_i(t,\varepsilon)| dt \leq A_j \varepsilon^4T^2\, .  
 \end{align}
in which
$$A_j = \frac{1}{2}\sum_{i=1}^n \|\partial_i\overline{F}_j^{(2)}\|_0\left(  \varepsilon_0 G_i + \| \overline{F}_i^{(2)}\|_0 + \varepsilon_0 \|  \overline{R}_i^{(3)}\|_0 \right) \, . $$
Next, in view of \eqref{eq:Phiphiestimate}, 
\begin{align}\notag
   & \left| \varepsilon^2 \int_0^T \overline{F}_j^{(2)}(\psi^{(m)}(0) + \omega t) dt -  \varepsilon^2 \int_0^T \overline{F}_j^{(2)}(\Phi^{(m)}(0) + \omega t) dt\right|  \\ 
   &  \leq B_j \varepsilon^3T  \ \ \mbox{in which} \ B_j:=  \sum_{i=1}^n \|\partial_i \overline{F}_j^{(2)}\|_0 (L_i+H_i) \, .
   \label{yetyetanotherestimate}
\end{align}
We continue by mimicking the proof of Lemma \ref{lem:approxlemma}. Recalling that $\overline{F}_j^{(2)} = \overline{F}_j^{\mathcal{K}} + \overline{F}_j^{\mathcal{R}} +\overline{F}_j^{\mathcal{N}}$, we estimate 
\begin{align}\notag 
& \left| \varepsilon^2 \int_0^T \overline{F}_j^{(2)}(\Phi^{(m)}(0)+\omega t ) dt - \varepsilon^2 T \sum_{k\in \mathcal{K}} C^{(2)}_{j,k}e^{i\langle k,\Phi^{(m)}(0)\rangle} \right| \\ \notag
& \leq \varepsilon^2 \left| \int_0^T \overline F_j^{\mathcal{R}}(\Phi^{(m)}(0)+\omega t) dt \right| + \varepsilon^2\left|\int_0^T \overline F_j^{\mathcal{N}} (\Phi^{(m)}(0)+\omega t) dt\right| \\
& \leq C_j \varepsilon^{\frac{5}{2}}T + \varepsilon^2D_j\, ,
\label{eq:yetyetyetanotherone}
\end{align}
where the inequalities in the last line follow from assumptions \eqref{FR2estimate} and \eqref{FN2estimate}. Combining \eqref{eq:NFintegral} with \eqref{eq:1integralbound}, \eqref{eq:3integralbound}, \eqref{eq:yetanotherestimate}, \eqref{yetyetanotherestimate} and \eqref{eq:yetyetyetanotherone} yields 
\begin{align}
  \notag & \left| \psi^{(m)}_j(T)-\psi^{(m)}_j(0) - \omega_j T -  \varepsilon^2T \sum_{k\in \mathcal{K}} C^{(2)}_{j,k}e^{i\langle k,\Phi^{(m)}(0)\rangle}  \right| \\ 
   & \leq A_j\varepsilon^4 T^2 + 
\varepsilon^3 T ( B_j + G_j + \|\overline{R}_j^{(3)}\|_0 ) + C_j\varepsilon^{\frac{5}{2}}T + \varepsilon^2D_j\, .
\label{eq:whatamess}
\end{align}
Next, combining \eqref{eq:whatamess} with \eqref{deltapsidef} and \eqref{eq:deltaDeltaestimateagain} we  obtain 
\begin{align}\notag 
  &   \left| \Delta^{(m)}-  \sum_{k\in \mathcal{K}} C^{(2)}_{j,k}e^{i\langle k,\Phi^{(m)}(0)\rangle}  \right| 
   \\
  &  \leq 
   \left| 
\Delta^{(m)} -  \delta^{(m)}\right| + \left| \delta^{(m)}-  \sum_{k\in \mathcal{K}} C^{(2)}_{j,k}e^{i\langle k,\Phi^{(m)}(0)\rangle}  \right| \notag 
\\
& \leq \frac{2(L_j+H_j)}{\varepsilon T} + A_j\varepsilon^2T + \varepsilon(B_j + G_j + \|\overline{R}_j^{(3)}\|_0 ) +C_j\sqrt{\varepsilon}+D_j/T\, .
\label{eq:almostthere}
\end{align}
For $T= \frac{1}{\varepsilon\sqrt{\varepsilon}}$, the expression on the right hand side of \eqref{eq:almostthere} simplifies  to 
$$\left( 2(L_j+H_j) +A_j +C_j + \sqrt{\varepsilon}(B_j + G_j + \|\overline{R}_j^{(3)}\|_0 ) + \varepsilon D_j \right) \sqrt{\varepsilon} \leq \mathfrak{C}_j\sqrt{\varepsilon}$$
in which  
\begin{equation}\label{eq:Cjdef2norder}
 \mathfrak{C}_j  :=  2(L_j+H_j) +A_j +C_j + \sqrt{\varepsilon_0}(B_j + G_j + \|\overline{R}_j^{(3)}\|_0 ) + \varepsilon_0 D_j \, .
\end{equation}
To summarize, we proved that 
\begin{equation}\label{identity}
\Delta^{(m)} = \sum_{k\in \mathcal{K}} C^{(2)}_{j,k}e^{i\langle k,\Phi^{(m)}(0)\rangle}  + \rho^{(m)}\, ,
\end{equation}
where  $\rho^{(m)} \in \mathbb{R}$ satisfies the bound $|\rho^{(m)}| \leq \mathfrak{C}_j\sqrt{\varepsilon}$. In particular, the vector $\rho\in\mathbb{R}^M$ satisfies  equations \eqref{rho1bound}, \eqref{rho2bound} and \eqref{rhoinfbound}.

One can write equations \eqref{identity} as an equality of $M$-vectors 
$$\Delta = \Theta (C_j^{(2)}) + \rho\, .$$
Now we recall that our second order resonant reconstruction method chooses the estimated normal form coefficients to be $\hat C_j = \Theta^+(\Delta)$, see \eqref{pseudo-inverseformula2nd}. As a result, 
$$\hat C_j - C_j^{(2)} = \Theta^+(\Delta) -\Theta^+\Theta(C_j^{(2)}) = \Theta^+(\Delta) - \Theta^+(\Delta) + \Theta ^+(\rho) = \Theta ^+(\rho) \, .$$
 The estimates on the norms $\|\hat C_j - C_j^{(2)}\|_{1,2,\infty}$  now follow from the argument that was  already given in the proof of Theorem \ref{general_thm}.
\end{proof}

\section{A second order example}\label{sec:2ndorderexample}
To illustrate the second order resonant reconstruction method introduced in Section \ref{sec:2ndordersection}, we consider the  phase oscillator network
\begin{equation}
\begin{aligned}
\dot\phi_1&= 1 +\varepsilon\,\sin(\phi_5-\phi_1)\, ,\\
\dot\phi_2&=3+\varepsilon\left(\sin(\phi_1-\phi_2)+\sin(\phi_3-\phi_2)\right)\, ,\\
\dot\phi_3&=1+\varepsilon\,\sin(\phi_4-\phi_3)\, ,\\
\dot\phi_4&=5\, ,\\
\dot\phi_5&=3+\varepsilon\left(\sin(\phi_3-\phi_5)+\sin(\phi_4-\phi_5)\right)\, ,
\end{aligned}
\label{eq:second_order_example_system}
\end{equation}
of five weakly coupled phase oscillators with frequency vector
\begin{equation}
\omega=\left(1, 3, 1, 5, 3\right).
\end{equation}
We depict this network in Figure \ref{fig:second_order_network}.  
Note that the first order normal form of  equations \eqref{eq:second_order_example_system} vanishes, as none of the interactions on the right hand side of \eqref{eq:second_order_example_system} is resonant. We therefore do not attempt to reconstruct equations \eqref{eq:second_order_example_system}. Instead, we try to numerically approximate the second order resonant normal form of \eqref{eq:second_order_example_system}  from noisy observations of its solutions.
\begin{figure}[H]
    \centering
    \includegraphics[width=0.5\textwidth]{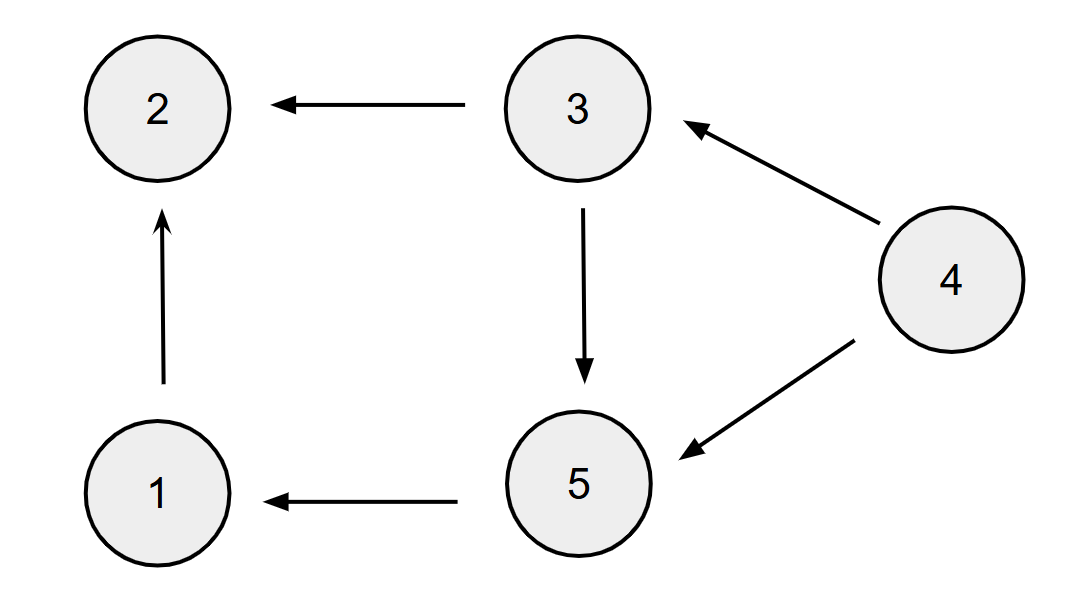}
    \caption{Visualization of the network structure of equations  \eqref{eq:second_order_example_system}.}
    \label{fig:second_order_network}
\end{figure}
\noindent In the selection of the reconstruction library for each oscillator, we anticipate that the second order part $\overline{F}_j^{(2)}$ of the normal form will be a linear combination of complex exponential functions (sines and cosines) that depend on resonant combination angles of the form $$\pm(\phi_j-\phi_i)\pm (\phi_i-\phi_k)\ \mbox{for certain}\ 1\leq i,k \leq n \, .$$
For example, for $j=1$, the only such resonant combination angles are $0, \pm(\phi_1-\phi_3), \pm(2\phi_1-2\phi_3), \pm(\phi_2-\phi_5),  \pm(\phi_1-2\phi_2+\phi_4)$,  and $\pm(\phi_1+\phi_4-2\phi_5)$. Similarly, for $j=2$, the only possibilities are $0, \pm(\phi_1-\phi_3)$, $\pm(\phi_2-\phi_5)$, $\pm(2\phi_2-2\phi_5)$, $\pm(\phi_1-2\phi_2+\phi_4)$, and $\pm(2\phi_2-\phi_3-\phi_4)$. Etc. Based on this principle, we choose the reconstruction libraries 
\begin{align*}
\mathcal{K}_1 = &  \{\alpha_0, \pm \alpha_1, \pm\alpha_2, \pm\alpha_3, \pm \alpha_5, \pm \alpha_6 \}\, ,\\
\mathcal{K}_2 = &  \{\alpha_0, \pm \alpha_1, \pm\alpha_3, \pm\alpha_4, \pm \alpha_5, \pm \alpha_7 \}\, ,\\
\mathcal{K}_3 = &  \{\alpha_0, \pm \alpha_1, \pm\alpha_2, \pm\alpha_3, \pm \alpha_7, \pm\alpha_8 \}\, ,\\
\mathcal{K}_4 = &  \{\alpha_0, \pm \alpha_1, \pm\alpha_3, \pm\alpha_5, \pm \alpha_6, \pm\alpha_7, \pm \alpha_8 \}\, ,\\
\mathcal{K}_5 = &  \{\alpha_0, \pm \alpha_1, \pm\alpha_3, \pm\alpha_4, \pm\alpha_6 \pm \alpha_8 \}\, ,
\end{align*}
in which 
\begin{align} \notag
\alpha_0 &= 0\, , \\ \notag 
\alpha_1 &= e_1 - e_3\, ,\\ \notag
\alpha_2 &= 2e_1 - 2e_3\, ,\\ \notag
\alpha_3 &= e_2 - e_5\, ,\\ \notag
\alpha_4 &= 2e_2 - 2e_5\, ,\\ \notag
\alpha_5 &= e_1- 2e_2 + e_4 \, ,\\ \notag
\alpha_6 &= e_1 + e_4 - 2e_5\, ,\\ \notag
\alpha_7 &= 2e_2 - e_3  - e_4 \, ,\\ \notag
\alpha_8 &= e_3 + e_4 - 2e_5\, . 
\label{eq:alpha_labels_second_order}
\end{align} 
The Fourier labels $\alpha_1, \ldots, \alpha_8$ represent the possible dyadic and triadic interactions in the reconstructed second order normal form. A visualization of the interpretation of these interactions is given in Figure~\ref{fig:second_order_library}. We also note that the libraries together contain a total of $57$ elements. Our reconstruction method will thus determine $57$ different Fourier coefficients.
\begin{figure}[H]
    \centering
    \includegraphics[width=0.8\textwidth]{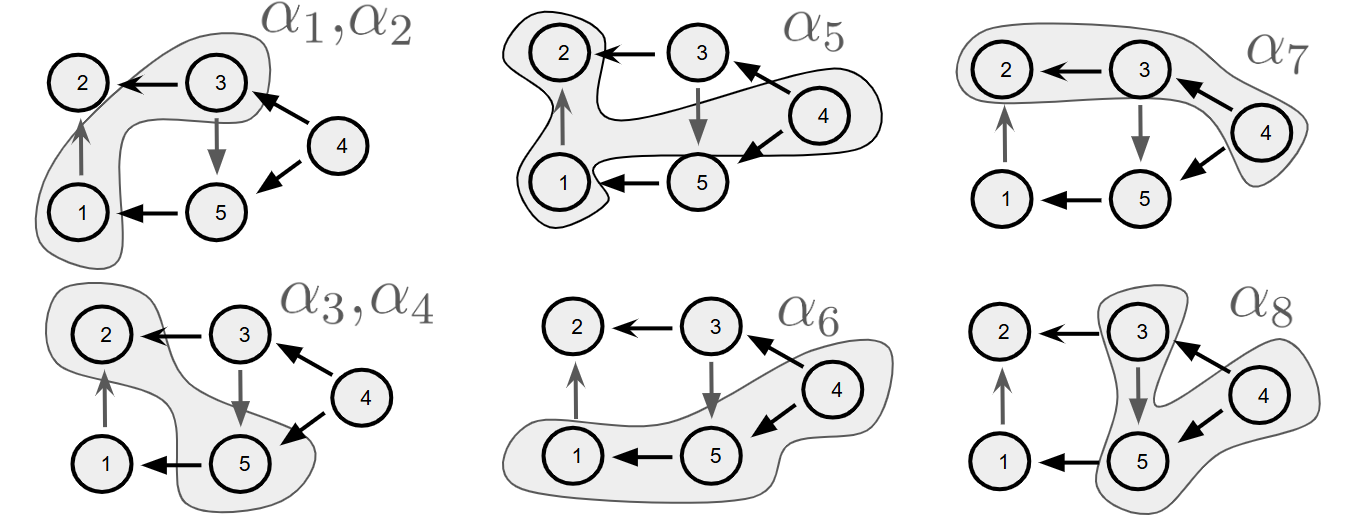}
    \caption{Visualization of the second order resonant Fourier labels $\alpha_1,\dots,\alpha_8$ used in the reconstruction libraries for equations \eqref{eq:second_order_example_system}. }
    \label{fig:second_order_library}
\end{figure}

\noindent We numerically reconstruct the second-order normal form of \eqref{eq:second_order_example_system}  by setting $\varepsilon=\frac{1}{100}$,  and 
choosing $M=50$ initial phases $\phi^{(m)}(0)$ randomly from the  uniform distribution on $\mathbb{T}^5$. Equations \eqref{eq:second_order_example_system} are then integrated in MATLAB starting from these initial conditions, over a time-interval of the length $T=\frac{1}{\varepsilon\sqrt{\varepsilon}}=1000
$, to produce the final phases $\phi^{(m)}(T)$. A small uniformly random perturbation is also added to $\phi^{(m)}(0)$ and $\phi^{(m)}(T)$,  producing the noisy phases $\Phi^{(m)}(0)$ and $\Phi^{(m)}(T)$. The maximal amplitude of this perturbation is chosen to be $\varepsilon$, that is, $L_j=1$ for all $1\leq j\leq n$. Finally, we estimate the Fourier coefficients of the second order normal form by using the recipe in Section \ref{sec:2ndordersection}. We present the resulting reconstructed resonant Fourier coefficients in Table \ref{tab:second_order_complex_coefficients}. Note that the difference between the reconstructed coefficients and the coefficients  of the true normal form of \eqref{eq:second_order_example_system} is much smaller than $\sqrt{\varepsilon}=\frac{1}{10}$.   Our reconstruction method thus successfully  reconstructs the  second order normal form, which we derive analytically in Proposition \ref{prop:2ndordernfexample} below. 
\begin{longtable}{|c|c|c|c|c|}
\hline
\! \textbf{Coefficient} \! & \! \textbf{Real part} \! & \! \textbf{Imag. part} \! & \! \textbf{True value}\! & \!  \textbf{Error} \! \\
\hline
\endfirsthead

\hline
$\hat C_{j,\alpha_k}^{(2)}$ & $\Re(\hat C_{j,\alpha_k}^{(2)})$ & $\Im(\hat C_{j,\alpha_k}^{(2)})$ & $C_{j,\alpha_k}^{(2)}$ & Error \\
\hline
\endhead

$\hat C_{1,\alpha_0}^{(2)}$ & 0.2350 & 0.0000 & 0.25 & 0.0150 \\
$\hat C_{1,\alpha_1}^{(2)}$ & 0.1218 & 0.0049 & 0.125 & 0.0058 \\
$\hat C_{1,\alpha_2}^{(2)}$ & 0.0133 & -0.0104 & 0 & 0.0169 \\
$\hat C_{1,\alpha_3}^{(2)}$ & -0.0111 & -0.0111 & 0 & 0.0157 \\
$\hat C_{1,\alpha_5}^{(2)}$ & -0.0024 & -0.0046 & 0 & 0.0052 \\
$\hat C_{1,\alpha_6}^{(2)}$ & -0.1159 & -0.0072 & -0.125 & 0.0116 \\
\hline
$\hat C_{2,\alpha_0}^{(2)}$ & -0.4727 & 0.0000 & -0.5 & 0.0273 \\
$\hat C_{2,\alpha_1}^{(2)}$ & -0.2492 & 0.0045 & -0.25 & 0.0045 \\
$\hat C_{2,\alpha_3}^{(2)}$ & -0.1165 & 0.0012 & -0.125 & 0.0085 \\
$\hat C_{2,\alpha_4}^{(2)}$ & 0.0058 & 0.0151 & 0 & 0.0161 \\
$\hat C_{2,\alpha_5}^{(2)}$ & 0.0054 & 0.0175 & 0 & 0.0183 \\
$\hat C_{2,\alpha_7}^{(2)}$ & -0.0033 & -0.0105 & 0 & 0.0110 \\
\hline
$\hat C_{3,\alpha_0}^{(2)}$ & 0.1308 & 0.0000 & 0.125 & 0.0058 \\
$\hat C_{3,\alpha_1}^{(2)}$ & -0.0032 & 0.0076 & 0 & 0.0083 \\
$\hat C_{3,\alpha_2}^{(2)}$ & -0.0011 & -0.0097 & 0 & 0.0097 \\
$\hat C_{3,\alpha_3}^{(2)}$ & -0.0068 & 0.0023 & 0 & 0.0072 \\
$\hat C_{3,\alpha_7}^{(2)}$ & -0.0053 & 0.0023 & 0 & 0.0058 \\
$\hat C_{3,\alpha_8}^{(2)}$ & -0.0083 & 0.0023 & 0 & 0.0086 \\
\hline
$\hat C_{4,\alpha_0}^{(2)}$ & 0.0015 & 0.0000 & 0 & 0.0015 \\
$\hat C_{4,\alpha_1}^{(2)}$ & -0.0077 & 0.0023 & 0 & 0.0080 \\
$\hat C_{4,\alpha_3}^{(2)}$ & 0.0095 & 0.0048 & 0 & 0.0106 \\
$\hat C_{4,\alpha_5}^{(2)}$ & -0.0010 & 0.0036 & 0 & 0.0037 \\
$\hat C_{4,\alpha_6}^{(2)}$ & -0.0038 & -0.0039 & 0 & 0.0055 \\
$\hat C_{4,\alpha_7}^{(2)}$ & -0.0005 & 0.0068 & 0 & 0.0068 \\
$\hat C_{4,\alpha_8}^{(2)}$ & 0.0035 & -0.0054 & 0 & 0.0064 \\
\hline
$\hat C_{5,\alpha_0}^{(2)}$ & -0.0205 & 0.0000 & 0 & 0.0205 \\
$\hat C_{5,\alpha_1}^{(2)}$ & 0.0239 & 0.0031 & 0 & 0.0241 \\
$\hat C_{5,\alpha_3}^{(2)}$ & -0.0045 & 0.0058 & 0 & 0.0073 \\
$\hat C_{5,\alpha_4}^{(2)}$ & 0.0007 & 0.0151 & 0 & 0.0151 \\
$\hat C_{5,\alpha_6}^{(2)}$ & -0.0231 & 0.0167 & 0 & 0.0285 \\
$\hat C_{5,\alpha_8}^{(2)}$ & 0.0101 & -0.0112 & 0 & 0.0151 \\
\hline
\caption{Reconstructed Fourier coefficients of the second order normal form of \eqref{eq:second_order_example_system} with $\varepsilon=\frac{1}{100}$, based on $M=50$ noisy observations. The true normal form coefficients are included for comparison.}
\label{tab:second_order_complex_coefficients}
\end{longtable}

\begin{proposition}\label{prop:2ndordernfexample}
    The second order normal form of \eqref{eq:second_order_example_system} is given by 
    \begin{align} \notag
\dot\phi_1&=1+\varepsilon^2 \left(\frac14 +\frac14\cos(\phi_1-\phi_3) -\frac14\cos(\phi_1 + \phi_4-2\phi_5)
 \right)+ \mathcal{O}(\varepsilon^3)\,  ,\\
\notag
\dot\phi_2&=3+\varepsilon^2 \left( -\frac12 -\frac12\cos(\phi_1-\phi_3) -\frac14\cos(\phi_2-\phi_5)
\right) + \mathcal{O}(\varepsilon^3) \, ,\\
\notag
\dot\phi_3&=1+\frac{1}{8} \varepsilon^2  + \mathcal{O}(\varepsilon^3) \, ,\\
\notag
\dot\phi_4&=5 + \mathcal{O}(\varepsilon^3) \, , \\
\notag
\dot\phi_5&=3+ \mathcal{O}(\varepsilon^3)\, .
\end{align}
In particular, the only nonzero complex Fourier coefficients $C^{(2)}_{j,k}$ in the second order normal form are given by 
\begin{align*}
& C_{1,\alpha_0}^{(2)} = \frac14, \ C_{1,\pm \alpha_1}^{(2)} = \frac18,\ C_{1,\pm \alpha_6}^{(2)} = -\frac18\,  ,
\\
& C_{2,\alpha_0}^{(2)} = -\frac12,\ C_{2,\pm \alpha_1}^{(2)} = -\frac14,\ C_{2,\pm \alpha_3}^{(2)} = -\frac18\, ,
\\
& C_{3,\alpha_0}^{(2)}=\frac18\, .
\end{align*}
\end{proposition}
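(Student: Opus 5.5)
The plan is to obtain the normal form from Theorem \ref{second_order_nf_thm} rather than from scratch. That theorem applies because every interaction in \eqref{eq:second_order_example_system} is nonresonant, so $\overline{F}^{(1)}=0$, and there is no second order term, $F^{(2)}=0$. The theorem expresses $\overline{F}^{(2)}$ as the resonant part of a bilinear expression in $F^{(1)}$ and the generator $H^{(1)}$, roughly a Lie bracket $\tfrac12[\,\cdot\,,\,\cdot\,]$ or the equivalent $DF^{(1)}\cdot H^{(1)}$ and $DH^{(1)}\cdot F^{(1)}$ terms. Here $H^{(1)}$ solves the homological equation $\sum_k \omega_k\partial_k H^{(1)}_j = -F^{(1)}_j$, up to the sign convention used there. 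I will use exactly the normalisation fixed in the appendix, since the factor $\tfrac12$ and the signs are the easiest things to get wrong in a hand computation.

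\textbf{Step 1: the generator.} For each interaction $\sin(\phi_i-\phi_j)$ in $F^{(1)}_j$ with $d:=\omega_i-\omega_j\neq 0$, the corresponding term of $H^{(1)}_j$ is $\pm\cos(\phi_i-\phi_j)/d$. This gives explicitly:
\begin{itemize}
\item $H_1\propto\cos(\phi_5-\phi_1)/2$;
\item $H_2$ has terms $\propto\cos(\phi_1-\phi_2)/(-2)$ and $\cos(\phi_3-\phi_2)/(-2)$;
\item $H_3\propto\cos(\phi_4-\phi_3)/4$;
\item $H_4=0$;
\item $H_5$ has terms $\propto\cos(\phi_3-\phi_5)/(-2)$ and $\cos(\phi_4-\phi_5)/2$.
\end{itemize}

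\textbf{Step 2: the bilinear terms.} For each $j$ I expand $\sum_i\left(\partial_iF_j^{(1)}H_i - \partial_iH_j F_i^{(1)}\right)$ (with the theorem's weights) as a finite sum of products of sines and cosines. Each product is converted via product-to-sum formulas into exponentials $e^{i\langle k,\phi\rangle}$ with $k$ of the form $\pm(e_a-e_b)\pm(e_c-e_d)$. Only the labels with $\langle k,\omega\rangle=0$ are kept.

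I expect the survivors to be:
\begin{itemize}
\item constant terms, coming from squares such as $\cos^2$ and $\sin^2$ of a single angle;
\item the dyadic labels $\alpha_1=e_1-e_3$ and $\alpha_3=e_2-e_5$;
\item the triadic label $\alpha_6=e_1+e_4-2e_5$, arising in oscillator 1 from the chain $5\to1$ combined with $3,4\to5$.
\end{itemize}
For instance, for $j=3$ the only contributions come from $\sin(\phi_4-\phi_3)$ paired with itself, so only a constant survives. This is consistent with $\dot\phi_3=1+\tfrac18\varepsilon^2$. Oscillators $4$ and $5$ must yield no resonant terms at all. For $j=4$ this is immediate because $F_4^{(1)}=H_4=0$. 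For $j=5$ I will check that every combination such as $(\phi_3-\phi_5)\pm(\phi_4-\phi_5)$ or $(\phi_3-\phi_5)\pm(\phi_3-\phi_2)$ has nonzero frequency, and that any squared-angle constants cancel between the $DF\cdot H$ and $DH\cdot F$ terms.

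\textbf{Step 3: read off the coefficients.} Finally I convert each surviving real term $c\cos\langle k,\phi\rangle$ into $C_{j,\pm k}^{(2)}=c/2$. This gives the listed values, e.g.\ $\tfrac14\cos(\phi_1-\phi_3)\mapsto C_{1,\pm\alpha_1}^{(2)}=\tfrac18$.

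\textbf{Main obstacle.} The difficulty is bookkeeping rather than any conceptual issue. There are many cross terms, and I must track exactly which combination angles are resonant, while keeping the $\tfrac12$ factor and the signs consistent with Theorem \ref{second_order_nf_thm}. A naive substitution $\psi=\phi+\varepsilon H(\phi)$ that omits the $D^2H$ correction gives the wrong constant, e.g.\ $\tfrac14$ instead of $\tfrac18$ for $j=3$. I will therefore first calibrate the formula on the simple oscillator $j=3$ and then apply it uniformly to the others.
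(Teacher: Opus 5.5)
Your route is the paper's: the generator in your Step~1 coincides with the paper's $H^{(1)}$, and the paper likewise reads off $\overline{F}^{(2)}$ as the resonant part of $\tfrac12[H^{(1)},F^{(1)}]$, cf.\ \eqref{eq:F2barsimplified}. One small slip: the combination you display, $\sum_i(\partial_iF^{(1)}_jH^{(1)}_i-\partial_iH^{(1)}_jF^{(1)}_i)$, equals $-[H^{(1)},F^{(1)}]_j$. The correct expression is $\tfrac12\bigl(DH^{(1)}_j\cdot F^{(1)}-DF^{(1)}_j\cdot H^{(1)}\bigr)$. Its sign is fixed by \eqref{eq:F2barsimplified} itself, so you should not ``calibrate'' against the value $\tfrac18$ that you are supposed to derive.

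The genuine problem is your plan for $j=5$: the check you announce would fail.

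\textbf{The resonant combination.} The angle $(\phi_3-\phi_5)+(\phi_4-\phi_5)=\langle\alpha_8,\phi\rangle$ has frequency $\omega_3+\omega_4-2\omega_5=1+5-6=0$. So it is resonant, which is why $\alpha_8$ sits in $\mathcal{K}_5$. It also genuinely arises from products of the two interaction terms of $F^{(1)}_5$ and $H^{(1)}_5$.

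\textbf{The second example.} This one is misidentified. The angle $\phi_3-\phi_2$ never enters the $j=5$ computation, because $F^{(1)}_5$ does not depend on $\phi_2$; and $(\phi_3-\phi_5)-(\phi_3-\phi_2)=\phi_2-\phi_5$ would be resonant anyway. The actual partner is $\phi_4-\phi_3$ from $F^{(1)}_3,H^{(1)}_3$, which produces only the nonresonant angles $\phi_4-\phi_5$ and $2\phi_3-\phi_4-\phi_5$.

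\textbf{What you must show instead.} $\overline{F}^{(2)}_5=0$ is not a consequence of nonresonance: you must show that the $\alpha_8$-coefficient vanishes identically. It does. Write $u=\phi_3-\phi_5$ and $v=\phi_4-\phi_5$. The only $\alpha_8$-contributions come from $\partial_5H^{(1)}_5\,F^{(1)}_5=(-\tfrac12\sin u+\tfrac12\sin v)(\sin u+\sin v)$ and $\partial_5F^{(1)}_5\,H^{(1)}_5=(\cos u+\cos v)(\tfrac12\cos u-\tfrac12\cos v)$, and in each the cross terms cancel. Equivalently, in \eqref{eq:F2barexpression} the pairs $(k,l)=(e_3-e_5,e_4-e_5)$ and $(e_4-e_5,e_3-e_5)$, both with $m=5$, contribute $+\tfrac18$ and $-\tfrac18$.

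\textbf{The constants.} Your description here is also off. For a single interaction $\sin(\phi_i-\phi_j)$ in $F^{(1)}_j$, the $DH\cdot F$ and $DF\cdot H$ pieces do not cancel. They combine to $\tfrac{1}{2(\omega_i-\omega_j)}(\sin^2+\cos^2)=\tfrac{1}{2(\omega_i-\omega_j)}$. Summing this over the inputs of oscillator $j$ gives exactly $\tfrac14$, $-\tfrac12$, $\tfrac18$ for $j=1,2,3$. For $j=5$ the two constants cancel because the detunings $\omega_3-\omega_5=-2$ and $\omega_4-\omega_5=2$ are opposite.

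With these corrections, your bookkeeping reproduces the paper's bracket and the stated coefficients.
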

\begin{proof}
    The procedure to compute the second order normal form is explained in Appendix \ref{app:2ndorderNF}. To summarize, the first order normal form transformation (which removes all first order terms from the differential equations) is generated by the vector field $H^{(1)}$ given by
    \begin{align}\notag 
H^{(1)}(\phi)
=
\left( \begin{array}{c} 
\frac12 \cos(\phi_1-\phi_5) \\ 
-\frac12\cos(\phi_2-\phi_1)-\frac12\cos(\phi_3-\phi_2) \\
\frac14\cos(\phi_4-\phi_3) \\
0\\
-\frac12\cos(\phi_5-\phi_3)+\frac12\cos(\phi_5-\phi_4)
\end{array}
\right)\, .
\end{align}
Indeed, one can check that ${\rm ad}_{\omega}(H^{(1)}) = -D_{\omega}H^{(1)} =F^{(1)}$, where $F^{(1)}$ is the order $\sim \varepsilon$ part of the right hand side of \eqref{eq:second_order_example_system}. 
The  order $\sim \varepsilon^2$ part of the normal form is now given by the resonant terms in the Lie bracket $\frac{1}{2}[H^{(1)}, F^{(1)}]$, see Appendix \ref{app:2ndorderNF}. This Lie bracket is computed to be
$$\left( \!\!\! \begin{array}{c}
-\frac14\cos(\phi_1 + \phi_4-2\phi_5)
+\frac14\cos(\phi_1-\phi_3)
+\frac14 
\\
-\frac12\cos(\phi_1-\phi_3)
-\frac{3}{16}\cos(\phi_2- \phi_4)
-\frac14\cos(\phi_2- \phi_5)
+\frac{1}{16}\cos(\phi_2-2\phi_3+\phi_4)
-\frac12
\\
\frac18 \\
0
\\
-\frac{3}{16}\cos(\phi_5-\phi_4)
+\frac{1}{16}\cos(2\phi_3 -\phi_4-\phi_5)
\end{array}\!\!\! \right)\, .
$$ 
The resonant terms in this expression constitute  the order $\sim \varepsilon^2$ part in the normal form as given in the statement of the proposition. The final statement follows because $\cos \langle k, \phi\rangle  = \frac12 e^{i\langle k, \phi\rangle } + \frac{1}{2}e^{-i\langle k, \phi\rangle }$. 
\end{proof}
\noindent Note that only $11$ of the Fourier coefficients of the second order normal form are  nonzero.

\section{Conclusion}
We presented a first order and a second order reconstruction method for coupled phase equations, that work by fitting the time $T$-map of a resonant normal form to  observations of solutions to the equations of motion. The choice for a library  that consists solely of resonant terms circumvents the problem that phase equations can be hard to distinguish in the presence of noise, and may not even be uniquely defined. A resonant reconstruction is moreover sparse by design. This reduced the risk of overfitting and makes our methods robust to noise and uncertainty in the observations. Under natural conditions, we were able to prove that our reconstruction methods estimate the coefficients of the first and second order normal forms correctly. We also demonstrated the accuracy of our methods by means of several numerical examples.  
\appendix 
\section{Least squares and the pseudo-inverse} \label{app:pseudoinverse}
Let $\Theta: \mathbb{C}^K \to \mathbb{C}^M$ be injective and let $\Delta \in \mathbb{C}^M$. The least squares solution to the equation $\Theta (A)  = \Delta$ is defined as the (unique) minimizer $\hat A\in \mathbb{C}^K$ of the sum of squares
 \[ E(B) :=  \sum_{m=1}^M  | \Delta^{(m)}- (\Theta B)_m |^2\, .
 \]
Here we prove the well-known fact that $\hat A = \Theta^+(\Delta)$. 
\begin{lemma}\label{LSM_thm}
   Assume that $\Theta:\mathbb{C}^{K}\to\mathbb{C}^M$ is injective. Then the minimizer $\hat A = {\rm arg\ \! min}_{B} \, E(B)$
is given by the formula $$\hat A = \Theta^+(\Delta) \in \mathbb{C}^K\, , $$ 
in which $\Theta^+ = \left({\Theta}^H\Theta\right)^{-1}\Theta^{H}$ is the Moore-Penrose pseudo-inverse of $\Theta$, and $\Theta^{H}= \overline{ \Theta}^T$ denotes its Hermitian transpose. 
\end{lemma}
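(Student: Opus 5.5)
We will prove Lemma \ref{LSM_thm} by reducing the minimization of $E$ to the normal equations and then using injectivity of $\Theta$ to solve them uniquely. First we write $E(B) = \|\Delta - \Theta B\|_2^2 = \langle \Delta - \Theta B, \Delta - \Theta B\rangle$, with the standard Hermitian inner product on $\mathbb{C}^M$. Injectivity of $\Theta$ makes the $K\times K$ matrix $\Theta^H\Theta$ Hermitian positive definite, because $x^H\Theta^H\Theta x = \|\Theta x\|_2^2 > 0$ for $x\neq 0$. In particular $\Theta^H\Theta$ is invertible, so $\Theta^+ = (\Theta^H\Theta)^{-1}\Theta^H$ is well defined and $\hat A := \Theta^+(\Delta)$ is the unique solution of the normal equations $\Theta^H\Theta \hat A = \Theta^H\Delta$.

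The key step is an orthogonality decomposition. From the normal equations, the residual $r := \Delta - \Theta\hat A$ satisfies $\Theta^H r = 0$, so $r$ is orthogonal to the range of $\Theta$. For an arbitrary $B\in\mathbb{C}^K$ we write $\Delta - \Theta B = r + \Theta(\hat A - B)$. Expanding the squared norm, the cross terms are $2\,\Re\langle r, \Theta(\hat A - B)\rangle = 2\,\Re\big((\hat A - B)^H\Theta^H r\big)$, which vanish. This gives the Pythagorean identity
\[
E(B) = E(\hat A) + \|\Theta(\hat A - B)\|_2^2 .
\]

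It follows immediately that $E(B)\geq E(\hat A)$ for every $B$, so $\hat A$ is a minimizer. Equality holds only if $\Theta(\hat A - B)=0$, and injectivity then forces $B=\hat A$, so the minimizer is unique. This also justifies the word ``unique'' in the definition preceding the lemma.

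The only point needing care is the complex (rather than real) setting. Differentiating $E$ with respect to $B$ would require treating $B$ and $\overline{B}$ separately, which is why we use the completing-the-square argument above instead. We expect no real obstacle beyond checking the sign and conjugation conventions in the cross term.
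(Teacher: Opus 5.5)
Your proof is correct, but it takes a different route from the paper's.

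\textbf{The paper's route.} The paper argues variationally. It expands $E(B)$ and computes the derivative of $\varepsilon\mapsto E(B+\varepsilon Z)$ at $\varepsilon=0$, which equals $Z^T\overline{w}+\overline{Z}^T w$ with $w=\Theta^H\Theta B-\Theta^H\Delta$. Taking $Z$ real and then purely imaginary, it concludes that $w$ vanishes at a minimizer, so any minimizer solves the normal equations, whose unique solution is $\Theta^+\Delta$. This is exactly the computation with $Z$ and $\overline{Z}$ handled separately that you chose to avoid, and the complex conjugation causes no real trouble there.

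\textbf{What each argument delivers.} The paper's argument is only a first-order necessary condition. It shows that a minimizer, if one exists, must equal $\Theta^+\Delta$. Existence is taken for granted: the text before the lemma simply defines $\hat A$ as the unique minimizer. Completing it would need an extra coercivity or convexity argument. Your Pythagorean identity $E(B)=E(\hat A)+\|\Theta(\hat A-B)\|_2^2$ gives existence, global minimality and uniqueness at once, using injectivity only at the final step. It therefore also justifies the uniqueness that the paper assumes.

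The derivative approach has the merit of being the standard template that extends to non-quadratic objectives, where no exact completing-the-square identity is available. For this quadratic problem, your orthogonality argument is the cleaner and more self-contained proof.
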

\begin{proof}
   Observe that we can write
\begin{align}\nonumber 
 E(B) & =   (\Delta- \Theta B)^T \overline{(\Delta - \Theta B)}   =  \Delta^T\overline{\Delta} - B^T\Theta^T \overline{\Delta}   -   \Delta^T\overline{\Theta} \, \overline{B} +   B^T \Theta^T \overline{\Theta}\, \overline B.
\end{align}
Differentiating this at $B\in \mathbb{C}^K$ in the direction of $Z\in \mathbb{C}^K$ we find 
 \begin{align}\nonumber 
  \left.\frac{d}{d\varepsilon}\right|_{\varepsilon=0} \!\!\!\!\! E(B+ \varepsilon Z)   \nonumber 
 &= - Z^T \Theta^T\overline{\Delta} -   \Delta^T \overline{\Theta}\, \overline{Z}  +  Z^T\Theta^T\overline{\Theta}\, \overline{B} +   B^T \Theta^T\overline{\Theta}\, \overline{Z}  \\ 
 &=   Z^T \left( (\Theta^T\overline{\Theta}) \overline B - \Theta^T\overline{\Delta} \right) +   \left( B^T(\Theta^T\overline{\Theta}) - \Delta^T \overline{\Theta} \right)\overline{Z}   \nonumber \\ 
 &=   Z^T \overline{ \left( ({\Theta}^H\Theta)B - \Theta^{H}\Delta \right) } +   \overline{Z}^T \left( ({\Theta}^H\Theta)B - \Theta^{H}\Delta \right)\, .   \nonumber 
 \end{align}
This expression must vanish for all $Z$ at the minimizer $B=\hat A$. Choosing $Z$ real, we see that the real part of $({\Theta}^H\Theta)\hat A - \Theta^{H}\Delta$ must vanish, while choosing $Z$ purely imaginary, we find that the imaginary part of the same expression must vanish. It follows that $({\Theta}^H\Theta)\hat A - \Theta^{H}\Delta = 0$. Since $\Theta$ is injective, $\Theta^{H}\Theta$ is invertible,  so 
$\hat A = \Theta^+\Delta  = \left( {\Theta}^{H}\Theta\right)^{-1}\Theta^{H}\Delta$. 
\end{proof} 

\section{Lie transformations}\label{app:Lie}
The normal form of a differential equation is a standardized asymptotic representation of it. This representation is typically achieved by applying coordinate transformations in phase space that bring the equation into a certain prescribed form. These coordinate transformations are usually themselves taken to be the flow of a vector field. Such transformations go under the name ``Lie transformations''. We study Lie transformations in some detail in this appendix.  We shall apply them to compute normal forms of coupled oscillator networks in Appendices \ref{sec:normalformappendix} and \ref{app:2ndorderNF}.

For a (globally) Lipschitz continuous vector field $H: \mathbb{T}^n\to\mathbb{R}^n$ we denote by $\Gamma_H: \mathbb{T}^n \times \mathbb{R} \to \mathbb{T}^n$  its flow, which is defined for all time.  It is well-known that $\Gamma_H$ is $C^k$ when $H$ is $C^k$. We denote by $\Gamma_H^{s}(\phi):=\Gamma_H(\phi, s)$ the time-$s$ flow. It holds that $\Gamma_H^0 ={\rm Id}_{\mathbb{T}^n}$ and $\Gamma_H^{s_1}\circ \Gamma_H^{s_2}=\Gamma_H^{s_1+s_2}$. Thus, each $\Gamma_H^s$ is a diffeomorphism, with inverse $\Gamma_H^{-s}$. By definition, the family $\Gamma_H^s$ satisfies the differential equation 
\begin{equation} \label{flow_derivativeextra}
    \frac{\partial }{\partial s} \Gamma_H^s(\phi) = H(\Gamma_H^s( \phi))\, .
\end{equation}
 This identity can be used to  compute a Taylor expansion of $\Gamma_H^s$ with respect to $s$. For $C^1$-smooth vector fields $F,G:\mathbb{T}^n\to\mathbb{R}^n$, we shall denote  by $D_F(G): \mathbb{T}^n\to\mathbb{R}^n$  the  derivative of $G$ in the direction of  $F$, i.e., 
 $$D_F(G)(\phi):=DG(\phi)\cdot F(\phi)\, .$$ 
 The statement on the Taylor expansion of $\Gamma_H^s$ is the following.
\begin{proposition}\label{prop:nearidexpansion} Let  $H:\mathbb{T}^n\to\mathbb{R}^n$ be Lipschitz continuous. Then 
\begin{equation}\label{eq:Gammaapprox1storder}
\Gamma_H^s(\phi) = \phi + r^{(1)}(\phi, s) \ \mbox{for}\ r^{(1)}(\phi,s ) := \int_0^sH(\Gamma_H^{\sigma}(\phi))\, d\sigma\, .
\end{equation}
When $H$ is $C^k$ for some $k\geq 1$, then $s\mapsto \Gamma_H^s(\phi)$ is $C^{k+1}$ and 
\begin{equation}\label{eq:Gammaapprox1kthorder}
    \Gamma_H^s(\phi)= \phi + \sum_{m=1}^{k} \frac{s^m}{m!} D_H^{m-1}(H) (\phi) + r^{(k+1)}(\phi, s) \, ,
    \end{equation}
    in which $r^{(k+1)}(\phi, s) = \frac{1}{k!}\int_0^s D_H^k(H)(\Gamma_H^{\sigma} (\phi))(s-\sigma)^kd\sigma$.
\end{proposition}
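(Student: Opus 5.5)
The plan is to derive both claims from the flow identity \eqref{flow_derivativeextra} and the fundamental theorem of calculus, inducting on the order of the Taylor expansion.

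Formula \eqref{eq:Gammaapprox1storder} is immediate. Because $H$ is globally Lipschitz, the flow exists for all $s$. Integrating \eqref{flow_derivativeextra} from $\sigma=0$ to $\sigma=s$ and using $\Gamma_H^0(\phi)=\phi$ gives $\Gamma_H^s(\phi)=\phi+\int_0^s H(\Gamma_H^\sigma(\phi))\,d\sigma$, which is exactly the claimed $r^{(1)}$.

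For the higher order statement, assume $H\in C^k$ with $k\geq 1$. First I would prove, by induction on $m$, that for $1\leq m\leq k+1$ the map $s\mapsto\Gamma_H^s(\phi)$ is $C^m$ and
$$\frac{\partial^m}{\partial s^m}\Gamma_H^s(\phi) = D_H^{m-1}(H)(\Gamma_H^s(\phi))\, .$$
The base case $m=1$ is \eqref{flow_derivativeextra}. For the induction step, note that $D_H^{m-1}(H)$ is $C^{k-m+1}$, hence at least $C^1$ as long as $m\leq k$. Differentiating the composition with the chain rule and inserting \eqref{flow_derivativeextra} once more gives
$$\frac{\partial}{\partial s}\,D_H^{m-1}(H)(\Gamma_H^s(\phi))=D\big(D_H^{m-1}(H)\big)(\Gamma_H^s(\phi))\cdot H(\Gamma_H^s(\phi))=D_H^{m}(H)(\Gamma_H^s(\phi))\, ,$$
by the definition $D_F(G)=DG\cdot F$. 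At the last step $m=k+1$ the derivative $D_H^k(H)(\Gamma_H^s(\phi))$ is a continuous function of $s$, because $D_H^k(H)$ is $C^0$ and the flow is continuous. This gives the $C^{k+1}$ regularity.

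Next I apply Taylor's theorem with integral remainder of order $k$ to $g(s):=\Gamma_H^s(\phi)$ at $s=0$, working componentwise in a lift to $\mathbb{R}^n$:
$$g(s)=\sum_{m=0}^k \frac{s^m}{m!}g^{(m)}(0)+\frac{1}{k!}\int_0^s g^{(k+1)}(\sigma)(s-\sigma)^k\,d\sigma\, .$$
At $s=0$ the flow is the identity, so $g^{(m)}(0)=D_H^{m-1}(H)(\phi)$ for $m\geq 1$, and $g(0)=\phi$. Substituting $g^{(k+1)}(\sigma)=D_H^k(H)(\Gamma_H^\sigma(\phi))$ into the remainder produces exactly the stated $r^{(k+1)}$.

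The only delicate point is the regularity bookkeeping in the induction. One must check that $D_H^{m-1}(H)$ is differentiable exactly up to $m=k$, so that the $(k+1)$-st derivative exists and is continuous but need not be differentiable any further. One also has to handle the torus. Since $H$ takes values in $\mathbb{R}^n$, I would work with a lift of $s\mapsto\Gamma_H^s(\phi)$ to $\mathbb{R}^n$, which makes the expansion meaningful as an identity in $\mathbb{R}^n$ modulo $2\pi\mathbb{Z}^n$.
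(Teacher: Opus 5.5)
Your proposal is correct and follows essentially the same route as the paper: integrate the flow equation to get the first formula, prove $\frac{\partial^m}{\partial s^m}\Gamma_H^s(\phi)=D_H^{m-1}(H)(\Gamma_H^s(\phi))$ by induction using the chain rule, and then apply Taylor's theorem with integral remainder. You also make explicit two details the paper leaves implicit, namely that $D_H^{m-1}(H)\in C^{k-m+1}$ and that the expansion is taken for a lift of the flow to $\mathbb{R}^n$.
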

\begin{proof}
    The Picard-Lindel\"of theorem guarantees that $s\mapsto \Gamma_H^s(\phi)$ is continuous. It thus follows by induction from \eqref{flow_derivativeextra} that $s\mapsto \Gamma_H^s(\phi)$ is $C^{k+1}$ when $H$ is $C^k$. In particular, equation \eqref{flow_derivativeextra} is well-defined. Integrating it from $0$ to $s$ gives \eqref{eq:Gammaapprox1storder}. 
     Next, we claim that
\begin{equation}\label{eq:derivativeGamma}
        \frac{\partial^m}{\partial s^m}\Gamma_H^s(\phi) = D_H^{m-1}(H)(\Gamma_H^s(\phi))\ \mbox{for}\  1\leq m\leq k+1\, .
        \end{equation}
        Indeed, for $m=1$ equation \eqref{eq:derivativeGamma} reduces to \eqref{flow_derivativeextra}. Moreover, assuming \eqref{eq:derivativeGamma} for $1\leq m \leq k$, differentiation of \eqref{eq:derivativeGamma} with respect to $s$ yields  $$\frac{\partial^{m+1}}{\partial s^{m+1}}\Gamma_H^s(\phi) = D(D^{m-1}_H(H))(\Gamma_H^s(\phi))\cdot H(\Gamma_H^s(\phi))  = D_H^{m}(H) ( \Gamma_H^s(\phi))\, .$$ 
    Thus, \eqref{eq:derivativeGamma} follows by induction. In particular, we find that $$\left. \frac{\partial^m}{\partial s^m}\Gamma_H^s(\phi)\right|_{s=0} = D_H^{m-1}(H)(\phi)\, .$$ The statement of the Proposition now follows from Taylor's theorem with integral remainder.  
\end{proof}
\begin{remark}\label{rem:flowremark}
    The integral equation \eqref{eq:Gammaapprox1storder} can be used to inductively compute bounds on the uniform norms of $\Gamma_H^s$ and its spatial derivatives. Most importantly, it is clear from \eqref{eq:Gammaapprox1storder} that $$\sup_{\phi\in\mathbb{T}^n} \|\Gamma_H^s(\phi)_j-\phi_j\| \leq s \|H_j\|_0\, .$$ 
    Bounds on the spatial derivatives of the flow $\Gamma_H^s$ can be obtained in the usual way: when $H$ is $C^1$, then so is $\Gamma_H^s$ and one can  differentiate \eqref{eq:Gammaapprox1storder} to $\phi$. The resulting equations can be used to derive integral inequalities for the uniform norms of the spatial derivatives of $\Gamma_H^s$,   which can then be solved using Gronwall's lemma. One can then continue estimating the second spatial derivatives of $\Gamma_H^s$, etc. We omit the details.
\end{remark}
\begin{remark}\label{remark:transformedVariableBound}
We may estimate the remainder in Proposition \ref{prop:nearidexpansion} by 
$$| r^{(k+1)}(\phi, s) | \leq    \frac{s^{k+1} \|D_H^k(H)\|_{C^0}}{(k+1)!}\ \mbox{for}\ k=0,1,\ldots \, . $$
Using that $D_H^k(H)=D(\cdots D(DH\cdot H)\cdot H)\cdot H$, the norm $\|D_H^k(H)\|_{0}$ can in turn be bounded in terms of the $C^k$-norm  of $H$. 
\end{remark}
\noindent 
We now investigate how the transformations $\Gamma_H^s$ transform a vector field $F$ on $\mathbb{T}^n$. Let $s\mapsto \phi(s)$ be an integral curve of $F$, so $\frac{d\phi(s)}{ds} = F(\phi(s))$, and let   $\Gamma: \mathbb{T}^n\to\mathbb{T}^n$ be a diffeomorphism.  Then the curve $ \psi(s) :=\Gamma(\phi(s))$ satisfies 
$$\frac{d \psi(s)}{ds} = D\Gamma(\phi(s))\cdot F(\phi(s)) = D\Gamma(\Gamma^{-1}(\psi(s)))\cdot F(\Gamma^{-1}(\psi(s))) \, .$$
This means that $\psi$ is in integral curve of  the {\it pushforward} vector field $\Gamma_*F: \mathbb{T}^n\to\mathbb{R}^n$  defined by 
$$\Gamma_*F = (D\Gamma \cdot F)\circ \Gamma^{-1} \, .$$
We want to compute the Taylor expansion with respect to $s$ of $(\Gamma_H^s)_*F$. We first recall the following definition
\begin{definition}
Let $F,G:\mathbb{T}^n\to\mathbb{R}^n$ be $C^2$-smooth vector fields. Then their Lie bracket $[F,G]$ is defined to be the vector field 
\begin{align}\label{lieDef}
[F,G] := \at{\frac{d}{ds}}{s=0}(\Gamma_F^s)_*G
\end{align}
It is given by the formula $[F,G]=D_G(F)-D_F(G)=DF\cdot G -DG\cdot F$. 
\end{definition}
\noindent In the following proposition, we denote by ${\rm ad}_H$ the linear operator of taking the Lie bracket with $H$, that is, ${\rm ad}_H(F):=[H,F]$.
\begin{proposition}\label{transformationvectorfield}
Let $F$ be $C^k$ and $H$ be $C^{k+1}$ for some $k\geq 1$. Then  $s \mapsto (\Gamma_H^s)_*F(\phi)$ is $C^{k}$ and 
$$(\Gamma_H^s)_*F(\phi) = \sum_{m=0}^{k-1}\frac{s^m}{m!}{\rm ad}_{H}^m(F)(\phi) + \rho^{(k)}(\phi, s)\, ,$$
in which $\rho^{(k)}(\phi,s)= \frac{1}{(k-1)!}\int_0^s{\rm ad}_H^{k}((\Gamma_H^{\sigma})_*F)(\phi)(s-\sigma)^{k-1}d\sigma$.
\end{proposition}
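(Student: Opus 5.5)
The proof follows the pattern of Proposition \ref{prop:nearidexpansion}: derive a differential equation in $s$ for the family $G^s:=(\Gamma_H^s)_*F$, show that its $m$-th derivative is ${\rm ad}_H^m$ applied to $G^s$, and finish with Taylor's theorem with integral remainder.

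\textbf{Step 1.} The first step is the identity
$$\frac{\partial}{\partial s}(\Gamma_H^s)_*F = {\rm ad}_H\big((\Gamma_H^s)_*F\big) = [H,(\Gamma_H^s)_*F].$$
It follows from the group property $\Gamma_H^{s+\sigma}=\Gamma_H^\sigma\circ\Gamma_H^s$ and the chain rule for pushforwards, $(\Gamma\circ\Gamma')_*=\Gamma_*\Gamma'_*$. These give
$$(\Gamma_H^{s+\sigma})_*F=(\Gamma_H^\sigma)_*G^s .$$
Differentiating in $\sigma$ at $\sigma=0$ and using the definition \eqref{lieDef} of the Lie bracket (with $H$ in the role of the generating field) yields $\partial_sG^s=[H,G^s]$.

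There is a subtlety here. Definition \eqref{lieDef} is stated for $C^2$ fields, but $G^s$ is only as smooth as $F$ allows. I would therefore verify the formula $\frac{d}{d\sigma}|_{0}(\Gamma_H^\sigma)_*G=DH\cdot G-DG\cdot H$ directly, by differentiating $(D\Gamma_H^\sigma\cdot G)\circ\Gamma_H^{-\sigma}$. The ingredients are
$$\partial_\sigma D\Gamma_H^\sigma=DH(\Gamma_H^\sigma)\cdot D\Gamma_H^\sigma \qquad\text{and}\qquad \partial_\sigma\Gamma_H^{-\sigma}=-H(\Gamma_H^{-\sigma}).$$
This computation only needs $G\in C^1$ and $H\in C^2$. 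The same computation at a general $\sigma$ (rather than only at $0$) gives continuity of $\partial_sG^s$ in $s$.

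\textbf{Step 2.} Next I would iterate. By induction on $m$, for $1\le m\le k$,
$$\partial_s^m G^s={\rm ad}_H^m(G^s)=(\Gamma_H^s)_*\big({\rm ad}_H^mF\big).$$
The second equality holds because pushforward by the flow of $H$ commutes with ${\rm ad}_H$: $\Gamma_*[H,X]=[\Gamma_*H,\Gamma_*X]$ and $(\Gamma_H^s)_*H=H$. Either form can be used in the induction.

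The regularity bookkeeping works as follows. Since $F\in C^k$ and $H\in C^{k+1}$, the field ${\rm ad}_H^mF$ is $C^{k-m}$. Each application of Step 1 consumes one derivative of the field being pushed forward, and needs $H\in C^2$ and then higher, so the iteration can be carried out up to order $k$. This shows that $s\mapsto G^s(\phi)$ is $C^k$. At $s=0$ we have $G^0=F$, hence $\partial_s^mG^s|_{s=0}={\rm ad}_H^m(F)$.

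\textbf{Step 3.} Taylor's theorem with integral remainder, applied to the $C^k$ function $s\mapsto G^s(\phi)$ up to order $k-1$, gives
$$G^s(\phi)=\sum_{m=0}^{k-1}\frac{s^m}{m!}\,{\rm ad}_H^m(F)(\phi)+\frac1{(k-1)!}\int_0^s\partial_\sigma^kG^\sigma(\phi)\,(s-\sigma)^{k-1}\,d\sigma .$$
By Step 2, $\partial_\sigma^kG^\sigma={\rm ad}_H^k\big((\Gamma_H^\sigma)_*F\big)$, which is exactly the stated remainder $\rho^{(k)}$.

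I expect the main obstacle to be the regularity accounting in Steps 1–2, namely that at order $k$ the expression ${\rm ad}_H^k(G^\sigma)$ is well-defined and continuous in $\sigma$. The cleanest way to handle it is to use the commuted form $(\Gamma_H^\sigma)_*({\rm ad}_H^kF)$. Here ${\rm ad}_H^kF$ is merely continuous, and pushforward by a $C^{k+1}$ diffeomorphism depending continuously on $\sigma$ preserves continuity. The remaining steps are routine.
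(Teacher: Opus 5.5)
Your proposal is correct and follows essentially the same route as the paper. The group property gives the linear equation $\partial_s(\Gamma_H^s)_*F={\rm ad}_H\big((\Gamma_H^s)_*F\big)$, iterating it yields $\partial_s^m(\Gamma_H^s)_*F={\rm ad}_H^m\big((\Gamma_H^s)_*F\big)$, and Taylor's theorem with integral remainder finishes the argument. Your extra regularity checks make explicit what the paper states in one line (that the right-hand side is $C^{k-1}$ in $(\phi,s)$): you verify the bracket formula directly for $C^1$ fields, and you use $(\Gamma_H^s)_*{\rm ad}_H={\rm ad}_H(\Gamma_H^s)_*$ to get continuity of the order-$k$ derivative.
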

\begin{proof}
Recall that the flow of a $C^{k+1}$ vector field is $C^{k+1}$, meaning that $(\phi,s)\mapsto \Gamma^{s}_H(\phi)$ is $C^{k+1}$. Therefore, the defining formula for $(\Gamma_H^s)_*F$ shows that $s\mapsto (\Gamma_H^s)_*F(\phi)$ is $C^{k}$.  Next, observe that the family of pushforward vector fields $(\Gamma_H^s )_{*}F$ satisfies the linear differential equation
     \begin{align*}
        {\frac{\partial}{\partial t}}(\Gamma_H^t )_{*}F   =\at{\frac{d}{dh}}{h=0} \!\!\! (\Gamma_H^h)_{*}(\Gamma_H^s )_{*}F  =[H,(\Gamma_H^s )_{*}F] = \mathrm{ad}_H((\Gamma_H^s )_{*}F)\, .
    \end{align*}
    The right hand side of this equation is clearly $C^{k-1}$ as a function of $(\phi, s)$. Differentiating this identity to $s$ (a maximum of $k-1$ times) gives us the higher derivatives of $s\mapsto (\Gamma_H^s )_{*}F$:
\begin{equation} \label{inductive_derivative_push_forward}
  \frac{\partial^m}{\partial s^m} (\Gamma_H^s )_{*} F = \mathrm{ad}_H^m ((\Gamma_H^s )_{*} F)\ \mbox{for}\ 1\leq m\leq k\, .
\end{equation}
In particular, 
$$\left.\frac{\partial^m}{\partial s^m}\right|_{s=0} (\Gamma_H^s )_{*} F = {\rm ad}_H^m(F)\  \mbox{for} \ 1\leq m\leq k-1\, .$$ 
The statement of the proposition thus follows from Taylor's theorem with integral remainder.
\end{proof}
\begin{remark}\label{remark:boundRemainderVectorField}
    The remainder term in Proposition \ref{transformationvectorfield} is bounded by
    $$\|\rho^{(k)}(\phi,s)\| \leq \frac{s^{k}}{k!}  \sup_{0\leq \sigma\leq s} \| {\rm ad}_H^k((\Gamma^{\sigma}_H)_*F) \|_{0}\,   .$$
This can in turn be bounded in terms of the $C^{k}$-norm of $F$ and the $C^{k+1}$-norm of  $H$.
\end{remark}

\section{Normal forms}\label{sec:normalformappendix}
We now apply the Lie transformations introduced in Appendix \ref{app:Lie} to compute normal forms of weakly coupled phase oscillator systems of the form 
\begin{equation}\label{main}
\dot{\phi} = F(\phi, \varepsilon) = \omega + \varepsilon F^{(1)}(\phi) + \ldots + \varepsilon^kF^{(k)}(\phi) +  \varepsilon^{k+1}R^{(k+1)}(\phi, \varepsilon) \, .
\end{equation}
Here, $\phi=(\phi_1, \ldots, \phi_n)\in \mathbb{T}^n$,  $\omega\in \mathbb{R}^n$ is a vector of frequencies, and  $\varepsilon$ is a small parameter. Equations \eqref{equationsofmotionintro} in the introduction are a special case. We assume that $F^{(1)}, \ldots, F^{(k)}$ are smooth vector fields on $\mathbb{T}^n$, and also that $R^{(k+1)}:\mathbb{T}^n\times [-\varepsilon_0, \varepsilon_0]\to \mathbb{R}^n$ is smooth and bounded. 

Now let $H^{(1)}:\mathbb{T}^n\to\mathbb{R}^n$ be another smooth vector field on $\mathbb{T}^n$. We consider the Lie transformation $\Gamma^{\varepsilon}_{H^{(1)}}: \mathbb{T}^n\to\mathbb{T}^n$, i.e., the time-$\varepsilon$ flow of $H^{(1)}$.  Setting $s=\varepsilon$ and $H=H^{(1)}$ in Proposition \ref{prop:nearidexpansion}, we find that $$\Gamma_{H^{(1)}}^{\varepsilon}(\phi) = \phi + \varepsilon H^{(1)}(\phi) + \frac{1}{2}\varepsilon^2DH^{(1)}(\phi)H^{(1)}(\phi) + \mathcal{O}(\varepsilon^3)\, .$$ 
More importantly, choosing $s=\varepsilon$ in Proposition \ref{transformationvectorfield} we find that $\phi(t)$ satisfies \eqref{main} if and only if $t\mapsto \Gamma_{H^{(1)}}^{\varepsilon}(\phi(t))$ is an integral curve of the transformed vector field
\begin{align}\notag 
 \left(\Gamma_{H^{(1)}}^{\varepsilon}\right)_*F  & = F + \varepsilon [H^{(1)},F](\phi) + \frac{1}{2}\varepsilon^2[H^{(1)},[H^{(1)},F]](\phi) + \mathcal{O}(\varepsilon^3)  = \\ 
&   \omega + \varepsilon \left( F^{(1)} + [H^{(1)}, \omega] \right)  + \notag \\
& \varepsilon^2\left(F^{(2)} + [H^{(1)}, F^{(1)}] + \frac{1}{2}[H^{(1)},  [ H^{(1)}, \omega]] \right) + \mathcal{O}(\varepsilon^3) \, .\label{eq:transformedFfirst}
\end{align}
We see in particular that the coordinate transformation $\Gamma_{H^{(1)}}^{\varepsilon}$ changes the first order term $F^{(1)}$ of our vector field into $$\overline{F}^{(1)}:= F^{(1)}+[H^{(1)}, \omega] = F^{(1)}- {\rm ad}_{\omega}(H^{(1)})\, .$$ 
By choosing $H^{(1)}$ we may thus subtract any element in the image of ${\rm ad}_{\omega}$ from $F^{(1)}$.
Note that the operator ${\rm ad}_{\omega}$ is given by the simple formula 
$${\rm ad}_{\omega}(H) = [\omega, H] = (D_H\omega) - (D_\omega H) = - (D_\omega H) = -\sum_{i=1}^n \omega_i \frac{\partial{H}}{{\partial \phi_i}}\, ,$$
as any directional derivative of any constant vector field  vanishes.

By applying a second Lie transformation 
$\Gamma_{H^{(2)}}^{\varepsilon^2} = {\rm Id}_{\mathbb{T}^n} + \varepsilon^2H^{(2)} + \mathcal{O}(\varepsilon^4)$, we can further transform our vector field into
\begin{align}\notag 
&  
\left(\Gamma_{H^{(2)}}^{\varepsilon^2}\right)_*\left(\Gamma_{H^{(1)}}^{\varepsilon}\right)_*F  = \notag
  \omega + \varepsilon \overline{F}^{(1)}  +  \varepsilon^2 \overline{F}^{(2)} + \mathcal{O}(\varepsilon^3) \, .
\end{align}
in which 
\begin{equation}\label{eq:secondterm}
\overline{F}_2:=   F^{(2)} +  [H^{(1)}, F^{(1)}]  + \frac{1}{2}[H^{(1)},  [ H^{(1)}, \omega]]  - {\rm ad}_{\omega}(H^{(2)})\, . 
\end{equation}
This follows from Proposition \ref{transformationvectorfield},  substituting $s = \varepsilon^2$ and  $H=H^{(2)}$, and  replacing $F$ by  $\left(\Gamma_{H^{(1)}}^{\varepsilon}\right)_*(F)$ as given in \eqref{eq:transformedFfirst}. Note that applying $\Gamma_{H^{(2)}}^{\varepsilon^2}$ does not change $\overline{F}^{(1)}$, the order-$\varepsilon$ part of $\left(\Gamma_{H^{(1)}}^{\varepsilon}\right)_*(F)$. Moreover, we again see that we may subtract any element in the image of ${\rm ad}_{\omega}$ from  the order-$\varepsilon^2$ part of the vector field by choosing $F=H^{(2)}$ appropriately. 

One can obviously continue transforming the vector field by applying further Lie transformations $\Gamma_{H^{(3)}}^{\varepsilon^3}, \ldots, \Gamma_{H^{(k)}}^{\varepsilon^k}$. The transformation  $\Gamma_{H^{(r)}}^{\varepsilon^r}={\rm Id}_{\mathbb{T}^n} + \varepsilon^rH^{(r)}(\phi) + \mathcal{O}(\varepsilon^{2r})$ clearly will not change the lower order terms $\overline{F}^{(1)}, \ldots, \overline{F}^{(r-1)}$ of the vector field. Moreover, $\Gamma_{H^{(r)}}^{\varepsilon^r}$ will change the order-$\varepsilon^r$ part of the vector field by subtracting a term ${\rm ad}_{\omega}(H^{(r)})$ from it. We do not study the precise effects of these further Lie transformations, as we shall only use transformations of the form $\Gamma_{H^{(1)}}^{\varepsilon}$ and $\Gamma_{H^{(2)}}^{\varepsilon^2}$ in this paper.

Next, we turn to studying the transformed vector field 
\eqref{eq:transformedFfirst} order by order. Recall that the  order-$\varepsilon$ part of this transformed vector field is $\overline{F}^{(1)}=F^{(1)}-{\rm ad}_{\omega}(H^{(1)})$. Our goal is to ``simplify'' this expression  by choosing $H^{(1)}$. For instance, if we can find an $H^{(1)}$ with 
\begin{align}\label{eq:firsthomologicalequation}
{\rm ad}_{\omega}(H^{(1)})=F^{(1)}\, ,
\end{align} then we would have that $\overline{F}^{(1)}=0$. Equation \eqref{eq:firsthomologicalequation} for the unknown $H^{(1)}$ is an example of a so-called {\it homological equation}. To see if it can be solved, let us assume that $F^{(1)}$ and $H^{(1)}$ admit convergent Fourier expansions,  
\begin{align} \label{eq:F1expansion}
    & F^{(1)}_j(\phi) = \sum_{k \in \mathbb{Z}^n} A_{j, k}^{(1)} e^{i \langle k, \phi \rangle} \ , \ \mbox{with} \ A_{j,k}\in \mathbb{C}\, ,  \\  \label{eq:H1expansion}
& H^{(1)}(\phi)  = \sum_{k\in\mathbb{Z}^n} H_{j,k}^{(1)} e^{i\langle k, \phi\rangle}\ , \ \mbox{with} \ H_{j,k}\in \mathbb{C} \, ,
\end{align}
for $1\leq j \leq n$.  
Assuming that $H^{(1)}_j$ can be differentiated term-by-term,  
\begin{align}\label{Fbarexpansion}
\overline{F}^{(1)}_j(\phi) = F^{(1)}_j(\phi) - {\rm ad}_{\omega}(H^{(1)}_j) =  \sum_{k\in \mathbb{Z}^n} \left(A_{j,k}^{(1)} + i\langle k, \omega\rangle H_{j,k}^{(1)}\right) e^{i\langle k, \phi\rangle}\, .
\end{align}
This clearly shows that we cannot solve the homological equation \eqref{eq:firsthomologicalequation}.
 Instead, we can (formally) remove the $k$-th Fourier term in \eqref{Fbarexpansion} only when $\langle k, \omega\rangle \neq 0$, namely by choosing $H_{j,k}^{(1)}=-\frac{A_{j,k}^{(1)}}{i\langle k, \omega\rangle}$. The corresponding vector field $H^{(1)}$ is given by the formal Fourier series 
\begin{equation}\label{eq:H1expansionexplicit}
H^{(1)}_j(\phi) = -\sum_{\langle k, \omega\rangle \neq 0} \frac{A_{j,k}}{i\langle k, \omega\rangle} e^{i\langle k, \phi\rangle}\, .
\end{equation} 
With this choice for $H^{(1)}_j$, we in fact formally obtain
\begin{equation}
    \label{eq:firstordernfformal}
\overline{F}^{(1)}_j(\phi) = \sum_{\langle k, \omega\rangle = 0} A_{j,k} e^{i\langle k, \phi\rangle} \, .
\end{equation}
We note that this $\overline{F}^{(1)}_j$ is a formal sum 
of resonant terms only (i.e., terms of the form $A_{j, k}^{(1)} e^{i\langle k, \phi\rangle}$ with $\langle \omega, k\rangle = 0$). Nonresonant terms (those with $\langle \omega, k\rangle \neq 0$) are removed from $F^{(1)}_j$ by the coordinate transformation $\Gamma_{H^{(1)}}^{\varepsilon}$. Continuing in this way, we can similarly select $H^{(2)}$ so that $\overline{F}^{(2)}$ is a  formal linear combination of resonant terms, etc. We reserve a special name for the vector fields that we obtain in this way:
\begin{definition}
The vector field  
\begin{equation}\label{normalFormDef}
     \overline{F} =\omega + \varepsilon \overline{F}^{(1)} + \varepsilon^2 \overline{F}^{(2)}+ \dots \varepsilon^k \overline{F}^{(k)}+ \varepsilon^{k+1}\overline{R}^{(k+1)}
\end{equation}
on $\mathbb{T}^n$ is said to be in \emph{normal form} to order $k$ if for all $1\leq r\leq k$, it holds that $\overline{F}^{(r)}$ is a  sum of resonant terms only. 
\end{definition}
\noindent The analysis above shows that, formally (!), any vector field on $\mathbb{T}^n$ of the form \eqref{main} can be brought into normal form to any finite order by a finite sequence of Lie transformations, each being the flow of a formal vector field on $\mathbb{T}^n$. For example, we already showed that the first order part $F^{(1)}$ can be transformed into \eqref{eq:firstordernfformal}. The following result shows when this can  be done analytically:
\begin{theorem}\label{first_order_nf_thm}
Assume that 
\begin{equation}\label{assumptionsconvergence}
\sum_{k\in\mathbb{Z}^n} |A_{j,k}^{(1)}| < \infty \ \mbox{and}\ \sum_{\langle k, \omega\rangle \neq 0} \left| \frac{A_{j,k}}{\langle k, \omega\rangle}\right| <\infty\ \mbox{for all} \ 1\leq j \leq n\, . 
\end{equation}
Then the vector fields $F^{(1)}$, $H^{(1)}$,  and $\overline{F}^{(1)}$ on $\mathbb{T}^n$ defined in \eqref{eq:F1expansion}, \eqref{eq:firstordernfformal} and \eqref{eq:H1expansionexplicit} are continuous and satisfy the homological equation 
$$\overline{F}^{(1)}(\phi) = F^{(1)}(\phi) + (D_{\omega}H^{(1)})(\phi)  = \sum_{\langle k, \omega\rangle = 0} A_{j,k} e^{i\langle k, \phi\rangle}  \, .$$  Now consider  
 the parameter-family of vector fields 
 $$F(\phi, \varepsilon)=\omega+\varepsilon F^{(1)}(\phi) + \varepsilon^2R^{(2)}(\phi, \varepsilon)\ \mbox{on} \  \mathbb{T}^n\, , $$  and assume, in addition to \eqref{assumptionsconvergence}, that $F^{(1)}\in C^2(\mathbb{T}^n), H^{(1)}\in C^3(\mathbb{T}^n)$ and $R^{(2)}\in C^2(\mathbb{T}^n\times [-\varepsilon_0,\varepsilon_0])$. 
 Then 
 \begin{itemize}
     \item[{\it i)}] the map 
 $(\phi, \varepsilon)\mapsto 
 \Gamma^{\varepsilon}_{H^{(1)}}(\phi)\ \mbox{from}\ \mathbb{T}^n\times [-\varepsilon_0,\varepsilon_0] \ \mbox{to}\  \mathbb{T}^n$ is $C^3$, and 
 \begin{equation}
     \label{eq:flowestimate}
 \|\Gamma_{H^{(1)}}^{\varepsilon}(\phi)_j -\phi_j \| \leq \varepsilon \|H_j^{(1)}\|_0 \, .
 \end{equation}
 \item[{\it ii)}]
  the map $(\phi, \varepsilon) \mapsto (\Gamma^{\varepsilon}_{H^{(1)}})_*F(\phi)$ from $\mathbb{T}^n\times [-\varepsilon_0,\varepsilon_0]$ to $\mathbb{R}^n$ is $C^2$, and 
\begin{equation}\label{eq:firstexpansion}
(\Gamma^{\varepsilon}_{H^{(1)}})_*F(\phi) = \omega + \varepsilon \overline{F}^{(1)}(\phi) + \varepsilon^2\overline{R}^{(2)}(\phi, \varepsilon)\, , 
\end{equation}
    for a bounded continuous vector field $\overline{R}^{(2)}: \mathbb{T}^n\times[-\varepsilon_0, \varepsilon_0]\to \mathbb{R}^n$.
\end{itemize}

\end{theorem}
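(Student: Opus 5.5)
The proof has three parts: the Fourier-series statements, the regularity and estimate in \emph{i)}, and the expansion in \emph{ii)}.

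\textbf{Fourier statements.} By the first condition in \eqref{assumptionsconvergence}, the series \eqref{eq:F1expansion} converges absolutely and uniformly (Weierstrass M-test), so $F^{(1)}$ is continuous. The subseries \eqref{eq:firstordernfformal} converges for the same reason, so $\overline{F}^{(1)}$ is continuous. The second condition in \eqref{assumptionsconvergence} gives absolute and uniform convergence of \eqref{eq:H1expansionexplicit}, so $H^{(1)}$ is continuous. Reality of all three fields follows from $\overline{A_{j,k}}=A_{j,-k}$, since the index sets are symmetric under $k\mapsto -k$.

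\textbf{Homological equation.} Here we cannot naively differentiate term by term. The plan is to avoid derivatives and instead compute along the linear flow. For each $\phi$, evaluate $H^{(1)}(\phi+\omega t)-H^{(1)}(\phi)$ as a uniformly convergent series. Each term equals
$$-\frac{A_{j,k}}{i\langle k,\omega\rangle}e^{i\langle k,\phi\rangle}\bigl(e^{i\langle k,\omega\rangle t}-1\bigr) = -\int_0^t A_{j,k}e^{i\langle k,\phi+\omega s\rangle}\,ds .$$
Summing, which is justified by uniform convergence, gives
$$H^{(1)}(\phi+\omega t)-H^{(1)}(\phi)=-\int_0^t\bigl(F^{(1)}-\overline{F}^{(1)}\bigr)(\phi+\omega s)\,ds .$$
Differentiating at $t=0$ shows that $D_\omega H^{(1)}$ exists and equals $\overline{F}^{(1)}-F^{(1)}$. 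This is exactly the homological equation, and it holds even without assuming $H^{(1)}$ is $C^1$. Under the later hypothesis $H^{(1)}\in C^3$, $D_\omega H^{(1)}$ coincides with the classical directional derivative, and ${\rm ad}_\omega(H^{(1)})=-D_\omega H^{(1)}=F^{(1)}-\overline{F}^{(1)}$.

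\textbf{Part i).} Since $H^{(1)}\in C^3$ is globally Lipschitz on the compact torus, its flow $(\phi,s)\mapsto\Gamma^s_{H^{(1)}}(\phi)$ exists globally and is $C^3$ jointly in $(\phi,s)$, by smooth dependence on initial conditions (see Appendix \ref{app:Lie}). Restricting to $s=\varepsilon\in[-\varepsilon_0,\varepsilon_0]$ gives the regularity claim. The estimate \eqref{eq:flowestimate} is \eqref{eq:Gammaapprox1storder} together with Remark \ref{rem:flowremark}: $|\int_0^\varepsilon H^{(1)}_j(\Gamma^\sigma(\phi))d\sigma|\le|\varepsilon|\,\|H^{(1)}_j\|_0$.

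\textbf{Part ii).} The pushforward is $(\Gamma^\varepsilon)_*F=(D\Gamma^\varepsilon\cdot F(\cdot,\varepsilon))\circ\Gamma^{-\varepsilon}$. Here $D\Gamma^\varepsilon$ is $C^2$ in $(\phi,\varepsilon)$, $\Gamma^{-\varepsilon}$ is $C^3$, and $F$ is $C^2$ because $R^{(2)}\in C^2$. Composing, $(\Gamma^\varepsilon)_*F$ is $C^2$ in $(\phi,\varepsilon)$.

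For the expansion, apply Proposition \ref{transformationvectorfield} with $k=1$ and $s=\varepsilon$ to the $\varepsilon$-dependent field $F$. This gives
$$(\Gamma^\varepsilon)_*F=F+\int_0^\varepsilon{\rm ad}_{H^{(1)}}\bigl((\Gamma^\sigma)_*F\bigr)\,d\sigma .$$
The proposition is stated for a fixed field, but it applies pointwise in the parameter $\varepsilon$ of $F(\cdot,\varepsilon)$. Apply it once more to the integrand, this time to the constant field $\omega$. Then $(\Gamma^\sigma)_*F=\omega+\sigma\,{\rm ad}_{H^{(1)}}\omega+O(\sigma^2)+\varepsilon(\ldots)$, where ${\rm ad}_{H^{(1)}}\omega=-{\rm ad}_\omega H^{(1)}$.

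Collecting terms, the part of order $\varepsilon$ is $F^{(1)}+[H^{(1)},\omega]=F^{(1)}-{\rm ad}_\omega(H^{(1)})=\overline{F}^{(1)}$, by the homological equation.

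\textbf{Main obstacle: the remainder.} Everything else goes into $\varepsilon^2\overline{R}^{(2)}$. The cleanest route is to define
$$\overline{R}^{(2)}(\phi,\varepsilon):=\varepsilon^{-2}\bigl((\Gamma^\varepsilon)_*F-\omega-\varepsilon\overline{F}^{(1)}\bigr) .$$
Since $G(\phi,\varepsilon):=(\Gamma^\varepsilon)_*F(\phi)$ is $C^2$ in $\varepsilon$, the second-order Taylor formula with integral remainder gives
$$G(\phi,\varepsilon)=G(\phi,0)+\varepsilon\,\partial_\varepsilon G(\phi,0)+\int_0^\varepsilon(\varepsilon-\sigma)\,\partial_\varepsilon^2G(\phi,\sigma)\,d\sigma .$$
Here $G(\phi,0)=\omega$ and $\partial_\varepsilon G(\phi,0)=F^{(1)}+[H^{(1)},\omega]=\overline{F}^{(1)}$, by the computation above. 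So
$$\overline{R}^{(2)}(\phi,\varepsilon)=\varepsilon^{-2}\int_0^\varepsilon(\varepsilon-\sigma)\,\partial_\varepsilon^2G(\phi,\sigma)\,d\sigma ,$$
which is bounded by $\tfrac12\sup|\partial_\varepsilon^2G|$ and continuous, including at $\varepsilon=0$. This requires only $C^2$ regularity in $\varepsilon$, which is exactly what the hypotheses on $F^{(1)}$, $H^{(1)}$ and $R^{(2)}$ provide.
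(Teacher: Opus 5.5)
Your proof is correct and follows the paper's own route. It uses uniform convergence of the absolutely summable Fourier series for continuity and the homological equation, the flow and pushforward regularity of Appendix~\ref{app:Lie} together with Remark~\ref{rem:flowremark} for part \emph{i)}, and a Taylor expansion in $\varepsilon$ whose first-order coefficient is $F^{(1)}+[H^{(1)},\omega]=\overline{F}^{(1)}$. Your two refinements, obtaining $D_\omega H^{(1)}$ by integrating along $t\mapsto\phi+\omega t$ instead of differentiating term by term ``by definition'', and writing $\overline{R}^{(2)}$ explicitly as the integral Taylor remainder of $G(\phi,\varepsilon)=(\Gamma^{\varepsilon}_{H^{(1)}})_*F(\cdot,\varepsilon)(\phi)$ (with $\varepsilon$ entering both as flow time and in $F$), make rigorous two steps that the paper only asserts.
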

\begin{proof}
    The  assumptions in \eqref{assumptionsconvergence} imply that the Fourier series for $F^{(1)}$ and $H^{(1)}$ given in \eqref{eq:F1expansion} and \eqref{eq:firstordernfformal} are  absolutely convergent, so that the partial Fourier sums are uniformly convergent. As an immediate consequence, their limits $F^{(1)}$ and $H^{(1)}$ are continuous. 
    Because $$\sum_{\langle k, \omega\rangle = 0 } |A_{j,k}^{(1)}| \leq \sum_{k\in\mathbb{Z}^n} |A_{j,k}^{(1)}|\, ,$$ 
    the same is automatically true for $\overline{F}^{(1)}$. By definition, $$D_{\omega}H^{(1)}(\phi)=-\sum_{\langle k, \omega\rangle \neq 0} A_{j,k}^{(1)} e^{i\langle k, \phi\rangle} = \overline{F}^{(1)}(\phi) - F^{(1)}(\phi)\, ,$$ 
    so also $D_{\omega}H^{(1)}$ is absolutely convergent and continuous. To summarize: \eqref{assumptionsconvergence} implies that the homological equation $\overline{F}^{(1)}(\phi) = F^{(1)}(\phi) + (D_{\omega}H^{(1)})(\phi)$ holds as an equality of continuous functions. 
    
    The remaining statements of the theorem follow immediately from the results in Appendix  \ref{app:Lie} and 
\ref{sec:normalformappendix}. Indeed, $\Gamma_{H^{(1)}}^{\varepsilon}$ is well-defined because $H^{(1)}\in C^3(\mathbb{T}^n)$ is Lipschitz continuous, and formula \eqref{eq:flowestimate} was already proved in Remark \ref{rem:flowremark}.
    It was also shown in  Appendix \ref{app:Lie} that when $(\phi, \varepsilon)\mapsto F(\phi, \varepsilon)$ is $C^2$ and $(\phi, \varepsilon)\mapsto H^{(1)}(\phi, \varepsilon)$ is $C^3$, then $(\phi, \varepsilon)\mapsto \Gamma_{H^{(1)}}^{\varepsilon}(\phi)$ is $C^3$, and therefore $(\phi, \varepsilon)\mapsto (\Gamma_{H^{(1)}}^{\varepsilon})_*F(\phi)$ is $C^2$.   Formula \eqref{eq:firstexpansion} now follows from Taylor expansion of $(\Gamma^{\varepsilon}_{H^{(1)}})_*F(\phi)$ to $\varepsilon$.  It was shown in Appendix \ref{sec:normalformappendix} that the order-$\varepsilon$ term of this expansion is indeed given by $\overline{F}^{(1)}$.
\end{proof}
\section{The second order normal form}\label{app:2ndorderNF}
We now calculate the second order normal form of equation \eqref{main} {\it under the assumption that the first order normal form of $F$ vanishes}. We do this only formally, not considering the convergence of any of the Fourier series involved in the computations. Recall that the second order term  $\overline{F}_2$ in the second order normal form of $F$ is given by formula  \eqref{eq:secondterm}. This expression considerably simplifies when it so happens that $\overline{F}^{(1)} =0$, as this means that $H^{(1)}$ fully solves the  homological equation \eqref{eq:firsthomologicalequation}, i.e., that $[H^{(1)}, \omega] = - {\rm ad}_{\omega}(H^{(1)}) = - F^{(1)}$. This implies that \eqref{eq:secondterm} simplifies to
\begin{equation}\label{eq:F2barsimplified}
\overline{F}^{(2)} = F^{(2)} + \frac{1}{2} [H^{(1)}, F^{(1)}] - {\rm ad}_{\omega}(H^{(2)})\, ,
\end{equation}
where $F^{(1)}$ and $H^{(1)}$ are as given in \eqref{eq:F1expansion}, \eqref{eq:H1expansion} and \eqref{eq:H1expansionexplicit}, and $H^{(2)}$ is chosen to cancel the nonresonant terms $F^{(2)} + \frac{1}{2} [H^{(1)}, F^{(1)}]$, so that \eqref{eq:F2barsimplified} contains only resonant terms.   The following result gives the formal Fourier expansion of $\overline{F}^{(2)}$.

\begin{theorem} \label{second_order_nf_thm}
  Assume that 
  $$F^{(1)}(\phi)=\sum_{k\in\mathbb{Z}^n} A^{(1)}_{j,k}e^{i\langle k,\phi\rangle}\ \mbox{and} \ F^{(2)}(\phi)=\sum_{k\in\mathbb{Z}^n} A^{(2)}_{j,k}e^{i\langle k,\phi\rangle}\, ,$$ 
  and that $\overline{F}^{(1)}=0$, i.e., that $A_{j,k}^{(1)}=0$ for all $1\leq j \leq n$, and all $k\in \mathbb{Z}^n$ that satisfy $\langle\omega,k\rangle=0$. Then
    \begin{align} \label{eq:F2barexpression}
        \overline{F}_j^{(2)}(\phi) = \sum_{\substack{\innerproduct{k+l}{\omega} = 0 \\\innerproduct{k}{\omega} \neq 0}} \sum_{m=1}^n\frac{ A_{j, k}^{(1)} k_mA_{m, l}^{(1)}}{\innerproduct{l}{\omega}} e^{i\innerproduct{k+l}{\phi} }  + 
 \sum_{\innerproduct{k}{\omega}=0} A_{j, k}^{(2)} e^{i \langle k, \phi \rangle} \, .
    \end{align}
\end{theorem}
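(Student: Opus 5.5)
The plan is to start from the simplified formula \eqref{eq:F2barsimplified},
$$\overline{F}^{(2)} = F^{(2)} + \tfrac{1}{2}[H^{(1)},F^{(1)}] - {\rm ad}_{\omega}(H^{(2)})\, ,$$
and compute everything formally in Fourier space, as the theorem asks. The first step is to observe that ${\rm ad}_{\omega}(H^{(2)}) = -D_\omega H^{(2)}$ has $k$-th Fourier coefficient $-i\langle k,\omega\rangle H^{(2)}_{j,k}$. This coefficient vanishes for resonant $k$, and it can be chosen to cancel any prescribed coefficient for nonresonant $k$. Consequently $\overline{F}^{(2)}$ is exactly the resonant part of $F^{(2)} + \frac12[H^{(1)},F^{(1)}]$. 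The resonant part of $F^{(2)}$ immediately gives the second sum in \eqref{eq:F2barexpression}, so the remaining work is to extract the resonant Fourier terms of $\frac12 [H^{(1)},F^{(1)}]$.

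For this I use the bracket formula $[H,F] = DH\cdot F - DF\cdot H$, that is,
$$[H,F]_j = \sum_{m=1}^n \left( \partial_m H_j\, F_m - \partial_m F_j\, H_m\right).$$
By \eqref{eq:H1expansionexplicit} we have $H^{(1)}_{j,k} = iA^{(1)}_{j,k}/\langle k,\omega\rangle$ for nonresonant $k$ and $0$ otherwise. Differentiating term by term gives
$$\partial_m H^{(1)}_j = -\sum_k \frac{k_m A^{(1)}_{j,k}}{\langle k,\omega\rangle} e^{i\langle k,\phi\rangle}, \qquad \partial_m F^{(1)}_j = \sum_k i k_m A^{(1)}_{j,k} e^{i\langle k,\phi\rangle}.$$
Multiplying the double series then gives the two contributions:
\begin{align*}
\partial_m H^{(1)}_j F^{(1)}_m &= -\sum_{k,l} \frac{k_m A^{(1)}_{j,k}A^{(1)}_{m,l}}{\langle k,\omega\rangle} e^{i\langle k+l,\phi\rangle}, \\
-\partial_m F^{(1)}_j H^{(1)}_m &= \sum_{k,l} \frac{k_m A^{(1)}_{j,k}A^{(1)}_{m,l}}{\langle l,\omega\rangle} e^{i\langle k+l,\phi\rangle}.
\end{align*}
In both sums only nonresonant $k$ and nonresonant $l$ occur, either because $H^{(1)}$ is supported there or because the hypothesis $A^{(1)}_{j,k}=0$ for resonant $k$ kills those terms.

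Next I restrict to the resonant combinations $\langle k+l,\omega\rangle = 0$. There $\langle k,\omega\rangle = -\langle l,\omega\rangle$, so the first contribution equals the second. The bracket's resonant part is therefore $2\sum_m k_m A^{(1)}_{j,k}A^{(1)}_{m,l}/\langle l,\omega\rangle$, and the factor $\frac12$ yields exactly the first sum in \eqref{eq:F2barexpression}, with the constraint $\langle k,\omega\rangle\neq 0$ recording which terms survive.

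The main obstacle is the bookkeeping rather than any analysis, since convergence is explicitly ignored. Two things must be gotten right: the sign conventions of the Lie bracket and of ${\rm ad}_\omega$ (as fixed in Appendix \ref{app:Lie}), and the symmetry $\langle k,\omega\rangle = -\langle l,\omega\rangle$ that makes the two halves of the bracket coincide rather than cancel. I would sanity-check the signs against the example in Proposition \ref{prop:2ndordernfexample}.
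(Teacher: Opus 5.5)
Your proposal is correct and follows essentially the same route as the paper. Both arguments reduce $\overline{F}^{(2)}$ to the resonant part of $F^{(2)}+\frac12[H^{(1)},F^{(1)}]$, expand the bracket in Fourier series using $H^{(1)}_{j,k}=iA^{(1)}_{j,k}/\langle k,\omega\rangle$ (your signs reproduce the paper's half-bracket coefficient $\frac12 A^{(1)}_{j,k}k_mA^{(1)}_{m,l}\bigl(\frac{1}{\langle l,\omega\rangle}-\frac{1}{\langle k,\omega\rangle}\bigr)$), and then use $\langle k,\omega\rangle=-\langle l,\omega\rangle$ on resonant combinations to collapse this to $A^{(1)}_{j,k}k_mA^{(1)}_{m,l}/\langle l,\omega\rangle$.
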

\begin{proof}
The Lie bracket between $H^{(1)}$ and $F^{(1)}$ is given in components by 
$$[H^{(1)},F^{(1)}]_j = DH_j^{(1)}\cdot F^{(1)} - DF_j^{(1)}\cdot H^{(1)}\, . $$
We first compute the derivative of $H_j^{(1)}$ in the direction of $F^{(1)}$:
\begin{align*}
D H_j^{(1)}(\phi) \cdot F^{(1)}(\phi) & =  
 {\frac{d}{ds}}\Bigg |_{s=0} \sum_{k\in\mathbb{Z}^n}  H_{j,k}^{(1)} e^{i\innerproduct{k}{\phi+s F^{(1)}(\phi)}} \\
&= 
  \sum_{k\in\mathbb{Z}^n}  H_{j,k}^{(1)} \sum_{m=1}^n i k_m F_m^{(1)}(\phi) e^{i\innerproduct{k}{\phi}} \\
&=  \sum_{k\in\mathbb{Z}^n}  H_{j,k}^{(1)} \sum_{m=1}^n i k_m \sum_{l\in\mathbb{Z}^n} A_{m,l}^{(1)}e^{i\innerproduct{l}{\phi}}   e^{i\innerproduct{k}{\phi}} \\
&= 
  \sum_{k,l\in\mathbb{Z}^n}\sum_{m=1}^n i H_{j,k}^{(1)}   k_m     A_{m,l}^{(1)}e^{i\innerproduct{k+l}{\phi}} \, .
\end{align*}
An analogous computation gives that
\begin{align*}
D F_j^{(1)}(\phi) \cdot H^{(1)}(\phi) & = \sum_{k,l\in\mathbb{Z}^n}\sum_{m=1}^n i A_{j,k}^{(1)}   k_m     H_{m,l}^{(1)}e^{i\innerproduct{k+l}{\phi}}\, , 
\end{align*}
and combining this we find
\begin{equation}\label{eq:HFfirstexpression}
   \frac{1}{2} [H^{(1)},F^{(1)}]_j (\phi) =  \frac{1}{2} \sum_{k,l\in\mathbb{Z}^n}\sum_{m=1}^n i\left(  H_{j,k}^{(1)}   k_m     A_{m,l}^{(1)} - A_{j,k}^{(1)}   k_m     H_{m,l}^{(1)} \right) e^{i\innerproduct{k+l}{\phi}}  \, .
\end{equation}
Using that $H_{j,k}^{(1)} = -\frac{A_{j,k}^{(1)}}{i\langle k, \omega\rangle}$ and hence that $H^{(1)}_{m,l} = -\frac{A_{m,l}^{(1)}}{i\langle l, \omega\rangle}$  - see equation \eqref{eq:H1expansionexplicit} -   the right hand side of \eqref{eq:HFfirstexpression} is equal to
\begin{equation}\label{eq:HFsecondexpression}
   \frac{1}{2} \sum_{k,l\in\mathbb{Z}^n}\sum_{m=1}^n A_{j,k}^{(1)}k_m A_{m,l}^{(1)} \left(   
   \frac{1}{\langle l, \omega\rangle} 
   - \frac{1}{\langle k, \omega\rangle}   
    \right) e^{i\innerproduct{k+l}{\phi}}  \, .
\end{equation} 
Next, recall that the vector field $H^{(2)}$ is chosen so that it cancels all nonresonant terms in $F^{(2)} + \frac{1}{2}[H^{(1)}, F^{(1)}]$. This means that $\overline{F}^{(2)}$ is equal to the resonant part of $F^{(2)} + \frac{1}{2}[H^{(1)}, F^{(1)}]$. 
It is clear that a term in the sum \eqref{eq:HFsecondexpression} is resonant precisely when $\langle \omega, k+l\rangle = 0$, that is, when $\langle \omega, k\rangle = - \langle \omega, l\rangle$. For such $k$ and $l$ the coefficients in \eqref{eq:HFsecondexpression} simplify. Indeed, 
$$\frac{1}{2}A_{j,k}^{(1)}k_m A_{m,l}^{(1)} \left(   
   \frac{1}{\langle l, \omega\rangle} 
   - \frac{1}{\langle k, \omega\rangle}   
    \right) =  \frac{A_{j,k}^{(1)}k_m A_{m,l}^{(1)}}{\langle l, \omega\rangle}  \ \  \mbox{when}\ \langle k+l, \omega\rangle = 0\, .$$
We also remark that a term in $F^{(2)}_j(\phi) = \sum_k A^{(2)}_{j,k}e^{i\langle k, \phi\rangle}$ is resonant if and only if $\langle \omega, k\rangle = 0$.
This proves that $\overline{F}_j^{(2)}(\phi)$ is exactly given as in formula \eqref{eq:F2barexpression},  and completes the proof of the theorem. 
\end{proof}

\noindent We refrain from formulating conditions  that guarantee the smoothness of the Lie transformations $\Gamma_{H^{(1)}}^{\varepsilon}$ and $\Gamma_{H^{(2)}}^{\varepsilon^2}$ that bring $F$ into normal form to second order. We do not exploit such conditions in the main body of the paper.
\begin{remark}
    The second order part $\overline{F}_j^{(2)}$ of the $j$-th component of the normal form contains a nonzero resonant term 
    \begin{equation}\label{eq:resonantinteractionterm}
        \frac{ A_{j, k}^{(1)} k_mA_{m, l}^{(1)}}{\innerproduct{l}{\omega}} e^{i\innerproduct{k+l}{\phi} }
     \end{equation}
     precisely when 
     \begin{itemize}
         \item[{\it i)}] $F_j^{(1)}$ contains a nonzero nonresonant term $A_{j,k}^{(1)}e^{i\langle k, \phi\rangle}$;
         \item[{\it ii)}] It holds that $k_m\neq 0$, i.e., the term in {\it i)} depends nontrivially on $\phi_m$; 
         \item[{\it iii)}] 
       $F_m^{(1)}$ contains a nonzero nonresonant term $ A_{m, l}^{(1)} e^{i\innerproduct{l}{\phi}}$;
        \item[{\it iv)}]  The resonance condition $\langle \omega, k\rangle = - \langle \omega, l\rangle $ holds;
     \end{itemize} 
           In other words, \eqref{eq:resonantinteractionterm} represents an effective indirect resonant interaction between those nodes $i$ that satisfy $(k+l)_i\neq 0$, via node $m$, to node $j$.  
\end{remark}
\begin{remark}
Let us define the support of a Fourier label $k\in \mathbb{Z}^n$ by 
    $${\rm supp}(k) = \{ 1\leq i \leq n\, |\, k_i\neq 0\} \subset \{1, \ldots, n\} \, .$$
    It contains the nodes on which the Fourier mode $e^{i\langle k, \phi\rangle}$ depends nontrivially. Note that ${\rm supp}(k+l)\subset {\rm supp}(k) \cup {\rm supp}(l)$. 
    
    Let us now assume that $A_{j,k}^{(1)} k_m A_{m,l}^{(1)} \neq 0$ and that $k$ and $l$ both represent an ``edge'' or ``dyadic interaction''. This means that ${\rm supp}(k)=\{j, j_1\}$ for some node $j_1\neq j$, that ${\rm supp}(l)=\{m, j_2\}$ for some node $j_2\neq m$, and that $m\in \{j, j_1\}$. 
    For simplicity, let us furthermore assume that $\{m\} = \{j, j_1\} \cap \{m, j_2\}$ consists of exactly one element. Then the support of $k+l$ has three elements, and the Fourier mode $e^{i\langle k+l,\phi\rangle}$ signifies a ``triadic interaction'' between the nodes in the supports of $k$ and $l$. This can occur in two distinct ways:
\begin{enumerate}
\item \textbf{Feedforward motif}: If $j \neq m = j_1$, we have ${\rm supp}(k+l) = \{j, m, j_2\}$. This triadic interaction 
represents a feedforward  motif with 3 nodes, as illustrated in Figure \ref{fig:feedforward2}. It is formed by concatenating the  edge from  $j_2$ to $m$ and the edge from $m=j_1$ to $j$, as shown in Figure \ref{fig:feedforward1}.

\item \textbf{Wedge motif} : If $j = m \neq j_2$, then ${\rm supp}(k+l) = \{j, j_1, j_2\}$. This triadic interaction 
represents a wedge motif with 3 nodes, as shown in Figure \ref{fig:wedge2}. It is formed by  merging the edges from $j_1$ to $j$ and from $j_2$ to $j$, as shown in Figure \ref{fig:wedge1}.

\end{enumerate}
\end{remark}

\begin{figure}[h]
    \centering
    \begin{minipage}[b]{0.45\textwidth}
        \centering
        \resizebox{5cm}{!}{
\usetikzlibrary{decorations.pathmorphing, shadows}
\begin{tikzpicture}[->, >=stealth, thick, every node/.style={circle, draw=none, minimum size=1.5cm, font=\bfseries\Large},scale=1, transform shape]
  \definecolor{lightblue}{RGB}{203, 214, 230}
  \definecolor{turquoise}{RGB}{64, 224, 208}
  \definecolor{lightturquoise}{RGB}{194, 255, 255}  
  \definecolor{brightyellow}{RGB}{255, 255, 102}    
  \definecolor{lightyellow}{RGB}{255, 255, 204}     
  
  \node[fill=lightblue,drop shadow] (1) at (0, 0) {$j_2$};
  \node[fill=turquoise,drop shadow] (2) at (3.5, 0) {$m=j_1$};  
  \node[fill=brightyellow,drop shadow] (3) at (7, 0) {$j$};  

  \path[thick] (1) edge node[above, draw=none, font=\bfseries\large] {$A_{m, l}^{(1)}$} (2);
  \path[thick] (2) edge node[above, draw=none, font=\bfseries\large] {$A_{j, k}^{(1)}$} (3);
  
  \draw[thick, fill=lightturquoise, fill opacity=0.3, draw=lightturquoise, decorate, decoration={snake, amplitude=1mm, segment length=12mm}]
        (1.75, -0.3) ellipse (3cm and 1.5cm);
  
  \draw[thick, fill=lightyellow, fill opacity=0.3, draw=lightyellow, decorate, decoration={snake, amplitude=1mm, segment length=12mm}]
        (5.25, -0.3) ellipse (3cm and 1.5cm);
  
\end{tikzpicture}}
        \caption{Feedforward motif in the original first order equation}
        \label{fig:feedforward1}
    \end{minipage}
    \hfill
    \begin{minipage}[b]{0.45\textwidth}
        \centering
        \resizebox{5cm}{!}{
\usetikzlibrary{decorations.pathmorphing, shadows}
\begin{tikzpicture}[->, >=stealth, thick, every node/.style={circle, draw=none, minimum size=1.5cm, font=\bfseries\Large},scale=1, transform shape]
  \definecolor{lightblue}{RGB}{203, 214, 230}
  \definecolor{turquoise}{RGB}{64, 224, 208}
  \definecolor{brightyellow}{RGB}{255, 255, 102}    
  \definecolor{limegreen}{RGB}{204, 255, 153}           
  
  \node[fill=lightblue,drop shadow] (1) at (0, 0) {$j_2$};
  \node[fill=turquoise,drop shadow] (2) at (3.5, 0) {$m=j_1$};  
  \node[fill=brightyellow,drop shadow] (3) at (7, 0) {$j$};  

  \path[thick] (1) edge (2);
  \path[thick] (2) edge (3);
  
  \draw[thick, fill=limegreen, fill opacity=0.2, draw=limegreen, decorate, decoration={snake, amplitude=1mm, segment length=12mm}]
        (3.5, -0.3) ellipse (5cm and 2.0cm);

  \node[font=\bfseries\large] at (7.8, -2.0) {$\frac{ A_{j, k}^{(1)}k_m A_{m, l}^{(1)}}{\langle l, \omega \rangle}$};
  
\end{tikzpicture}}
        \caption{Resulting hyperedge in the second order normal form}
        \label{fig:feedforward2}
    \end{minipage}
    
    \vspace{1cm}

    \begin{minipage}[b]{0.45\textwidth}
        \centering
        \resizebox{5cm}{!}{
\usetikzlibrary{decorations.pathmorphing, shadows,decorations.markings}

\begin{tikzpicture}[->, >=stealth, thick, every node/.style={circle, draw=none, minimum size=1.5cm, font=\bfseries\Large, text=black}]

  \definecolor{lightblue}{RGB}{173, 216, 230}   
  \definecolor{brightyellow}{RGB}{255, 255, 153} 
  \definecolor{turquoise}{RGB}{72, 209, 204}    
  \definecolor{yellowellipse}{RGB}{255, 255, 102}  

  \node[fill=lightblue, text=black,drop shadow] (1) at (-2, 1) {$j_1$};             
  \node[fill=brightyellow, text=black,drop shadow] (2) at (0, -0.7) {$j=m$};       
  \node[fill=turquoise, text=black,drop shadow] (3) at (2, 1) {$j_2$};              

  \path[thick, draw=black] 
    (1) edge[bend right=20] node[above, midway, sloped, font=\bfseries\Large] {$A_{j, k}^{(1)}$} (2);  
  
  \path[thick, draw=black] 
    (3) edge[bend left=20] node[above, midway, sloped, font=\bfseries\Large] {$A_{j, l}^{(1)}$} (2);  

  \draw[ultra thick, fill=yellowellipse, fill opacity=0.2, draw=yellowellipse, rotate around={-30:(-1,0.3)}, decorate, decoration={coil, amplitude=1mm, segment length=12mm}]
        (-1, 0.3) ellipse (3cm and 1.5cm);
  
  \draw[ultra thick, fill=yellowellipse, fill opacity=0.1, draw=yellowellipse, rotate around={30:(1,0.3)}, decorate, decoration={coil, amplitude=1mm, segment length=12mm}]
        (1, 0.3) ellipse (3cm and 1.5cm);

\end{tikzpicture}}
        \caption{Wedge motif in the original first order equation}
        \label{fig:wedge1}
    \end{minipage}
    \hfill
    \begin{minipage}[b]{0.45\textwidth}
        \centering
        \resizebox{5cm}{!}{
\usetikzlibrary{decorations.pathmorphing, shadows, decorations.markings,}

\begin{tikzpicture}[->, >=stealth, thick, every node/.style={circle, draw=none, minimum size=1.5cm, font=\bfseries\Large, text=black}]

  \definecolor{lightblue}{RGB}{173, 216, 230}   
  \definecolor{brightyellow}{RGB}{255, 255, 153} 
  \definecolor{turquoise}{RGB}{72, 209, 204}    
  \definecolor{limegreen}{RGB}{204, 255, 153}   

  \node[fill=lightblue, color=lightblue, text=black, drop shadow] (1) at (-2, 1) {$j_1$};             
  \node[fill=brightyellow,color=yellow, text=black,drop shadow, font=\bfseries\Large] (2) at (0, -0.7) {$j=m$};  
  \node[fill=turquoise,color=turquoise, text=black, drop shadow] (3) at (2, 1) {$j_2$};              

  \path[thick, draw=black,, decoration={snake, amplitude=0.5mm}] (1) edge[bend right=20] (2);  
  \path[thick, draw=black,, decoration={snake, amplitude=0.5mm}] (3) edge[bend left=20] (2);  

 \draw[ultra thick, fill=limegreen, fill opacity=0.3, draw=limegreen,  decorate, decoration={coil, amplitude=1mm, segment length=12mm}]
       (0, 0.3) ellipse (3.2cm and 2.5cm);

  \node[font=\bfseries\Large, color=black] at (4.2, -1.8) {$\frac{ A_{j, k}^{(1)}k_j A_{j, l}^{(1)}}{\langle l, \omega \rangle}$};
  
\end{tikzpicture}} 
        \caption{Resulting hyperedge in the second order normal form}
        \label{fig:wedge2}
    \end{minipage}    
    \label{fig:networks}
\end{figure}
\bibliographystyle{plain}
\bibliography{references}
\end{document}